\newtheorem{thm}{Theorem}
\newtheorem{corollary}{Corollary}
\newtheorem{proposition}{Proposition}
\newtheorem{definition}{Definition}
\newtheorem{lemma}{Lemma}
\newtheorem{lemma*}{Lemma}
\newtheorem{remark}{Remark}
\newcommand{\openone}{\leavevmode\hbox{\small1\normalsize\kern-.33em1}}
\newcommand{\Prob}{\mathbb{P}}
\newcommand {\E} {\mathbb{E}}
\newcommand{\lp}{\left(}
\newcommand{\rp}{\right)}
\newcommand{\nn}{\nonumber\\}
\begin{document}
\title{Optimal Age over Erasure Channels} 

% %%% Single author, or several authors with same affiliation:
% \author{%
%   \IEEEauthorblockN{Stefan M.~Moser}
%   \IEEEauthorblockA{ETH Zürich\\
%                     ISI (D-ITET)\\
%                     CH-8092 Zürich, Switzerland\\
%                     Email: moser@isi.ee.ethz.ch}
% }

%%% Several authors with up to three affiliations:
\author{%
%  \IEEEauthorblockN{Elie Najm, Emre Telatar, and Rajai Nasser}
Elie Najm, Emre Telatar, and Rajai Nasser
\thanks{This paper was presented in part at the IEEE International Symposium on Information Theory, Paris, July 2019.}
}

\maketitle

%%%%%%
%% Abstract: 
%% If your paper is eligible for the student paper award, please add
%% the comment "THIS PAPER IS ELIGIBLE FOR THE STUDENT PAPER
%% AWARD." as a first line in the abstract. 
%% For the final version of the accepted paper, please do not forget
%% to remove this comment!
%%
\normalem
\begin{abstract}
Previous works on age of information and erasure channels have dealt with specific models and computed the average age or
average peak age for certain settings. In this paper, %we ask a more fundamental question: 
given a source that produces a letter
every $T_s$ seconds and an erasure channel that can be used every $T_c$ seconds, we ask what is the coding
strategy that minimizes the time-average \enquote{age of information} that an observer of the channel output incurs. We first analyze the case where the source alphabet and the channel-input alphabet have the same size. We show that a trivial coding
strategy is optimal and a closed form expression for the age can be derived. We then analyze the case where the alphabets have different sizes. We use a random coding argument to
bound the average age and show that the average age achieved using random codes converges to the optimal average age of linear block codes as the 
source alphabet becomes large.

%Previous works on age of information and erasure channels have dealt with specific models and computed the average age or
%average peak age for certain settings. In this paper, we ask a more fundamental question: Given a single source and an erasure
%channel, what is the optimal coding scheme from an age point of view and what is the optimal achievable age? We answer these
%questions in the following two scenarios: $(i)$ the source alphabet and the erasure-channel input-alphabet are  the same, and
%$(ii)$ the source alphabet and the channel input-alphabet are different. We show that, in the first case, no coding is
%required and we derive a closed form formula for the average age. Whereas, in the second case, we use a random coding argument
%to bound the average age and show that the average age achieved using random codes converges to the optimal average age as the
%source alphabet becomes large.
\end{abstract}

%% The paper must be self-contained. However, if you are referring to
%% a full version for checking certain proofs, please provide the
%% publically accessible location below.  If the paper is completely
%% self-contained, you can remove the following line from your
%% submission.
%\textit{A full version of this paper is accessible at:}
%\url{http://isit2019.fr/} 

\section{Introduction}
\label{sec:sec_ch8_intro}
The concept of \emph{age} as a performance metric in communication systems was first used in 2011 by Kaul et. al in
\cite{KaulYatesGruteser-Globcom2011,kaul_minimizing_2011}, in order to assess
the performance of a given vehicular network. Vehicular networks are part of the growing group of real-time status-monitoring systems
that are used also in healthcare, finance, transportation, smart homes, warehouse and natural environment surveillance, to name but
a few. In such systems, a remote monitor is interested in the status of one or multiple processes. A sender takes samples of
the observed processes and sends them to the monitor. However, the aim of the communication system in this case is not to
transmit as fast as possible but to keep the information that the destination has about the observed processes as \emph{fresh} as possible.
%For instance, take the case of a monitor interested in the status of
%a distant process. A sender observes this process and, to keep the monitor up-to-date, sends updates to it. However, if, 
Indeed, if, at any
time $t$, the last received update at the monitor was generated at time $u(t)$, then the information at the receiver reflects the
status of the observed process at time $u(t)$, not at time $t$. Hence, the monitor has a distorted version of reality. In fact, it has an
obsolete version with an age of $\Delta(t)=t-u(t)$.

Kaul et al. in \cite{KaulYatesGruteser-2012Infocom} use a graphical method to compute and minimize an age-related metric: the
\emph{average age}. This metric is defined as
\begin{equation}
	\label{eq:eq_avg_age_def}
	\Delta = \lim_{\tau\to\infty}\frac{1}{\tau}\int_0^\tau\Delta(t)\mathrm{d}t.
\end{equation}
A growing body of works has used this metric to evaluate the
performance of multiple communication systems represented using queuing models
\cite{YatesKaul-2012ISIT,KamKompellaEphremides2013ISIT,KamKompellaEphremides2014ISIT,CostaCodreanuEphremides2014ISIT,UpdateorWait-2016Infocom,
NajmNasser-ISIT2016,BedewySunShroff17,NajmTelatar2018,NajmNasserTelatar-ISIT2018,InoueEtAl}; some of them
being subject to resource allocation constraints (such as energy
\cite{Elif-2015ITA,Yates-2015ISIT,ArafaUlukus17,BacinogluUysal-ISIT17,BacinogluSunUysalMutlu-ISIT18}). For an excellent recent survey about age of information, see \cite{YatesEtAlJSACSurvey}.
The works previously cited have mostly focused on computing the average age (AoI) given a
certain status updating policy while assuming no errors. At a more physical level, the effect of noise and channel coding on the average age was also
investigated, especially when the erasure channel is used \cite{RefR1Reviewer1,RefR2Reviewer1,RefR3Reviewer1,RefR4Reviewer1,RefR5Reviewer1,RefR6Reviewer1,RefR7Reviewer1,
RefR8Reviewer1,ChenHuang-ISIT2016,ParagTaghaviChamberland17,NajmYatesSoljanin-ISIT2017,RoyNajmSoljaninZhong-ISIT2017}. Chen et al. in \cite{ChenHuang-ISIT2016} assume a
random service time but the transmitted packet has a certain probability of being lost at the end of the transmission. Parag et
al. in \cite{ParagTaghaviChamberland17} consider the binary erasure channel (BEC) and compute the average age for two
transmission schemes: single transmission and hybrid automatic repeat request (HARQ). While the authors in
\cite{ParagTaghaviChamberland17} assume a just-in-time generation process, Najm et al. in \cite{NajmYatesSoljanin-ISIT2017}
consider a Poisson process generation and two HARQ protocols to combat erasures: infinite incremental redundancy (IIR) and fixed
redundancy (FR). Yates et al. in \cite{RoyNajmSoljaninZhong-ISIT2017}  consider the previous two schemes, IIR and FR, but
assume a just-in-time generation policy. While both papers, \cite{NajmYatesSoljanin-ISIT2017} and
\cite{RoyNajmSoljaninZhong-ISIT2017}, agree on the definition of IIR, they use the term FR to describe two different schemes.
In  \cite{NajmYatesSoljanin-ISIT2017}, the update is divided into $k_p$ packets encoded ratelessly and each packet is encoded
using  an $(n_s,k_s)$-\emph{maximum distance separable} (MDS) code. In \cite{RoyNajmSoljaninZhong-ISIT2017}, FR means that each
$k$-bit update is encoded into an $n$-bit codeword, and the update is successfully received if and only if at least $k$ bits are not
erased. We are interested here in transmission schemes similar to the FR that is considered in \cite{RoyNajmSoljaninZhong-ISIT2017}.

Most of the works that addressed the presence of noise and erasure have assumed some form of feedback from the receiver to the transmitter. In this paper, we take an information-theoretic approach to the age problem and provide a
characterization of the optimal achievable age when the channel used is the erasure channel and no feedback is assumed. This means that we consider the
following question: Given a $q$-ary erasure channel without feedback and with input alphabet $\mathcal{V}$ and a source with alphabet $\mathcal{U}$, what
is the lowest average age that can be achieved in this system?

Since the channel can introduce errors and since no feedback is available, we will be forced (at least in some cases) to use some form of coding. However, unlike classical communication systems where the primary role of coding is to guarantee reliable communication of \emph{all} packets, in our system, we do not care too much if some packets are not delivered. The primary goal of our coding is that the monitor reliably receives enough \emph{timely} packets so that it remains up-to-date as much as possible. 

In order to study the age of information over erasure channels, we distinguish two cases: 
\begin{itemize}
	\item \textbf{Case 1}: The source alphabet and the channel-input alphabet are of the same size.
	\item \textbf{Case 2}: The source alphabet and the channel-input alphabet have different sizes.
\end{itemize}
For the first case we derive an exact closed-form expression for the average age and show that the optimal average age is
achieved without any encoding done on the source symbols. Whereas for the second case, encoding is mandatory and we use random
coding to give an upper and lower bounds on the achievable average age of the system, as well as an approximation of the lower
bound inspired by \cite{RoyNajmSoljaninZhong-ISIT2017,YatesNSZ17}.

The rest of this paper is organized as follows: In Section~\ref{sec:sec_ch8_preliminaries}, we present the system model and
some definitions which are common to all later sections. In Section~\ref{sec:sec_ch8_same_alphabet}, we derive the optimal average
age for Case 1 and in Section~\ref{sec:sec_ch8_diff_alphabets} we study the optimal average age
for Case 2.

\section{Preliminaries}
\label{sec:sec_ch8_preliminaries}
\begin{figure*}[t]
\centering
\begin{tikzpicture}[scale=0.4,font=\scriptsize]
\node[circle,draw,align=center] (S) at (0,0) {Information\\source};
\node[rectangle,draw] (T) [right of=S, node distance=3.2cm] {Encoder};
\node[rectangle,draw,node distance=2.9cm,align=center] (C) [right of=T] {Erasure\\Channel};
\node[rectangle,draw,node distance=2.9cm] (R) [right of=C] {Decoder};
\node[circle,draw,node distance=4.2cm] (D) [right of=R] {Destination};
\draw[->,thick] (S) -- node [anchor=north] {$U_1U_2\cdots U_{m}$} (T);
\draw[->,thick] (T) -- node [anchor=north,align=center] {$V_1V_2\cdots V_n$} (C);
\draw[->,thick] (C) -- node [anchor=north,align=center] {$Z_1Z_2\cdots Z_n$} (R);
\draw[->,thick] (R) -- node [anchor=north,align=center] {${(\hat{U}_{i_1},i_1),\ldots,(\hat{U}_{i_n},i_n)}$} (D);
\draw[dotted,thick] (-2.1,-2.1) rectangle (9.5,2.1);
\node at (3.7,2.4) {Sender};
\draw[dotted,thick] (21,-2.1) rectangle (35,2.1);
\node at (26.5,2.4) {Receiver};
\end{tikzpicture}
\caption{The communication system.}
\label{fig:fig_ch8_communication_system}
\end{figure*}
We start by defining the communication system that we study.  Fig.~\ref{fig:fig_ch8_communication_system} illustrates such a
system. %The following discussion is based on the definitions presented in Section~\ref{sec:sec_ch1_classic_problem}:
\begin{itemize}
	\item \emph{The channel}: We {consider} a discrete memoryless {$q$-ary} erasure channel with erasure probability $\epsilon$. We refer to such channel by {$q$}EC($\epsilon$). The
		channel-input alphabet is given by $\mathcal{V}=\{0,1,\ldots,q-1\}$, and the channel-output alphabet by
		$\mathcal{V}\cup\{?\} = \{0,1,\ldots,q-1,?\}$. We also assume {that} there is no feedback from the
		receiver. This means that the output of the encoder depends only on the source symbols
		and the sender does not know whether a sent symbol was successfully received or not. In addition to that, we
		assume that transmitted channel-symbols are received instantaneously\footnote{If the transmitted channel-symbols are not received instantaneously but are received after a delay that is constant, then this constant delay can be added to all the age expressions that are derived in this paper.}. Furthermore, there exists a period $T_c$ between two consecutive
		channel uses. {More precisely, the $i^{th}$ channel-use takes place at time $t_i^c=iT_c$. Note that we assumed without loss of generality that $t_0^c=0$. We} define the \emph{channel-use rate} $\mu= \frac{1}{T_c}$ to be the allowed number of channel uses
		per {second}. %Notice that in this case $\mu$ is a fixed number.
	\item \emph{The source}: We assume a single discrete memoryless source generating messages that belong to the set $\mathcal{U}
		= \{1,2,\ldots,L\}$. So each symbol in this set is a message and we will use interchangeably the terms
		\emph{source symbol} and \emph{message} in this paper. We define $k=\left\lceil \log_q(L) \right\rceil = \left\lceil
		\frac{\ln(L)}{\ln(q)}\right\rceil$ where $\log_q$ is the base-$q$ logarithm. Hence, in order
		to represent one source symbol we need $k$ channel-input symbols. This means that there exists an injective
		function $h(.)$ that maps every message $m\in\mathcal{U}$ to a length-$k$ sequence $u^k=(u_1,\ldots,u_k)\in\mathcal{V}^k$, with
		$u_j\in\mathcal{V}$ for $1\leq j\leq k$. Thus, $h(\mathcal{U})\subseteq\mathcal{V}^k$. Similar to the channel-use case, the
		source symbol generation is assumed to be periodic with period $T_s$. More precisely, the $m^{th}$ source-symbol is generated at time $t_m^s=mT_s+t_0^s$. Note that since the source and channel clocks might not be synchronized, we need to take into account the possibility that \emph{the starting time of the source} $t_0^s$ is nonzero. We define the \emph{message generation
		rate} $\lambda=\frac{1}{T_s}$ as the fixed number of source symbols generated per second. 
		
		Notice that if the source alphabet and the channel-input alphabet have the same size, then
		$h(\mathcal{U})=\mathcal{V}$ and $k=1$. In this case, we can assume without loss of generality that $\mathcal{U}=\mathcal{V}$. This is the system that we study in Section \ref{sec:sec_ch8_same_alphabet}.
		
		In the case where the
		source alphabet and channel-input alphabet have different sizes, we focus on strategies
induced by linear codes, {so we will assume that $q$ is a power of a prime number, $\mathcal{V}=\mathbb{F}_q$, and $\mathcal{U}=\mathcal{V}^k=\mathbb{F}_q^k$. This is the system that we study in Section \ref{sec:sec_ch8_diff_alphabets}.}
	\item \emph{The encoder and decoder}: At the $i^{th}$ channel use, the encoder uses all the generated source symbols
		and encodes them into a single channel-input letter, {i.e., the encoder is a function}
\begin{equation}
{f_i:\mathcal{U}^{\big\lfloor\frac{iT_c - t_0^s}{T_s}\big\rfloor}\to\mathcal{V}}.
\end{equation}		
		
The decoder, at the ${i^{th}}$ channel use,
		uses all {the} received channel-output symbols to compute an estimate of {a transmitted message}, along with its index. Thus, {the decoder is a function} 
\begin{equation}
{g_i:(\mathcal{V}\cup\{?\})^i\to
		\Big(\mathcal{U}\times\Big\{1,2,\ldots,\Big\lfloor\frac{iT_c - t_0^s}{T_s}\Big\rfloor\Big\}\Big)\cup\{\textsc{erasure}\}.}
\end{equation}		

		We assume that the decoder never makes mistakes. In other words, if the generated source symbols are $U_1,\ldots,U_{\big\lfloor\frac{iT_c - t_0^s}{T_s}\bigcap\rfloor}$, the channel-output symbols are $Z_1,\ldots,Z_i$, and $g_i(Z_1,\ldots,Z_i)=(\hat{U}_m,m)$, then we have $\hat{U}_m=U_m$ with probability 1. In this case, the age of information at time $t\in\big[iT_c,(i+1)T_c\big)$ is equal to
		\begin{equation}
		\label{eq:eq_Instantaneous_age_oi_def}
\Delta_{\mathcal{C}}(t)=t-mT_s-t_0^s.
		\end{equation}
		
		It is easy to see that for a given sequence of encoders $(f_i)_{i\geq 1}$, the optimal decoder (from age perspective) is the one defined as $g_i(Z_1,\ldots,Z_i)=(\hat{U}_{m_i},m_i)$, where $m_i$ is the maximum index in $\left\{1,\ldots,\left\lfloor\frac{iT_c - t_0^s}{T_s}\right\rfloor\right\}$ such that $U_{m_i}$ can be deterministically decoded from $(Z_1,\ldots,Z_i)$, and $\hat{U}_{m_i}=U_{m_i}$. If no such $m_i\in \left\{1,\ldots,\left\lfloor\frac{iT_c - t_0^s}{T_s}\right\rfloor\right\}$ exists, we have $g_i(Z_1,\ldots,Z_i)=\textsc{erasure}$ and we adopt the convention that $m_i=0$ for such cases. By noticing that for every $t\geq 0$ we have $t\in\big[iT_c,(i+1)T_c\big)$ where $i=\lfloor\frac{t}{T_c}\rfloor$, we can see from \eqref{eq:eq_Instantaneous_age_oi_def} that the instantaneous age of information of the coding scheme $\mathcal{C}=(f_i,g_i)_{i\geq1}$ is given by
		\begin{equation}
		\label{eq:eq_Instantaneous_age_oi}
\Delta_{\mathcal{C}}(t)=t-m_{\lfloor\frac{t}{T_c}\rfloor}T_s-t_0^s.
		\end{equation}
		It is worth noting that the function $t\mapsto m_{\lfloor\frac{t}{T_c}\rfloor}$ is nondecreasing and piecewise constant. Furthermore, the discontinuities in the function $t\mapsto m_{\lfloor\frac{t}{T_c}\rfloor}$ correspond to instants at which the receiver successfully decodes new packets. From this and from \eqref{eq:eq_Instantaneous_age_oi}, we can see that the instantaneous age $\Delta_{\mathcal{C}}(t)$ is a piecewise linearly-increasing function that has a sawtooth shape.
		
		In Fig.~\ref{fig:fig_instantaneous_age_general}, we show an example illustrating how the the instantaneous age varies with time. In this figure, $\tilde{t}_{\ell}$ represents the instant at which the $\ell^{th}$ successfully received message was decoded at the receiver, and $t_\ell$ represents the generation time of this message at the source. More precisely, $\tilde{t}_{\ell}=i_{\ell}T_c$ where $i_{\ell}$ is the index at which $(m_i)_{i\geq 1}$ changes its value for the $\ell^{th}$ time, and $t_{\ell}=m_{i_\ell}T_s+t_0^s$.
		
			%\item \emph{The encoder and decoder}: The encoder and decoder are two functions, $f:\mathcal{V}^k\to\mathcal{V}^n$ and
	%	$g:(\mathcal{V}\cup\{\text{?}\})^n\to\mathcal{V}^k\cup\{\text{erasure}\}$, respectively.
\end{itemize}

\begin{figure*}[t]
	\centering
	\begin{tikzpicture}[scale=0.7,font=\scriptsize]
		\draw[thick,->] (-4,0) -- (13.5,0) node[anchor=north] {$t$};

		\draw[color=blue] 	(1,0) node [anchor=north] {$t_1$};
		\draw[color=red] 	(2.6,0) node [anchor=north] {$\tilde{t}_1$};
		\draw[color=blue] 	(5,0) node [anchor=north] {$t_2$};
		\draw[color=red] 	(6.2,0) node [anchor=north] {$\tilde{t}_2$};
		\draw[color=blue] 	(7,0) node [anchor=north] {$t_3$};
		\draw[color=red] 	(7.8,0) node [anchor=north] {$\tilde{t}_3$};
		\draw[color=blue] 	(11,0) node [anchor=north] {$t_4$};
		\draw[color=red] 	(12.6,0) node [anchor=north] {$\tilde{t}_4$};
		\draw (-3,0) node [anchor=north] {$t_0^s$};
		
		% vertical axis
		\draw[thick,->] (-1.4,0) node[anchor=north] {$0$}-- (-1.4,6) node[anchor=south] {$\Delta_{\mathcal{C}}(t)$};
		\draw[dashed] (-3,0) -- (-1.4,1.6);
		
		\draw[thick] (-1.4,1.6) -- (2.6,5.6) -- (2.6,1.6) -- (6.2,5.2) -- (6.2,1.2) -- (7.8,2.8) -- (7.8,0.8) --
		(12.6,5.6) -- (12.6,1.6) -- (13,2);
		
		\draw[dashed,color=blue] (1,0) -- (2.6,1.6);
		\draw[dashed,color=blue] (5,0) -- (6.2,1.2);
		\draw[dashed,color=blue] (7,0) -- (7.8,0.8);
		\draw[dashed,color=blue] (11,0) -- (12.6,1.6);

		\draw[dotted,color=red] (2.6,1.6) -- (2.6,0);
		\draw[dotted,color=red] (6.2,1.2) -- (6.2,0);
		\draw[dotted,color=red] (7.8,0.8) -- (7.8,0);
		\draw[dotted,color=red] (12.6,1.6) --(12.6,0);
	\end{tikzpicture}
	\caption{The instantaneous age $\Delta_{\mathcal{C}}(t)$.}
	\label{fig:fig_instantaneous_age_general}
\end{figure*}

In the previous section, we indicated that we are interested in bounding the optimal achievable average age. Here, we formally define
the concepts of achievable age and optimal achievable age.
\begin{definition}
	\label{def:def_ch8_coding_scheme}
	We call $\mathcal{C}=(f_i,g_i)_{i\geq 1}$ to be a coding scheme where $(f_i)_{i\geq 1}$ is the sequence of encoders and $(g_i)_{i\geq 1}$ is the sequence of decoders. The average age corresponding to such scheme is
	denoted by 
	\begin{equation}
		%\label{eq:eq_ch8_avg_age_coding_scheme}
		\Delta_{\mathcal{C}} = \lim_{\tau\to\infty}\frac{1}{\tau}\int_0^\tau \Delta_{\mathcal{C}}(t) \mathrm{d}t,
	\end{equation}
	where $\Delta_{\mathcal{C}}(t)$ is the instantaneous age that is obtained by using the coding scheme $\mathcal{C}$, and which is given by \eqref{eq:eq_Instantaneous_age_oi_def}. If the decoders $(g_i)_{i\geq 1}$ are optimal for the encoders $(f_i)_{i\geq 1}$ then $\Delta_{\mathcal{C}}(t)$ is given by \eqref{eq:eq_Instantaneous_age_oi}.\\
	Such a definition {can be generalized to channels other than erasure channels}. However, for the special case of the erasure channel
	with erasure probability $\epsilon$, the average age relative to the coding scheme $\mathcal{C}$ will be denoted by 
	\begin{equation}
		\label{eq:eq_ch8_avg_age_coding_scheme}
		\Delta_{\epsilon,\mathcal{C}} = \lim_{\tau\to\infty}\frac{1}{\tau}\int_0^\tau \Delta_{\epsilon,\mathcal{C}}(t) \mathrm{d}t,
	\end{equation}
	where $\Delta_{\epsilon,\mathcal{C}}(t)$ is the instantaneous age {that is obtained by using the coding scheme $\mathcal{C}$ when the channel is $q$EC($\epsilon$)}.
\end{definition}
\begin{definition}
	\label{def:def_ch8_achievable_AoI}
	We say that an age $D$ is achievable for {$q$EC($\epsilon$)}, if for every $\delta>0$ there exists a coding scheme $\mathcal{C}=(f_i,g_i)_{i\geq 1}$ %\footnote{Given that the source alphabet is
		%usually fixed, the only variables left to tune in any coding scheme are the blocklength, the encoder and the
	%decoder.} 
		such that 
	\begin{equation}
		\label{eq:eq_ch8_achievable_avg_age}
		\Delta_{\epsilon,\mathcal{C}} \leq D+\delta,
	\end{equation}
	and the probability of error on the decoded messages is zero.
\end{definition}
\begin{definition}
	\label{def:def_ch8_optimal_achievable_AoI}
	Given a channel {$q$EC($\epsilon$)}, we define the optimal average age $\Delta_\epsilon$ to be the minimum achievable
	average age. Formally,
	\begin{equation}
		\label{eq:eq_ch8_optimal_avg_age}
		\Delta_\epsilon = \inf_{\mathcal{C}\in\Gamma} \Delta_{\epsilon,\mathcal{C}},
	\end{equation}
	where $\Gamma$ is the set of all possible coding schemes. \\
	The set $\mathcal{R}=\left\{(\epsilon,D); D\geq\Delta_\epsilon\text{ and }\epsilon\in[0,1]\right\}$ forms the set of achievable
	average ages over all erasure channels.
\end{definition}

%%%%%%%%%%%%%%%%%%%%%%%%%%%%%%%%%%%%%%%%%%%%%%%%%%%%%%%%%%%%%%%%%%%%%%%%%%%%%%%%%%%
\section{Optimal Age with the Same Source \& Channel Alphabets}
\label{sec:sec_ch8_same_alphabet}
In this first case, we take $k=1$ which means that the source and channel-input alphabets are the same. We first show that to
achieve the optimal age, no encoding is required and we provide the optimal transmission policy. We then compute the optimal
average age.

\subsection{The Optimal Transmission Policy}
\label{subsec:subsec_ch8_opt_transmission_policy}

%\begin{thm}
%	\label{thm:thm_ch8_enc_not_optimal}
%{\color{red}\sout{	For a channel EC($\epsilon$), if the source alphabet and the channel-input alphabet are the same, then in order to minimize the average age no
%	encoding is required.}}
%\end{thm}
%\begin{proof}
%{\color{red}\sout{	If an oracle were to give the erasure pattern to the transmitter then the optimal thing to do from an age perspective
%	is to send the newest source symbol at the non-erased channel-uses since each message needs only one channel use to be
%	transmitted. This means that at every channel use the sender is sending the freshest information. If instead we encode
%	the messages using a coding scheme $\mathcal{C}$ into codewords of length $n>1$ then each message will need strictly more
%	than one channel use to be transmitted, hence, at any instant $t$ and for an arbitrary erasure pattern, the instantaneous age would be larger than the one that corresponds to
%	the uncoded messages, i.e., $\Delta_{\epsilon,\mathcal{C}}(t)\geq\Delta_{\epsilon,uncoded}(t)$. Therefore,
%	$\Delta_{\epsilon,\mathcal{C}}\geq\Delta_{\epsilon,uncoded}$. This is so since with encoded messages, the transmitter
%	is not necessarily sending the freshest information at every channel use.}}
%\end{proof}

\begin{thm}
	\label{thm:thm_ch8_optimal_policy}
	For a channel {$q$EC($\epsilon$)}, if the source alphabet and the channel-input alphabet are the same, then the optimal transmission policy from an age perspective is to keep transmitting the last-generated source-symbol until a new
	one is generated{, at which point we start transmitting the newly generated source-symbol} and discard all previous messages. This is {an} LCFS with no buffer policy.
\end{thm}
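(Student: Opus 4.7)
The plan is to build on \cref{thm:thm_ch8_enc_not_optimal}, which already reduces the problem to uncoded schemes in which every channel use conveys a single raw source symbol that the receiver either decodes or marks as an erasure. What remains is to identify, among all such schemes, one that minimizes the average age. I will argue that the LCFS-no-buffer policy $\pi^\star$ does so by a \emph{pathwise} comparison: for every realization of the erasure process and every time $t$, $\Delta_{\epsilon,\pi^\star}(t)\le\Delta_{\epsilon,\pi}(t)$ for any competing policy $\pi$. Since the inequality holds sample-pathwise, it will pass through the time-average and expectation in \eqref{eq:eq_ch8_avg_age_coding_scheme} without extra work.

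To execute the comparison, I would fix an arbitrary erasure pattern and enumerate the successful channel uses as $i_1<i_2<\cdots$ (at times $i_kT_c$). At channel use $i$ the symbols available to the transmitter are $U_1,\dots,U_{N(i)}$ with $N(i)=\lfloor iT_c/T_s\rfloor$. A generic policy $\pi$ transmits some $U_{j_k}$ at $i_k$ with $j_k\le N(i_k)$, whereas $\pi^\star$ transmits $U_{N(i_k)}$. Setting $k^\star(t)=\max\{k:i_kT_c\le t\}$, the freshest source-symbol index known at the receiver by time $t$ is $\max_{k\le k^\star(t)}j_k$ under $\pi$ and $N(i_{k^\star(t)})$ under $\pi^\star$. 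Because $N(\cdot)$ is non-decreasing, each $j_k\le N(i_k)\le N(i_{k^\star(t)})$, so the $\pi^\star$-index dominates pointwise. Translating indices back into generation times gives $\Delta_{\epsilon,\pi^\star}(t)\le\Delta_{\epsilon,\pi}(t)$ for all $t$, and integrating over $[0,\tau]$, dividing by $\tau$, letting $\tau\to\infty$, and averaging over the erasure process yields $\Delta_{\epsilon,\pi^\star}\le\Delta_{\epsilon,\pi}$. Stability of $\pi^\star$ is immediate because its buffer holds at most one symbol at any time.

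The main obstacle I anticipate is definitional rather than technical. Without feedback the decoder must be able to tag each successfully received channel symbol with its generation time; since the encoder's choice at channel use $i$ must be a deterministic function of the known generation schedule (not of the unknown erasures), this tagging is unambiguous for $\pi^\star$, which always sends $U_{N(i)}$ at channel use $i$. A secondary care-point is that the freshest-index argument implicitly uses that retransmitting an already-delivered symbol (as $\pi^\star$ does whenever no new source arrival has occurred since the previous channel use) is not a loss from the age standpoint; this is clear since it does not decrease the maximum delivered index, but it is worth noting once before relying on the pointwise inequality above.
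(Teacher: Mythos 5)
Your proposal is correct and follows essentially the same route as the paper's proof: a pathwise (sample-path) dominance argument showing that for every fixed erasure realization and every time $t$, the LCFS-no-buffer policy holds a source symbol whose index is at least as large as that held under any competing uncoded causal policy, whence $\Delta_{\epsilon,\pi^\star}(t)\le\Delta_{\epsilon,\pi}(t)$ pointwise and the inequality survives time-averaging. The paper states this more informally (comparing the age drop at a single non-erased channel use and asserting the inequality propagates forward), whereas your formalization via the freshest-index $\max_{k\le k^\star(t)}j_k\le N(i_{k^\star(t)})$ makes the causality constraint $j_k\le N(i_k)$ explicit and avoids any hand-waving, but the underlying idea is identical.
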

\begin{proof}
	{Let us} assume that an oracle provides us with the erasure pattern. It is clear that at each non-erased channel use we should send the latest
update so that the drop in the instantaneous age is the most important. Indeed, if there is a non-erased channel use at time
$t'$ and the latest update is generated at $t_{last}$ then the instantaneous age, $\Delta_{opt}(t)$, that corresponds to the LCFS
with no buffer policy drops to $\Delta_{opt}(t)=t'-t_{last}$. {We cannot do better than this because there is no source-symbol that is generated after $t_{last}$.} This argument shows that the optimal transmission policy would send the latest generated
source symbol at every non-erased channel use while it can transmit anything at the erased channel uses. However, since in
practice the transmitter {does} not have access to the erasure pattern beforehand, the policy that consists of keeping on
transmitting the last generated update until a new one is created satisfies the optimality criterion that is to send the latest
generated message at each non-erased channel use.

For the case where $T_c\leq T_s$ or $\mu\geq\lambda$, the LCFS with no buffer policy leads to the transmission of all source
symbols at least once. Whereas for the case of $T_c> T_s$ or $\mu<\lambda$, some messages will be dropped and will never
be sent.
\end{proof}

\subsection{The Optimal Average Age}
\label{subsec:subsec_ch8_same_alph_opt_avg_age}
\begin{thm}
	\label{thm:thm_ch8_opt_avg_age}
	{Given a source with message-generation rate $\lambda$ and starting time $t_0^s$, an erasure channel $q$EC($\epsilon$) with channel-use rate $\mu$, 
	and utilization $\rho=\frac{\lambda}{\mu}$,} the optimal average age achieved over {$q$EC($\epsilon$)} is:
	\begin{itemize}
		\item For irrational utilization $\rho\in\mathbb{R}\setminus\mathbb{Q}$,
			\begin{equation}
				\label{eq:eq_ch8_opt_avg_age_irrational_rho}
				\Delta_\epsilon = \frac{1}{2\lambda}+\frac{1+\epsilon}{2\mu(1-\epsilon)}.
			\end{equation}
		\item For rational utilization $\rho\in\mathbb{Q}$, 
			\begin{equation}
				\label{eq:eq_ch8_opt_avg_age_rational_rho}
				\Delta_\epsilon =\frac{1}{2\lambda} + \frac{2\left[-d\lambda t_0^s\right]-1}{2d\lambda}+\frac{1+\epsilon}{2\mu(1-\epsilon)},
			\end{equation}
			where $[x]=x-\lfloor x\rfloor$ is the fractional part of $x$, and $d$ is the denominator of $\rho=\frac{\lambda}{\mu}$ when it is written as a rational fraction of integers in irreducible form, i.e., $\rho=\frac{c}{d}$ with
			$c,d\in\mathbb{N}$ and $\gcd(c,d)=1$.
	\end{itemize}
\end{thm}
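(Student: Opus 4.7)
The plan is to invoke \Cref{thm:thm_ch8_enc_not_optimal} and \Cref{thm:thm_ch8_optimal_policy} to reduce the problem to computing the time-average age under the uncoded LCFS-without-buffer policy, and then to apply a renewal-reward/ergodic averaging argument on the resulting sawtooth age process. Concretely, at channel use $i$ (time $iT_c$), the most recently generated message was generated at $\lfloor iT_c/T_s\rfloor T_s$, so if the use is non-erased the age drops to $\phi_i := iT_c - \lfloor iT_c/T_s\rfloor T_s$, and otherwise the age continues growing linearly with slope one. Writing $i_1 < i_2 < \cdots$ for the non-erased indices and $N_k := i_{k+1}-i_k$, the $N_k$ are i.i.d.\ geometric with success probability $1-\epsilon$ and are independent of $\phi_{i_k}$, since $\phi_{i_k}$ depends only on past erasures while $N_k$ depends only on future ones. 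Over the segment $[i_k T_c, i_{k+1} T_c)$ the age is $\phi_{i_k} + (t-i_k T_c)$, contributing area $\phi_{i_k} N_k T_c + \tfrac{1}{2}(N_k T_c)^2$ over a length $N_k T_c$.

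Averaging many segments then yields
\[
\Delta_\epsilon \;=\; \bar\phi \;+\; \frac{T_c}{2}\cdot\frac{\E[N^2]}{\E[N]} \;=\; \bar\phi \;+\; \frac{1+\epsilon}{2\mu(1-\epsilon)},
\]
where $\bar\phi := \lim_K \tfrac{1}{K}\sum_{k=1}^K \phi_{i_k}$, after substituting $\E[N]=1/(1-\epsilon)$ and $\E[N^2]=(1+\epsilon)/(1-\epsilon)^2$. Because erasures are i.i.d.\ and independent of the bounded deterministic sequence $(\phi_i)$, I would rewrite $\tfrac{1}{K}\sum_k\phi_{i_k} = (i_K/K)\cdot\tfrac{1}{i_K}\sum_{i=1}^{i_K}S_i\phi_i$, with $S_i$ the success indicator at channel use $i$, and apply the strong law twice ($i_K/K\to 1/(1-\epsilon)$ and $\tfrac{1}{n}\sum_{i=1}^n S_i\phi_i - (1-\epsilon)\tfrac{1}{n}\sum_{i=1}^n\phi_i \to 0$) to reduce $\bar\phi$ to the Ces\`aro mean $\lim_n \tfrac{1}{n}\sum_{i=1}^n \phi_i$ over all channel uses.

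Finally, I would evaluate this Ces\`aro mean in each of the two cases. For irrational $\rho = T_c/T_s$, Weyl's equidistribution theorem shows that $(iT_c \bmod T_s)_{i\geq 1}$ is uniformly distributed on $[0,T_s)$, hence $\bar\phi = T_s/2 = 1/(2\lambda)$, yielding \eqref{eq:eq_ch8_opt_avg_age_irrational_rho}. For rational $\rho = m/l$ in lowest terms, $\phi_i = (im \bmod l)\cdot T_s/l$ is periodic with period $l$; since $\gcd(m,l)=1$, within each period the sequence hits every element of $\{0, T_s/l, \ldots, (l-1)T_s/l\}$ exactly once, so $\bar\phi = (l-1)T_s/(2l) = (l-1)/(2l\lambda)$, yielding \eqref{eq:eq_ch8_opt_avg_age_rational_rho}. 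The main technical hurdle I anticipate is justifying the reduction from the random-subsequence average to the deterministic Ces\`aro mean (the geometric subsampling being independent of the index should preserve long-run frequencies of the bounded sequence $(\phi_i)$); once this step is in hand, the irrational case is a direct application of Weyl and the rational case is finite-cyclic-group averaging.
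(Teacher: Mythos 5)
Your proposal is correct and reaches the same two equidistribution lemmas (Lemmas~\ref{lemma:lemma_ch8_weyl_equidistribution} and \ref{lemma:lemma_ch8_equidistribution_rational}), but via a genuinely different decomposition of the age process. The paper averages per channel-use interval, writing $\Delta_{\epsilon,n}$ in terms of the backward recurrence time $K_n$ (number of uses since the last success), then shows $K_n$ is an ergodic Markov chain with geometric stationary law and extracts $\frac{1}{2\mu}+\frac{\E(K)}{\mu}+\frac{1}{\lambda}\lim_N\frac1N\sum_n[\rho(n-1-K_n)]$. You instead partition time into renewal segments between consecutive successes $i_k$, compute the trapezoid area over each segment in terms of $\phi_{i_k}$ and the geometric gap $N_k$, and apply renewal-reward together with $\E[N^2]/\E[N]$. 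Both routes then reduce to evaluating the Ces\`aro mean $\lim_n\frac1n\sum_i[i\rho]$. The substantive technical step is the same in both: one must pass from a Ces\`aro mean taken along a random (erasure-driven) subsequence or reweighting of the index set to the Ces\`aro mean over all indices. You handle this cleanly with a two-fold SLLN argument ($i_K/K\to\E[N]$ and $\frac1n\sum_i(S_i-(1-\epsilon))\phi_i\to 0$ for the bounded deterministic $\phi_i$). The paper's own justification for the analogous step is looser: it asserts that $R_n=n-1-K_n$ and $K_n$ are ``equally distributed'' (which is not literally true, since $R_n\to\infty$) and appeals to ergodicity to replace $\sum_n[\rho R_n]$ by $\sum_n[\rho\, n]$. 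Your renewal-segment formulation makes explicit exactly which independence (erasures vs.\ the arithmetic sequence $\phi_i$) is being used, and in that sense fills a gap the paper glosses over. Net: same destination, different and somewhat tighter road.
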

Before giving the proof of \Cref{thm:thm_ch8_opt_avg_age}, we need the following lemmas.
\begin{lemma}
	\label{lemma:lemma_ch8_weyl_equidistribution}
	{
	Let $(X_l)_{l\geq 1}$ be a sequence of independent and identically distributed nondeterministic\footnote{A random variable $X\in\mathbb{N}^*$ is nondeterministic if there are at least two different integers with nonzero probability. It is worth noting that \Cref{lemma:lemma_ch8_weyl_equidistribution} remains true if $(X_l)_{l\geq 1}$ are deterministic, but in this paper we are only interested in the case where $(X_l)_{l\geq 1}$ are nondeterministic.} random variables which take values in the set of strictly positive natural numbers $\mathbb{N}^\ast$ and which satisfy $\E(X_l^2)<\infty$. Let $S_0=0$ and $\displaystyle S_l=\sum_{r=1}^lX_r$ for $l\geq 1$. For every $i\geq 0$, let 
\begin{equation}
Y_i=\max\left\{S_l: l\geq 0\text{ and }S_l\leq i\right\}.
\end{equation}

Let $\rho\in\mathbb{R}\setminus\mathbb{Q}$ be an irrational number, and let $\alpha\in\mathbb{R}$ be an arbitrary real number. Then, almost surely, we have
		\begin{equation}
%		\label{eq:eq_ch8_weyl_equidistribution}
		\lim_{N\to\infty} \frac{1}{N}\sum_{i=1}^{N} \left[\rho Y_i+\alpha\right] = \frac{1}{2},
	\end{equation}
	where $[x]=x-\lfloor x\rfloor$ is the fractional part of $x$.
	}
\end{lemma}
\begin{proof}
	This lemma is a consequence of Weyl's equidistribution theorem \cite{Weyl1916} (see \Cref{subsec:subsec_ch8_weyl_thm}). A full proof of this lemma can be found in \Cref{subsec:subsec_ch8_app_weyl_equi_proof}.
\end{proof}

	{
\begin{lemma}
	\label{lemma:lemma_ch8_equidistribution_rational}

Let $c,d\in\mathbb{Z}$ be such that $d>0$ and $\gcd(c,d)=1$. Then, for every $\alpha\in\mathbb{R}$, we have
		\begin{equation}
		\label{eq:eq_ch8_weyl_equidistribution_rational}
		\sum_{a=0}^{d-1} \left[\frac{c}{d} a+\alpha\right] = \frac{d-1}{2}+[d\alpha],
	\end{equation}
	where $[x]=x-\lfloor x\rfloor$ is the fractional part of $x$.
	
\end{lemma}
\begin{proof}
Since $\gcd(c,d)=1$, then for every $r\in\mathbb{Z}$, the mapping $a\mapsto (ca+r\bmod d)$ is a bijection from $\{0,\ldots,d-1\}$ to itself. Therefore,
\begin{align}
\sum_{a=0}^{d-1} \left[\frac{c}{d} a+\alpha\right]
&= \sum_{a=0}^{d-1} \left[\frac{ca+d\alpha}{d} \right] = \sum_{a=0}^{d-1} \left[\frac{ca+\lfloor d\alpha\rfloor + [d\alpha]}{d} \right]\nonumber \\
&= \sum_{a=0}^{d-1} \left[\frac{\left(ca+\lfloor d\alpha\rfloor \bmod d\right) + [d\alpha]}{d} \right] \stackrel{(\ast)}{=} \sum_{b=0}^{d-1} \left[\frac{b+ [d\alpha]}{d} \right] \stackrel{(\dagger)}{=} \sum_{b=0}^{d-1} \frac{b+ [d\alpha]}{d}  = \frac{d-1}{2}+[d\alpha],
\end{align}
where $(\ast)$ follows from the fact that the mapping $a\mapsto (ca+\lfloor d \alpha\rfloor\bmod d)$ is a bijection from $\{0,\ldots,d-1\}$ to itself, and $(\dagger)$ follows from the fact that $b+[d\alpha]<d$ for every $b\leq d-1$.
\end{proof}
}

%\begin{lemma}
%	\label{lemma:lemma_ch8_equidistribution_rational}
%	{
%	Let $(X_i)_{i\geq 1}$ be a sequence of independent and identically distributed random variables which take values in the set of strictly positive natural numbers $\mathbb{N}^\ast$ and which satisfy $\E(X_i^2)<\infty$ and $\Prob(X_i=1)>0$. Let $S_0=0$ and $\displaystyle S_i=\sum_{l=1}^iX_l$ for $i\geq 1$. For every $n\geq 1$, let $$Y_n=\max\left\{S_i: i\geq 0, S_i\leq n\right\}.$$
%
%	Let $\rho\in\mathbb{Q}$ be an rational number, and let $\alpha\in\mathbb{R}$ be an arbitrary real number. Moreover, assume that $\rho$ can be
%	written in an irreducible form as $\rho =\frac{m}{\ell}$, where $m,\ell\in\mathbb{N}$, $\ell\neq0$ and $\gcd(m,\ell)=1$. Then, almost surely, we have
%		\begin{equation}
%		\label{eq:eq_ch8_weyl_equidistribution_rational}
%		\lim_{N\to\infty} \frac{1}{N}\sum_{n=1}^{N} \left[\rho Y_n+\alpha\right] = \frac{\ell-1}{2\ell}+\frac{[\ell\alpha]}\ell,
%	\end{equation}
%	where $[x]=x-\lfloor x\rfloor$ is the fractional part of $x$.
%	}
%\end{lemma}
%\begin{proof}
%	A full proof of this lemma is presented in \Cref{subsec:subsec_ch8_app_equi_rational_proof}.
%\end{proof}

\begin{proof}[Proof of \Cref{thm:thm_ch8_opt_avg_age}]
%	In this proof we will use a different approach than the ATA and DTA presented in \Cref{ch2}. 
	We know that
	$\mu=\frac{1}{T_c}$. In \eqref{eq:eq_avg_age_def} we saw that the average age is given by
	\begin{equation}
		\Delta_\epsilon = \lim_{\tau\to\infty} \frac{1}{\tau}\int_0^\tau \Delta_\epsilon(t)\mathrm{d}t.
	\end{equation}
	%Noticing that $\tau=\frac{\tau}{T_c}T_c=\lp \lfloor\frac{\tau}{T_c}\rfloor + \left[\frac{\tau}{T_c}\right]\rp T_c$, with $\left[\frac{\tau}{T_c}\right]$ being the fractional part of $\frac{\tau}{T_c}$, w
	We can rewrite the average age as
{
	
	\begin{align}
		\Delta_\epsilon &= \lim_{\tau\to\infty}
		\frac{1}{\tau}\lp\sum_{i=1}^{\lfloor\frac{\tau}{T_c}\rfloor}\int_{(i-1)T_c}^{iT_c}
		\Delta_\epsilon(t)\mathrm{d}t+ \int_{\lfloor\frac{\tau}{T_c}\rfloor T_c}^\tau\Delta_\epsilon(t)\mathrm{d}t\rp.
	\end{align}
	
	Therefore,	
	
		\begin{align}
		\label{eq:eq_ch8_app_same_alph_avg_age_1}
		\lim_{\tau\to\infty}
		\frac{1}{\tau}\sum_{i=1}^{\lfloor\frac{\tau}{T_c}\rfloor}\int_{(i-1)T_c}^{iT_c}
		\Delta_\epsilon(t)\mathrm{d}t
		\leq \Delta_\epsilon 
		\leq \lim_{\tau\to\infty}
		\frac{1}{\tau}\sum_{i=1}^{\lfloor\frac{\tau}{T_c}\rfloor+1}\int_{(i-1)T_c}^{iT_c}
		\Delta_\epsilon(t)\mathrm{d}t.
	\end{align}
	
	Let 
\begin{equation}
M_\tau=\left\lfloor\frac{\tau}{T_c}\right\rfloor\quad\text{and}\quad \Delta_{\epsilon,i} =
	\frac{1}{T_c}\int_{(i-1)T_c}^{iT_c}\Delta_\epsilon(t)\mathrm{d}t.
\end{equation}

	By noticing that 
\begin{equation}
\lim_{\tau\to\infty}\frac{M_\tau}{\frac{\tau}{T_c}}=\lim_{\tau\to\infty}\frac{M_\tau + 1}{\frac{\tau}{T_c}} = 1,
\end{equation}	
we can deduce from \eqref{eq:eq_ch8_app_same_alph_avg_age_1} that
\begin{equation}
\lim_{\tau\to\infty} \frac{1}{M_\tau}\sum_{i=1}^{M_\tau} \Delta_{\epsilon,i} \leq	\Delta_\epsilon \leq \lim_{\tau\to\infty} \frac{1}{M_\tau + 1}\sum_{i=1}^{M_\tau+1} \Delta_{\epsilon,i},
\end{equation}
	which implies that
		\begin{equation}
		\label{eq:eq_ch8_app_same_alph_avg_age_2}
\Delta_\epsilon = \lim_{N\to\infty} \frac{1}{N}\sum_{i=1}^{N} \Delta_{\epsilon,i}.
	\end{equation}
	}

{	At any instant $t\in\big[(i-1)T_c,iT_c\big)$, the last channel use took place at time $t_{i-1}^c=(i-1)T_c$. Assume that the last successful channel use before time $iT_c$ was the $(i-1-K_i)^{th}$ channel use, i.e., it took place at time $(i-1-K_i)T_c$. The source symbol that was transmitted at this time was generated at time $\left\lfloor\frac{(i-1-K_i)T_c-t_0^s}{T_s}\right\rfloor T_s+t_0^s$. Therefore, at any instant $t\in[(i-1)T_c,iT_c)$, the timestamp of the last successfully received source symbol is 
\begin{equation}
u(t)=\frac{1}{\lambda}\left\lfloor\frac{\lambda}{\mu}(i-1-K_i)-\lambda t_0^s\right\rfloor +t_0^s,
\end{equation}
which means that the age of information at time $t$ is equal to
\begin{equation}
\Delta_\epsilon(t) = t-u(t) = t- \frac{1}{\lambda}\left\lfloor\frac{\lambda}{\mu}(i-1-K_i)-\lambda t_0^s\right\rfloor -t_0^s.
\end{equation}
	
	Hence,
	\begin{align}
		\label{eq:eq_ch8_age_same_alph_proof_1}
		\Delta_{\epsilon,i}	&= \frac{1}{T_c}\int_{(i-1)T_c}^{iT_c} \Delta_\epsilon(t)\mathrm{d}t\nn
					&= \frac{1}{T_c}\int_{(i-1)T_c}^{iT_c} \lp t- \frac{1}{\lambda}\left\lfloor\frac{\lambda}{\mu}(i-1-K_i)-\lambda t_0^s\right\rfloor -t_0^s\rp\mathrm{d}t\nn
					&= \frac{1}{T_c} \lp
					\frac{i^2T_c^2}{2}-\frac{(i-1)^2T_c^2}{2}-\frac{T_c}{\lambda}\left\lfloor\frac{\lambda}{\mu}(i-1-K_i)-\lambda t_0^s\right\rfloor -t_0^sT_c\rp\nn
					&=  iT_c-\frac{T_c}{2}-\frac{1}{\lambda}\left\lfloor\frac{\lambda}{\mu}(i-1-K_i)-\lambda t_0^s\right\rfloor -t_0^s\nn
					&= \frac{i}{\mu} -\frac{1}{2\mu}-\frac{1}{\lambda}\left\lfloor\frac{\lambda}{\mu}(i-1-K_i)-\lambda t_0^s\right\rfloor -t_0^s\nn
					&= \frac{i}{\mu} -\frac{1}{2\mu}-\frac{1}{\lambda}\left(\frac{\lambda}{\mu}(i-1-K_i)-\lambda t_0^s\right)+\frac{1}{\lambda}\left(\frac{\lambda}{\mu}(i-1-K_i)-\lambda t_0^s-\left\lfloor\frac{\lambda}{\mu}(i-1-K_i)-\lambda t_0^s\right\rfloor\right) -t_0^s\nn	
					&= \frac{i}{\mu} -\frac{1}{2\mu}- \frac{i}{\mu}+ \frac{1}{\mu}+ \frac{K_i}{\mu}+ t_0^s+\frac{1}{\lambda}\left[\frac{\lambda}{\mu}(i-1-K_i)-\lambda t_0^s\right] -t_0^s\nn
					&= \frac{1}{2\mu}+ \frac{K_i}{\mu}+\frac{1}{\lambda}\left[\frac{\lambda}{\mu}(i-1-K_i)-\lambda t_0^s\right],
	\end{align}
	}
	where $[x]=x-\lfloor x\rfloor$ is the fractional part of $x$.

	Setting $K_1=0$, then for $i\geq 2$ we can write $K_i$ as
	\begin{equation}
		K_i= \begin{cases}
			K_{i-1} +1 &\text{with probability }\epsilon\\
			0	   &\text{with probability }1-\epsilon.
		\end{cases}
	\end{equation}
	So {$(K_i)_{i\geq 1}$ forms} a Markov process represented by the Markov chain in Fig.~\ref{fig:fig_ch8_Kn_sys_mc}.
\begin{figure*}[!t]
	\centering
\begin{tikzpicture}[>=stealth',shorten >=1pt,auto,node distance=2.4cm, semithick]
	\tikzstyle{every state}=[fill=red,draw=none,text=white]
	\node[state] (A)                     {$0$};
	\node[state] (B) [right of=A]  	     {$1$};
	\node[state] (C) [right of=B]  	     {$2$};
	\node[state] (D) [right of=C] 	     {$3$};
	\node[state,fill=white]	(E) at (8.5,0) {};
	
	\path [->] (A) edge [bend left]      node[ fill=white, anchor=center, pos=0.5] {$\epsilon$} (B);
	\path [->] (A) edge [loop left]      node {$1-\epsilon$} (A);
	
	\draw [->] (B) edge [bend left]      node[ fill=white, anchor=center, pos=0.5] {$1-\epsilon$} (A);
	\path [->] (B) edge [bend left]      node[ fill=white, anchor=center, pos=0.5] {$\epsilon$} (C);
	
	\draw [->] (C) edge [bend left]      node[ fill=white, anchor=center, pos=0.5] {$1-\epsilon$} (A.290);
	\path [->] (C) edge [bend left]      node[ fill=white, anchor=center, pos=0.5] {$\epsilon$} (D);
	
	\draw [->] (D) edge [bend left]      node[ fill=white, anchor=center, pos=0.5] {$1-\epsilon$} (A.270);
	\path [-]  (D) edge [bend left]      node {} (E);
	\path [->] (E.south) edge [bend left]      node[ fill=white, anchor=center, pos=0.5] {$1-\epsilon$} (A.250);

\end{tikzpicture}
\caption{Markov chain governing $K_n$, the number of transmissions since the reception of the last successful source symbol.}
\label{fig:fig_ch8_Kn_sys_mc}
\end{figure*}
	This Markov process is ergodic and has a stationary distribution which is identical to a geometric random variable $K$.
	This means 	{that
	\begin{align}
		& \lim_{i\to\infty}\Prob(K_i=a)=\Prob(K=a)=\epsilon^a(1-\epsilon),\quad\forall a\geq0,
	\end{align}
	}
	and {almost surely, we have} \begin{equation}
		\label{eq:eq_ch8_K_expectation}
		\lim_{N\to\infty} \frac{1}{N}\sum_{i=1}^N K_i = \E(K) = \frac{\epsilon}{1-\epsilon}.
	\end{equation}
	Replacing \eqref{eq:eq_ch8_age_same_alph_proof_1} in \eqref{eq:eq_ch8_app_same_alph_avg_age_2}, we get
	
	{
	\begin{align}
		\label{eq:eq_ch8_same_alph_avg_age_3}
		\Delta_\epsilon
				&= \lim_{N\to\infty} \frac{1}{N}\sum_{i=1}^{N}\lp  \frac{1}{2\mu}+ \frac{K_i}{\mu}+\frac{1}{\lambda}\left[\frac{\lambda}{\mu}(i-1-K_i)-\lambda t_0^s\right]\rp\nn
				&=\frac{1}{2\mu}+ \frac{1}{\mu}\lim_{N\to\infty} \frac{1}{N}\sum_{i=1}^{N}K_i+ \frac{1}{\lambda}\lim_{N\to\infty} \frac{1}{N}\sum_{i=1}^{N}\left[\frac{\lambda}{\mu}(i-1-K_i)-\lambda t_0^s\right]\nn
				&=\frac{1}{2\mu}+ \frac{\E(K)}{\mu}+ \frac{1}{\lambda}\lim_{N\to\infty}  \frac{1}{N}\sum_{i=1}^{N}\left[\frac{\lambda}{\mu}(i-1-K_i)-\lambda t_0^s\right]\nn
				&=\frac{1}{2\mu}+ \frac{\epsilon}{\mu(1-\epsilon)}+ \frac{1}{\lambda}\lim_{N\to\infty}  \frac{1}{N}\sum_{i=1}^{N}\left[\frac{\lambda}{\mu}(i-1-K_i)-\lambda t_0^s\right],
	\end{align}

	where the third and fourth equalities follow from \eqref{eq:eq_ch8_K_expectation}.
	
	At this point, we need to distinguish between two cases:
	\begin{itemize}
		\item $\rho=\frac{\lambda}{\mu}$ is irrational. In this case, we need to rewrite 
		$i-1-K_i$. Let $S_0=0$, and for $r\geq 1$ let $S_r$ be the index of the $r^{th}$ channel-use which was not erased. Define $X_l=S_l-S_{l-1}$. Clearly, $(X_l)_{l\geq1}$ are independent and identically distributed as geometric random variables, i.e., $\mathbb{P}(X_l=x)=\epsilon^{x-1}(1-\epsilon)$ for $x\in\mathbb{N}^*$. It is easy to see that $i-1-K_i=Y_i$, where
		\begin{equation}
Y_i=\max\{S_l:l\geq 0\text{ and }S_l\leq i\}.		
		\end{equation}

		 Then, by \Cref{lemma:lemma_ch8_weyl_equidistribution},
		 \begin{equation}
				\lim_{N\to\infty}\sum_{i=1}^{N}\left[ \frac{\lambda}{\mu}(i-1-K_i)-\lambda t_0^s\right]=\lim_{N\to\infty}\sum_{i=1}^{N}\left[ \frac{\lambda}{\mu}Y_i-\lambda t_0^s\right]=
				\frac{1}{2}.		 
		 \end{equation}
			Using this result in \eqref{eq:eq_ch8_same_alph_avg_age_3} we get
			\eqref{eq:eq_ch8_opt_avg_age_irrational_rho}.
		\item $\rho=\frac{\lambda}{\mu}=\frac{c}{d}$ is rational with $c,d\in\mathbb{N}$, $d>0$ and $\gcd(c,d)=1$. Since
		\begin{align}
\frac{1}{N}\sum_{i=1}^{d\lfloor \frac{N}{d}\rfloor}\left[ \frac{\lambda}{\mu}(i-1-K_i)-\lambda t_0^s\right]
&\leq \frac{1}{N}\sum_{i=1}^{N}\left[ \frac{\lambda}{\mu}(i-1-K_i)-\lambda t_0^s\right]\nonumber\\
&\leq \left(\frac{1}{N}\sum_{i=1}^{d\lfloor \frac{N}{d}\rfloor}\left[ \frac{\lambda}{\mu}(i-1-K_i)-\lambda t_0^s\right]\right)+\left(\frac{1}{N}\sum_{i=d\lfloor \frac{N}{d}\rfloor}^{d(\lfloor \frac{N}{d}\rfloor+1)}\left[ \frac{\lambda}{\mu}(i-1-K_i)-\lambda t_0^s\right]\right)\nonumber\\
&\leq \left(\frac{1}{N}\sum_{i=1}^{d\lfloor \frac{N}{d}\rfloor}\left[ \frac{\lambda}{\mu}(i-1-K_i)-\lambda t_0^s\right]\right)+\frac{d}{N},
		\end{align}
		it follows that
		\begin{align}
\lim_{N\to\infty}	\frac{1}{N}\sum_{i=1}^{N}\left[ \frac{\lambda}{\mu}(i-1-K_i)-\lambda t_0^s\right]
&=\lim_{N\to\infty}	\frac{1}{N}\sum_{i=1}^{d\lfloor \frac{N}{d}\rfloor}\left[ \frac{\lambda}{\mu}(i-1-K_i)-\lambda t_0^s\right]\nonumber\\
&=\lim_{N\to\infty}	\frac{d\lfloor \frac{N}{d}\rfloor}{N}\frac{1}{d\lfloor \frac{N}{d}\rfloor}\sum_{i=1}^{d\lfloor \frac{N}{d}\rfloor}\left[ \frac{\lambda}{\mu}(i-1-K_i)-\lambda t_0^s\right]\nonumber\\
&=\lim_{N\to\infty}	\frac{1}{d\lfloor \frac{N}{d}\rfloor}\sum_{i=1}^{d\lfloor \frac{N}{d}\rfloor}\left[ \frac{\lambda}{\mu}(i-1-K_i)-\lambda t_0^s\right]\nonumber\\
&=\lim_{N\to\infty}	\frac{1}{d N}\sum_{i=1}^{d N}\left[ \frac{\lambda}{\mu}(i-1-K_i)-\lambda t_0^s\right]\nonumber\\
&=\lim_{N\to\infty}	\frac{1}{d N}\sum_{i=0}^{d N-1}\left[ \frac{\lambda}{\mu}(i-K_{i+1})-\lambda t_0^s\right]\nonumber\\
&=\lim_{N\to\infty}	\frac{1}{d N}\sum_{a=0}^{d-1}\sum_{i=0}^{N-1}\left[ \frac{c}{d}(d i+a-K_{d i+a+1})-\lambda t_0^s\right]\nonumber\\&=	\frac{1}{d}\sum_{a=0}^{d-1}\lim_{N\to\infty}	\frac{1}{N}\sum_{i=0}^{N-1}\left[ ci+\frac{c}{d}(a-K_{d i+a+1})-\lambda t_0^s\right]\nonumber\\
&=	\frac{1}{d}\sum_{a=0}^{d-1}\lim_{N\to\infty}	\frac{1}{N}\sum_{i=0}^{N-1}\left[ \frac{c}{d}a-\left(\frac{c}{d}K_{d i+a+1}+\lambda t_0^s\right)\right]\nonumber\\
&\stackrel{(\ast)}{=}	\frac{1}{d}\sum_{a=0}^{d-1}\E\left(\left[ \frac{c}{d}a-\left(\frac{c}{d}K+\lambda t_0^s\right)\right]\right)\nonumber,
\end{align}
where $(\ast)$ follows from the fact that $(K_i)_{i\geq 1}$ is ergodic. Continuing,
\begin{align}
\lim_{N\to\infty}	\frac{1}{N}\sum_{i=1}^{N}\left[ \frac{\lambda}{\mu}(i-1-K_i)-\lambda t_0^s\right]
&=	\frac{1}{d}\E\left(\sum_{a=0}^{d-1}\left[ \frac{c}{d}a-\left(\frac{c}{d}K+\lambda t_0^s\right)\right]\right)\nonumber\\
&\stackrel{(\dagger)}{=}\frac{1}{d}\E\left(\frac{d-1}{2}+\left[-d\left(\frac{c}{d}K+\lambda t_0^s\right)\right]\right)\nonumber\\
&=\frac{1}{d}\E\left(\frac{d-1}{2}+\left[-K-d\lambda t_0^s\right]\right)\nonumber\\
&=\frac{1}{d}\E\left(\frac{d-1}{2}+\left[-d\lambda t_0^s\right]\right)\nonumber\\
&=\frac{d-1}{2d}+\frac{\left[-d\lambda t_0^s\right]}{d},
		\end{align}
where $(\dagger)$ follows from \Cref{lemma:lemma_ch8_equidistribution_rational}. Using this result in \eqref{eq:eq_ch8_same_alph_avg_age_3} we get
			\eqref{eq:eq_ch8_opt_avg_age_rational_rho}.
	\end{itemize}
	}
\end{proof}

\begin{remark}
One interesting application of \Cref{thm:thm_ch8_opt_avg_age} is the computation of the average age of information for the D/M/1 system with preemption: If we have a D/M/1 queue with deterministic interarrival time of rate $\lambda$ and exponential service time of rate $\mu$, then we can model the random exponentially distributed service time as being the result of having an erasure channel that can be used every $T_c=dt\ll 1$ and where the erasure probability is $1-\mu dt$. In this case, we get from \eqref{eq:eq_ch8_opt_avg_age_irrational_rho} that the average age of information of a D/M/1 system with preemption is equal to
$$\frac{1}{2\lambda}+\frac{1+1-\mu dt}{2\frac{1}{dt}(1-(1-\mu dt))}=\frac{1}{2\lambda}+\frac{1}{\mu}-\frac{dt}{2} \;\stackrel{dt\to 0}{\longrightarrow}\; \frac{1}{2\lambda}+\frac{1}{\mu},$$
which is consistent with the formula that was derived in \cite{InoueEtAl} for D/M/1 systems with preemption.
\end{remark}

%%%%%%%%%%%%%%%%%%%%%%%%%%%%%%%%%%%%%%%%%%%%%%%%%%%%%%%%%%%%%%%%%%%%
\section{Optimal Age with Different Source \& Channel alphabets}
\label{sec:sec_ch8_diff_alphabets}
In this setup, we consider the model described {in} \Cref{sec:sec_ch8_preliminaries}: The channel is a {$q$-ary} erasure
channel {$q$EC($\epsilon$)} {without} feedback and {with} input alphabet $\mathcal{V}=\{0,1,\ldots,q-1\}$. The source alphabet is $\mathcal{U}=\mathcal{V}^k$ where $k>1$. We only consider the special case where $\lambda=\mu$, so that at every channel use, a new source symbol is generated. By combining the techniques of \Cref{sec:sec_ch8_same_alphabet} and this section, one might be able to obtain reasonably good lower and upper bounds on the optimal age for the more general case where $\lambda$ can be different from $\mu$, but we expect the calculation to be more complicated.

Since we only consider the case where $\lambda=\mu$, we can assume without loss of
generality that $\lambda=\mu=1$. {The $i^{th}$ channel use takes place at time $t^c_{i}=iT_c=i$, and the $m^{th}$ source-symbol is generated at time $t^s_m=mT_s+t_0^s=m+t_0^s$.} The difference between the source alphabet and the channel-input alphabet as well as
the presence of erasures impose the use of channel coding on the generated source symbols before their transmission. We will focus on coding schemes that are induced by linear {block} codes, so we will assume that
$\mathcal{V}=\mathbb{F}_q$, where $q$ is a power of a prime number. {We fix a blocklength $n\geq k$, and each transmitted message will be encoded into a block of $n$ channel-input symbols in $\mathcal{V}=\mathbb{F}_q$, and then transmitted through $n$ consecutive channel uses. More precisely, the $l^{th}$ transmitted message is encoded using an $(n,k)$ linear block encoder $F_l:\mathbb{F}_q^k\to\mathbb{F}_q^n$ to produce $n$ channel-input symbols in $\mathcal{V}=\mathbb{F}_q$, which will be transmitted through the $((l-1)n+1)^{th}$,  the $((l-1)n+2)^{th}$, \ldots, and the $(ln)^{th}$ channel uses. In order to transmit messages that are as fresh as possible, the $l^{th}$ transmitted message is the last message that was generated before time $t^c_{(l-1)n+1}=(l-1)n+1$, i.e., the $l^{th}$ transmitted message is the $m(l)^{th}$ generated source symbol, where $m(l)=\lfloor (l-1)n+1-t_0^s\rfloor$. All the source-symbols that are generated between $t_{m(l)+1}^s$ and $t_{m(l+1)}^s$ are discarded. Fig.~\ref{fig:fig_ch8_instantaneous_age_MDS_opt} illustrates this
concept. We emphasize the fact that the $(n,k)$-linear codes $(F_l)_{l\geq1}$ used to encode different messages can be different. We denote a coding scheme that is induced by a given sequence of $(n,k)$-linear codes  $(F_l)_{l\geq1}$ as $\mathcal{C}(n,k)$.
}

%For a $q-ary$ EC($\epsilon$), a \emph{coding scheme} $(\mathcal{V}^k,n,f,g)$
%is a scheme where every source symbol $U^k$ is represented using $k$ channel-input
%symbols and is encoded using $f: \mathcal{V}^k\to\mathcal{V}^n$ into a codeword $V^n=f(U^k)$ of $n$ channel-input symbols. The decoder $g:
%(\mathcal{V}\cup\{\text{?}\})^n\to\mathcal{V}^k$ observes $n$ consecutive channel-output symbols $Z^n$ and decodes them into an
%estimation of the transmitted source symbol, $\hat{U}^k=g(Z^n)$. Fig.~\ref{fig:fig_ch8_communication_system} illustrates this
%concept. We denote a code corresponding to a $(\mathcal{V}^k,n,f,g)$ coding scheme by $\mathcal{C}(n,k)$. In this paper we
%are interested only in linear codes.
\begin{figure}[t]
	\centering
	\begin{tikzpicture}[scale=0.6,font=\scriptsize]
		% horizontal axis
		\draw[->] (0,0) -- (7,0) node[anchor=north] {channel uses};
		% labels
		\draw[color=red,thick]	(1,3pt) -- (1,-3pt);
		\draw[->] 		(1,-0.7) node[anchor=north,align=center]{Time of generation and\\beginning of
		transmission\\of a new successful update} -- (1,-0.3);
		\draw[color=red,thick]	(1.4,3pt) -- (1.4,-3pt);
		\draw[color=blue]	(1.4,0) node {x};
		\draw[color=red,thick]	(1.8,3pt) -- (1.8,-3pt);
		\draw[color=blue]	(1.8,0) node {x};
		\draw[color=red,thick]	(2.2,3pt) -- (2.2,-3pt);
		\draw[color=blue]	(2.2,0) node {x};
		\draw[color=red,thick]	(2.6,3pt) -- (2.6,-3pt) ;
		\draw[color=red,thick]	(3,3pt) -- (3,-3pt);
		\draw[color=blue]	(3,0) node {x};
		
		\draw[color=red,thick]	(3.4,3pt) -- (3.4,-3pt);
		\draw[color=red,thick]	(3.8,3pt) -- (3.8,-3pt);
					
		\draw[color=red,thick]	(4.2,3pt) -- (4.2,-3pt);
		\draw[color=red,thick]	(4.6,3pt) -- (4.6,-3pt);
		\draw[->] 		(4.6,-0.7) node[anchor=north,align=center]{Time of end\\of transmission} -- (4.6,-0.3);
		\draw[color=red,thick]	(5,3pt) -- (5,-3pt);
		
		% vertical axis
		\draw[->] (0,0) -- (0,8) node[anchor=south] {$\Delta(t)$};
		
		\draw[thick,color=red] (1,4) -- (3.4,6.4) -- (3.4,2.4) -- (5,4);
		\draw[dashed,thick] (1,4) -- (4.6,7.6) -- (4.6,3.6)--(5,4);
		%\draw[thick,color=red] (1,2) -- (3.4,4.4) -- (3.4,0.4) -- (5,2);
		%\draw[dotted,thick]  (1,2) -- (4.6,5.6) -- (4.6,1.6)--(5,2);
		\draw[dotted,color=blue] (0,3.6) node [color=black,anchor=east] {$n-1$} -- (4.6,3.6) -- (4.6,0);
		\draw[dotted,color=blue] (0,2.4) node [color=black,anchor=east] {$i_{\text{dec}}^{\text{MDS}}-1$} -- (3.4,2.4) -- (3.4,0);
		%\draw[dotted,color=blue] (1,0) -- (3.4,2.4);
			
		%Legend
		\draw (5.2,4.8) rectangle (8.6,6);
		\draw[thick,color=red] (5.3,5.8)--(5.7,5.8) node [anchor=west] {$\Delta_{\epsilon,\mathcal{C}^{\text{MDS}}}(t)$};
		\draw[dashed,thick] (5.3,5.4) -- (5.7,5.4) node [anchor=west] {$\Delta_{\epsilon,\mathcal{C}}(t)$};
		\draw[color=blue] (5.5,5) node {x};
		\node[anchor=west] at (5.7,5) {channel erasure};
	\end{tikzpicture}
	\caption{Variation of the instantaneous age for an MDS code $\mathcal{C}^{\text{MDS}}$ and a non-MDS code $\mathcal{C}$. We assume $n=10$, $k=3$, $t_0^s=0$, $i_{\text{dec}}=n=10$ and $i_{\text{dec}}^{\text{MDS}}=7$.}
	\label{fig:fig_ch8_instantaneous_age_MDS_opt}
\end{figure}

\subsection{The Optimal Transmission Policy}
\label{subsec:subsec_ch8_diff_alph_opt_transmission_policy}
\begin{definition}
	\label{def:def_ch8_MDS_code}
	An $(n,k)$-linear code %$\mathcal{C}(n,k)$ of a coding scheme $(\mathcal{V}^k,n,f,g)$, where $\mathcal{V}$ is the channel-input alphabet,
	is called \emph{maximum distance separable} (MDS) if it achieves the Singleton bound:
	\[ d=n-k+1,\]
	with $d$ denoting the minimum distance\footnote{See \cite{RichardsonUrbanke2008} for more details.} between the
codewords of the code.% $\mathcal{C}$.
\end{definition}
\begin{proposition}
	\label{prop:prop_ch8_MDS_code_property}
	If %$(\mathcal{V}^k,n,f,g)$ is a coding scheme such that 
	the encoder $F$ generates an MDS $(n,k)$-linear code, %$\mathcal{C}(n,k)$, 
	then
	\begin{itemize}
		\item any $k$ columns of the generator matrix $\mathbf{G}$ {of the encoder $F$} are linearly independent,
		\item any subset of size $k$ taken from a length-$n$ codeword is sufficient to recover, with probability 1, the transmitted
	message.
	\end{itemize}
	This means that if the channel is {$q$EC($\epsilon$)}, the decoder needs to observe only $k$ unerased
	channel-input symbols in order to perfectly decode the transmitted source symbol.
\end{proposition}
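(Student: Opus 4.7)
The plan is to derive both bullets directly from the Singleton bound $d=n-k+1$ via elementary linear algebra, and then reinterpret the conclusion in the context of the erasure channel.

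For the first bullet, I would argue by contradiction. Suppose there exist $k$ column indices $j_1,\dots,j_k$ in $\{1,\dots,n\}$ such that the corresponding columns of the generator matrix $\mathbf{G}\in\mathbb{F}_q^{k\times n}$ are linearly dependent. Let $\mathbf{G}'$ denote the $k\times k$ submatrix of $\mathbf{G}$ formed by these columns. Then $\mathbf{G}'$ is singular, so its left null space is nontrivial, and there exists a nonzero $u\in\mathbb{F}_q^k$ with $u\mathbf{G}'=0$. The codeword $c=u\mathbf{G}$ is therefore zero in positions $j_1,\dots,j_k$, hence has Hamming weight at most $n-k$. Since $\mathbf{G}$ has rank $k$ (a generator matrix for a $k$-dimensional code), $u\neq 0$ forces $c\neq 0$, contradicting the MDS assumption $d=n-k+1$ that every nonzero codeword has weight at least $n-k+1$.

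For the second bullet, I would use the first bullet as a black box. Fix any subset $S=\{j_1,\dots,j_k\}$ of $k$ coordinates and assume the decoder observes $c_{j_1},\dots,c_{j_k}$ where $c=u\mathbf{G}$. Letting $\mathbf{G}'$ be the $k\times k$ submatrix indexed by $S$, the observed vector is $c_S=u\mathbf{G}'$. By the first part, $\mathbf{G}'$ is invertible, so the decoder recovers $u=c_S\bigl(\mathbf{G}'\bigr)^{-1}$ uniquely and deterministically, i.e.\ with probability one. Applying this to the erasure channel EC($\epsilon$): once the receiver collects any $k$ non-erased coordinates of the length-$n$ transmitted codeword, it can invert the corresponding $\mathbf{G}'$ and exactly decode the source symbol.

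I do not expect any serious obstacle here: the argument is pure finite-dimensional linear algebra, and every step is forced once the Singleton-bound equality $d=n-k+1$ is invoked. The only mildly delicate point is ensuring that the nonzero vector $u$ in the contradiction argument produces a nonzero codeword, which is immediate from the full row-rank of a generator matrix; I would state this briefly to keep the proof self-contained.
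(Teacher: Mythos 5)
Your proof is correct; both bullets follow exactly as you argue, and the delicate point (a nonzero $u$ yields a nonzero codeword because a generator matrix has full row rank) is handled cleanly. Note, however, that the paper deliberately omits any proof of this proposition, stating that it is ``outside the scope of this text'' and deferring to \cite{RichardsonUrbanke2008}; what you have written is the standard textbook argument from the Singleton bound, so there is nothing in the paper to compare it against, but it supplies exactly the elementary linear-algebra proof the authors chose not to reproduce.
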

\Cref{prop:prop_ch8_MDS_code_property} is well known. We refer the reader to
\cite{RichardsonUrbanke2008} for more details. The following theorem presents the optimal channel codes from an age point of
view when the channel does not have any feedback.
\begin{thm}
	\label{thm:thm_ch8_MDS_optimal}
	Consider {a $q$EC($\epsilon$) channel without} feedback and let $n,k$ be such that MDS $(n,k)$-linear codes exist. Among all $(n,k)$-linear codes, MDS codes are age optimal. This
means that, to achieve age optimality, all codes used in the scheme $\mathcal{C}(n,k)$ should be MDS.
\end{thm}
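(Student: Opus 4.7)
The plan is to couple the MDS scheme $\mathcal{C}_{MDS}(n,k)$ against an arbitrary non-MDS scheme $\mathcal{C}(n,k)$ on the same realization of the erasure process and show pathwise domination of the instantaneous age. Since $\lambda=\mu=1$, each source symbol generated at time $t_l$ is encoded by its own $(n,k)$ linear code $f_l$ and transmitted over the next $n$ channel uses, after which it is discarded. Thus the success of decoding message $l$ is a function only of $f_l$, $g_l$, and the erasure pattern restricted to the $n$ channel uses following $t_l$. In particular, the two schemes can be compared message-by-message on the same erasure realization without any synchronization issues.

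Fix a message index $l$ and let $S_l\subseteq\{1,\dots,n\}$ be the set of positions that are not erased during the transmission of message $l$. By \Cref{prop:prop_ch8_MDS_code_property}, if $f_l$ is MDS then any $|S_l|\ge k$ of the received unerased symbols suffice to recover $U_l$ with probability one, because the corresponding columns of the generator matrix are automatically linearly independent. For a non-MDS linear code, decoding requires the unerased columns of the generator matrix to span $\mathbb{F}_q^k$; since a non-MDS code has some $k$-subset of columns that is linearly dependent, there exist realizations of $S_l$ with $|S_l|\ge k$ for which decoding still fails. Consequently, for every erasure realization and every message index $l$,
\[
\{\text{message }l\text{ decoded under }\mathcal{C}\}\ \subseteq\ \{\text{message }l\text{ decoded under }\mathcal{C}_{MDS}\}.
\]

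The instantaneous age at time $t$ equals $t-t_{l^*(t)}$, where $l^*(t)$ is the index of the most recently generated message that has been fully transmitted and successfully decoded by time $t$. Enlarging the set of decoded message indices can only make $l^*(t)$ at least as large, hence $t_{l^*(t)}$ at least as recent, at every $t$. Combining this with the set inclusion above yields the pathwise bound $\Delta_{\epsilon,\mathcal{C}_{MDS}}(t)\le\Delta_{\epsilon,\mathcal{C}}(t)$ for all $t$ and all erasure realizations. Taking the time average as in \eqref{eq:eq_ch8_avg_age_coding_scheme} gives $\Delta_{\epsilon,\mathcal{C}_{MDS}}\le\Delta_{\epsilon,\mathcal{C}}$, so MDS codes are age-optimal among fixed-blocklength linear schemes.

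The argument is essentially a pathwise coupling, and the only step that requires real content is the deterministic dominance of decoding events, which is precisely what \Cref{prop:prop_ch8_MDS_code_property} supplies; the rest is bookkeeping. The main subtlety, if any, is to ensure that we compare the two schemes on \emph{identical} transmission windows and erasure realizations, which is legitimate here because the blocklength $n$ is the same for both schemes and the erasure process is exogenous to the encoder choice.
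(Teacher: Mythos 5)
Your overall strategy matches the paper's: a pathwise coupling on a common erasure realization, reducing the claim to a deterministic comparison of the two instantaneous age trajectories. You compare an all-MDS scheme against an arbitrary linear scheme directly, whereas the paper changes one encoder at a time (swapping the $p$-th code from non-MDS to MDS while holding all others fixed); these are essentially equivalent and your version is arguably cleaner.

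However, there is a genuine gap. You define $l^*(t)$ as the index of the most recently generated message that has been \emph{fully transmitted} and decoded by time $t$, and your comparison rests solely on the set inclusion
\[
\{\text{message }l\text{ decoded under }\mathcal{C}\}\ \subseteq\ \{\text{message }l\text{ decoded under }\mathcal{C}_{MDS}\}.
\]
This only tracks \emph{whether} each message is eventually decoded, not \emph{when}. But in the paper's model the decoder acts on the fly: it outputs the message the instant it has collected $k$ linearly independent coded symbols, which can occur strictly before the $n$-th channel use. The age therefore drops at that decoding instant, not at the end of the transmission window; this is exactly what \Cref{lemma:lemma_ch8_diff_alph_dist_T} and the appearance of $\E(T)$ (with $T\le n$) in \Cref{thm:thm_ch8_diff_alph_avg_age_expr} encode, and what Fig.~\ref{fig:fig_ch8_instantaneous_age_MDS_opt} depicts. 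Because of this, the age function you analyze (drops only at end of transmission) upper-bounds the true age for \emph{both} schemes, so establishing $\bar\Delta_{MDS}(t)\le\bar\Delta_{\mathcal{C}}(t)$ does not transfer to the desired inequality $\Delta_{\epsilon,\mathcal{C}_{MDS}}(t)\le\Delta_{\epsilon,\mathcal{C}}(t)$ without an additional step. The fix is small and uses the same MDS property you already invoke: for each message and each erasure realization, the MDS decoder succeeds at the first time the $k$-th unerased symbol is received, which is the earliest possible decoding instant for \emph{any} linear code, since a non-MDS code must additionally wait until the unerased columns span $\mathbb{F}_q^k$. With "fully transmitted" replaced by "decoded," and the decoding-time dominance $T^{(l)}_{MDS}\le T^{(l)}_{\mathcal{C}}$ added, your $l^*(t)$ comparison and the resulting pointwise age bound go through and your argument becomes a complete and slightly tidier version of the paper's.
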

\begin{proof}
	Fix two positive integers $k$ and $n$, {and let $\mathcal{C}$ and $\mathcal{C}^{\text{MDS}}$ be two $\mathcal{C}(n,k)$ coding schemes such that $\mathcal{C}$ is induced by arbitrary $(n,k)$-linear encoders $(F_l)_{l\geq1}$ and $\mathcal{C}^{\text{MDS}}$ is induced by $(n,k)$-linear encoders $(F_l^{\text{MDS}})_{l\geq1}$ which are MDS.}
	
	Now consider a source with alphabet {$\mathcal{V}^k=\mathbb{F}_q^k$}
	generating messages and sending them through two parallel {$q$EC($\epsilon$) channels} but with the same erasure pattern
	$\mathcal{E}$. For the first
	channel we use the coding scheme {$\mathcal{C}$}, while for the second channel we use the coding scheme {$\mathcal{C}^{\text{MDS}}$. We will show that for every $t\geq 1$, we have $\Delta_{\epsilon,\mathcal{C}^{\text{MDS}}}(t)\leq \Delta_{\epsilon,\mathcal{C}}(t)$. We assume that the initial ages before transmission are equal, i.e., $\Delta_{\epsilon,\mathcal{C}^{\text{MDS}}}(t)= \Delta_{\epsilon,\mathcal{C}}(t)$ for $t< 1$.
	
	For every integer $i\geq 1$ and every $t\in (i,i+1)$, we have $\Delta_{\epsilon,\mathcal{C}}(t)=\Delta_{\epsilon,\mathcal{C}}(i)+t-i$ and $\Delta_{\epsilon,\mathcal{C}^{\text{MDS}}}(t)=\Delta_{\epsilon,\mathcal{C}^{\text{MDS}}}(i)+t-i$. This is because the receiver does not receive any information during the interval $(i,i+1)$. Therefore, it is sufficient to show that $\Delta_{\epsilon,\mathcal{C}^{\text{MDS}}}(i)\leq \Delta_{\epsilon,\mathcal{C}}(i)$ for every integer $i\geq 1$.
	
	Let $l\geq 1$ and assume that $\Delta_{\epsilon,\mathcal{C}^{\text{MDS}}}(i)\leq \Delta_{\epsilon,\mathcal{C}}(i)$ for every $i<(l-1)n+1$. As described in the first paragraph of this section, for every $l\geq 1$, the $n$ channel uses between time $t_{(l-1)n+1}^c=(l-1)n+1$ and time $t_{(l-1)n}^c=ln$ are used to transmit the message $U_{m(l)}$ that was generated at time $t_{m(l)}^s=m(l)+t_0^s$, where $m(l)=\lfloor (l-1)n+1-t_0^s\rfloor$. Let $Z_{(l-1)n+1},Z_{(l-1)n+2},\ldots,Z_{ln}$ be the respective outputs of the $((l-1)n+1)^{th}$, the $((l-1)n+2)^{th}$, \ldots, and the $(ln)^{th}$ channel uses when the code $\mathcal{C}$ is used. Similarly, let $Z_{(l-1)n+1}^{\text{MDS}},Z_{(l-1)n+2}^{\text{MDS}},\ldots,Z_{ln}^{\text{MDS}}$ be the respective outputs of the $((l-1)n+1)^{th}$, the $((l-1)n+2)^{th}$, \ldots, and the $(ln)^{th}$ channel uses when the code $\mathcal{C}^{\text{MDS}}$ is used. For every $i\in\{(l-1)n+1,\ldots,ln\}$, we have
	\begin{equation}
	\label{eq:eq_age_recursive_nonMDS}
	\begin{aligned}
	\Delta_{\epsilon,\mathcal{C}}(i)=\begin{cases}
	\Delta_{\epsilon,\mathcal{C}}((l-1)n)+i-(l-1)n\quad&\text{if }i<i_{\text{dec}}(l),\\
	i-t_{m(l)}^s\quad&\text{if }i\geq i_{\text{dec}}(l),
	 \end{cases}
	\end{aligned}
	\end{equation}
where $i_{\text{dec}}(l)$ is the minimum $i\in\{(l-1)n+1,\ldots,ln\}$ such that $U_{m(l)}$ can be uniquely decoded from $(Z_{(l-1)n+1},\ldots,Z_i)$. If $U_{m(l)}$ cannot be uniquely decoded from $(Z_{(l-1)n+1},\ldots,Z_{ln})$,  we define $i_{\text{dec}}(l)=\infty$. Similarly, for every $i\in\{(l-1)n+1,\ldots,ln\}$, we have
	\begin{align}
		\label{eq:eq_age_recursive_MDS}
	\Delta_{\epsilon,\mathcal{C}}^{\text{MDS}}(i)=\begin{cases}
	\Delta_{\epsilon,\mathcal{C}}^{\text{MDS}}((l-1)n)+i-(l-1)n\quad&\text{if }i<i^{\text{MDS}}_{\text{dec}}(l),\\
	i-t_{m(l)}^s\quad&\text{if }i\geq i^{\text{MDS}}_{\text{dec}}(l),
	 \end{cases}
	\end{align}
where $i^{\text{MDS}}_{\text{dec}}(l)$ is the minimum $i\in\{(l-1)n+1,\ldots,ln\}$ such that $U_{m(l)}$ can be uniquely decoded from $(Z^{\text{MDS}}_{(l-1)n+1},\ldots,Z^{\text{MDS}}_i)$. If $U_{m(l)}$ cannot be uniquely decoded from $(Z_{(l-1)n+1}^{\text{MDS}},\ldots,Z_{ln}^{\text{MDS}})$, we define $i_{\text{dec}}^{\text{MDS}}(l)=\infty$.

Now observe that if $U_{m(l)}$ can be uniquely decoded from $(Z_{(l-1)n+1},\ldots,Z_i)$, then $Z_{(l-1)n+1},\ldots,Z_i$ contain at least $k$ non-erased symbols. Therefore, $(Z_{(l-1)n+1}^{\text{MDS}},\ldots,Z_i^{\text{MDS}})$ contain at least $k$ non-erased symbols and $U_{m(l)}$ can be uniquely decoded from $(Z^{\text{MDS}}_{(l-1)n+1},\ldots,Z^{\text{MDS}}_i)$ because $\mathcal{C}^{\text{MDS}}$ uses MDS codes. Therefore, $i^{\text{MDS}}_{\text{dec}}(l)\leq i_{\text{dec}}(l)$. We have:
\begin{itemize}
\item If $(l-1)n+1\leq i<\min\{ln,i^{\text{MDS}}_{\text{dec}}(l)\}$, we have $\Delta_{\epsilon,\mathcal{C}}^{\text{MDS}}(i)=\Delta_{\epsilon,\mathcal{C}}^{\text{MDS}}((l-1)n)+i-(l-1)n$ and $\Delta_{\epsilon,\mathcal{C}}(i)=\Delta_{\epsilon,\mathcal{C}}((l-1)n)+i-(l-1)n$. From the induction hypothesis we know that $\Delta_{\epsilon,\mathcal{C}}^{\text{MDS}}((l-1)n)\leq \Delta_{\epsilon,\mathcal{C}}((l-1)n)$. Therefore, $\Delta_{\epsilon,\mathcal{C}}^{\text{MDS}}(i)\leq \Delta_{\epsilon,\mathcal{C}}(i)$ for every $(l-1)n+1\leq i<i^{\text{MDS}}_{\text{dec}}(l)$.
\item If $i^{\text{MDS}}_{\text{dec}}(l)\leq i<\min\{ln, i_{\text{dec}}(l)\}$, we have $\Delta_{\epsilon,\mathcal{C}}^{\text{MDS}}(i)=i-t_{m(l)}^{s}$ and $\Delta_{\epsilon,\mathcal{C}}(i)=\Delta_{\epsilon,\mathcal{C}}((l-1)n)+i-(l-1)n$. Since the last decoded message by $\mathcal{C}$ before $t_{(l-1)n+1}^c$ has a timestamp that is earlier than $t_{m(l)}^s$, we have $(l-1)n-\Delta_{\epsilon,\mathcal{C}}((l-1)n)<t_{m(l)}^{s}$. Therefore, $\Delta_{\epsilon,\mathcal{C}}^{\text{MDS}}(i)< \Delta_{\epsilon,\mathcal{C}}(i)$ for every $i^{\text{MDS}}_{\text{dec}}(l)\leq i<\min\{ln, i_{\text{dec}}(l)\}$.
\item For every $i_{\text{dec}}(l)\leq i<ln$, we have $\Delta_{\epsilon,\mathcal{C}}^{\text{MDS}}(i)= \Delta_{\epsilon,\mathcal{C}}(i)$.
\end{itemize}

This implies that $\Delta_{\epsilon,\mathcal{C}}^{\text{MDS}}(i)\leq \Delta_{\epsilon,\mathcal{C}}(i)$ for every $(l-1)n+1\leq i\leq ln$. It follows by induction that $\Delta_{\epsilon,\mathcal{C}}^{\text{MDS}}(i)\leq \Delta_{\epsilon,\mathcal{C}}(i)$ for every integer $i\geq 1$.
}

\end{proof}

%Now we are ready to present the optimal transmission policy for a coding scheme $\mathcal{C}(n,k)$ with $\lambda=\mu=1$.
%\begin{lemma}
%	\label{lemma:lemma_ch8_opt_transmission}
%	For $\lambda=\mu$, at every channel use, a new source symbol is generated. The optimal transmission scheme from an age
%	point of view is to use an MDS codes, and, whenever a message finishes its transmission after $n$ channel uses, we
%	begin transmitting, at the $(n+1)^{th}$ channel use, the source symbol generated at that same time instant. This means
%	that when the packet (or message) generated at time $t_0$ finishes its transmission at time $t'=t_0+n-1$, the next
%	packet to be transmitted at $t_0+n$ is the one generated at this same time instant. All messages generated between
%	$t_0+1$ and $t_0+n-1$ are dropped.
%\end{lemma}
%\begin{proof}
%	Storing any number of packets and sending them will lead to a non-zero waiting time incurred by the stored messages
%	since $n\geq k>1$.
%	Whereas the policy presented in \Cref{lemma:lemma_ch8_opt_transmission} guarantees zero waiting time and whenever the
%	system is idle, the freshest message is transmitted. Moreover, since we are assuming a constant service-time, then
%	using \cite[Lemma 3]{UpdateorWait-2016arxiv} we deduce that the just-in-time policy is optimal. In our case, the
%	just-in-time policy translates into the scheme described in \Cref{lemma:lemma_ch8_opt_transmission}.
%\end{proof}
\Cref{thm:thm_ch8_MDS_optimal} shows that for a given couple $(n,k)$, the optimal coding scheme is the one that uses
only MDS codes. However,
an explicit construction of such codes is not available for all values of $(n,k)$. In the rest of this paper, we use random codes to
give an upper bound on the optimal average age. The use of random coding to construct fountain-like
codes was used by Shamai et al. in \cite{ShamaiTelatarVerdu2007}. The authors of \cite{ShamaiTelatarVerdu2007} showed that without any randomness
we cannot properly define the notion of fountain capacity because there is always a case where the deterministic fountain codes cannot
achieve any positive rate with an error probability tending to $0$. Nevertheless, we use the rateless (or fountain) codes, previously adopted
in \cite{NajmYatesSoljanin-ISIT2017}, to give a lower bound on the optimal achievable average-age $\Delta_{\epsilon}$. As shown in \cite{ShamaiTelatarVerdu2007},
these codes cannot be implemented in practice, {this is} why we do not consider them {as} part of the possible coding
schemes. 

\begin{figure*}[t]
	\centering
	\begin{tikzpicture}[scale=0.7,font=\scriptsize]
		\path[pattern=north west lines,pattern color=red!15]
		(2.6,0)--(2.6,1.6)--(6.2,5.2)--(6.2,1.2)--(6.6,1.6)--(6.6,0) -- cycle;
		\path[pattern=horizontal lines, pattern color=blue!15]
		(6.6,0)--(6.6,1.6)--(7.8,2.8)--(7.8,0.8)--(8.6,1.6)--(8.6,0)--cycle;
		\path[pattern=north west lines,pattern color=red!15]
		(8.6,0)--(8.6,1.6)--(12.6,5.6)--(12.6,0) -- cycle;
		\path[pattern=horizontal lines, pattern color=blue!15]
		(-1.4,0)--(-1.4,1.6)--(2.6,5.6)--(2.6,0)--cycle;
		
		% horizontal axis
		\draw[->] (-1.4,0) -- (13.5,0) node[anchor=north] {$t$};
		
		% labels
		\draw[color=red,thick]	(-1,3pt) -- (-1,-3pt);
		\draw[color=red,thick]	(-0.6,3pt) -- (-0.6,-3pt);
		\draw[color=blue] 	(-0.6,0) node {x};
		\draw[color=red,thick]	(-0.2,3pt) -- (-0.2,-3pt);
		\draw[color=red,thick]	(0.2,3pt) -- (0.2,-3pt);
		\draw[color=blue] 	(0.2,0) node {x};
		\draw[color=red,thick]	(0.6,3pt) -- (0.6,-3pt);
		\node at (0.6,0) {o};
		\draw[dashed]		(0.6,-3pt) -- (0.6,-1.2);

		\draw[color=red,thick]	(1,3pt) -- (1,-3pt);
		\draw[color=blue] 	(1,0) node {x};
		\draw[color=blue] 	(1,0) node [anchor=north] {$t_1$};
		\draw[color=red,thick]	(1.4,3pt) -- (1.4,-3pt);
		\draw[color=red,thick]	(1.8,3pt) -- (1.8,-3pt);	
		\draw[color=blue] 	(1.8,0) node {x};
		\draw[color=red,thick]	(2.2,3pt) -- (2.2,-3pt);
		\draw[color=red,thick]	(2.6,3pt) -- (2.6,-3pt);
		\draw[dashed]		(2.6,-0.55) -- (2.6,-1.2);
		\draw[color=red] 	(2.6,0) node [anchor=north] {$t'_1$};

		\draw[color=red,thick]	(3,3pt) -- (3,-3pt);
		\draw[color=blue] 	(3,0) node {x};
		\draw[color=red,thick]	(3.4,3pt) -- (3.4,-3pt);
		\draw[color=blue] 	(3.4,0) node {x};
		\draw[color=red,thick]	(3.8,3pt) -- (3.8,-3pt);			
		\draw[color=blue] 	(3.8,0) node {x};
		\draw[color=red,thick]	(4.2,3pt) -- (4.2,-3pt);
		\draw[color=blue] 	(4.2,0) node {x};
		\draw[color=red,thick]	(4.6,3pt) -- (4.6,-3pt);
		\draw[dashed]		(4.6,-3pt) -- (4.6,-1.2);
		
		\draw[color=red,thick]	(5,3pt) -- (5,-3pt);
		\draw[color=blue] 	(5,0) node [anchor=north] {$t_2$};
		\draw[color=red,thick]	(5.4,3pt) -- (5.4,-3pt);
		\draw[color=red,thick]	(5.8,3pt) -- (5.8,-3pt); 
		\node at (5.8,0) {o};
		\draw[color=red,thick]	(6.2,3pt) -- (6.2,-3pt);
		\draw[color=red,thick]	(6.6,3pt) -- (6.6,-3pt);
		\node at (6.6,0) {o};
		\draw[dashed]		(6.6,-0.55) -- (6.6,-1.2);
		\draw[color=red] 	(6.6,0) node [anchor=north] {$t'_2$};

		\draw[color=red,thick]	(7,3pt) -- (7,-3pt);
		\draw[color=blue] 	(7,0) node [anchor=north] {$t_3$};
		\draw[color=red,thick]	(7.4,3pt) -- (7.4,-3pt);
		\draw[color=red,thick]	(7.8,3pt) -- (7.8,-3pt);
		\draw[color=red,thick]	(8.2,3pt) -- (8.2,-3pt);
		\node at (8.2,0) {o};
		\draw[color=red,thick]	(8.6,3pt) -- (8.6,-3pt);
		\draw[color=blue] 	(8.6,0) node {x};
		\draw[dashed]		(8.6,-0.55) -- (8.6,-1.2);
		\draw[color=red] 	(8.6,0) node [anchor=north] {$t'_3$};
		
		\draw[color=red,thick]	(9,3pt) -- (9,-3pt);
		\draw[color=blue] 	(9,0) node {x};
		\draw[color=red,thick]	(9.4,3pt) -- (9.4,-3pt);
		\draw[color=red,thick]	(9.8,3pt) -- (9.8,-3pt);
		\draw[color=red,thick]	(10.2,3pt) -- (10.2,-3pt);
		\draw[color=blue] 	(10.2,0) node {x};
		\draw[color=red,thick]	(10.6,3pt) -- (10.6,-3pt);
		\draw[color=blue] 	(10.6,0) node {x};
		\draw[dashed]		(10.6,-3pt) -- (10.6,-1.2);

		\draw[color=red,thick]	(11,3pt) -- (11,-3pt);
		\draw[color=blue] 	(11,0) node [anchor=north] {$t_4$};
		\draw[color=red,thick]	(11.4,3pt) -- (11.4,-3pt);
		\draw[color=blue] 	(11.4,0) node {x};
		\draw[color=red,thick]	(11.8,3pt) -- (11.8,-3pt);
		\draw[color=red,thick]	(12.2,3pt) -- (12.2,-3pt);
		\draw[color=blue] 	(12.2,0) node {x};
		\draw[color=red,thick]	(12.6,3pt) -- (12.6,-3pt);
		\draw[dashed]		(12.6,-0.55) -- (12.6,-1.2);
		\draw[color=red] 	(12.6,0) node [anchor=north] {$t'_4$};
		
		% vertical axis
		\draw[->] (-1.4,0) node[anchor=north] {$0$}-- (-1.4,8) node[anchor=south] {$\Delta_{\epsilon,\mathcal{C}}(t)$};
		\draw (-1.4cm+3pt,1.6) -- (-1.4cm-3pt,1.6) node[anchor=east] {$n-1$};
		
		\draw[thick] (-1.4,1.6) -- (2.6,5.6) -- (2.6,1.6) -- (6.2,5.2) -- (6.2,1.2) -- (7.8,2.8) -- (7.8,0.8) --
		(12.6,5.6) -- (12.6,1.6) -- (13,2);

		\draw[dotted,thick,color=blue] (2.6,1.6) -- (2.6,0);
		\draw[dotted,thick,color=blue] (6.6,1.6) -- (6.6,0);
		\draw[dotted,thick,color=blue] (8.6,1.6) -- (8.6,0);
		\draw[dotted,thick,color=blue] (12.6,1.6) --(12.6,0);
		
		% Interdeparture time
		\draw[<->,color=red] (-1.4,-1.3) -- node[anchor=north] {$Y_1$} (2.6,-1.3);
		\draw[<->,color=red] (2.6,-1.3) -- node[anchor=north] {$Y_2$} (6.6,-1.3);
		\draw[<->,color=red] (6.6,-1.3) -- node[anchor=north] {$Y_3$} (8.6,-1.3);
		\draw[<->,color=red] (8.6,-1.3) -- node[anchor=north] {$Y_4$} (12.6,-1.3);
		
		% System time 
		\draw[<->] (1,-0.7) -- node[anchor=north] {$T_1$} (2.6,-0.7);
		\draw[<->] (5,-0.7) -- node[anchor=north] {$T_2$} (6.2,-0.7);
		\draw[<->] (7,-0.7) -- node[anchor=north] {$T_3$} (7.8,-0.7);
		\draw[<->] (11,-0.7) -- node[anchor=north] {$T_4$} (12.6,-0.7);
		
		% Q's
		\node at (2,4) {$Q_1$};
		\node at (5.6,3.5) {$Q_2$};
		\node at (7.4,1.5) {$Q_3$};
		\node at (11.5,3.5) {$Q_4$};

		%Legend
		\draw (6,6.2) rectangle (14,9.2);
		\draw[thick,color=red] (6.3cm,-3pt+8.8cm)--(6.3cm,8.8cm+3pt);
		\node[anchor=west,align=left] at (6.6,8.8) {Channel use/\\Reception of a linearly independent symbol};
		\draw[dashed] (6.1,7.8) -- (6.6,7.8) node [anchor=west,align=center] {Time of end of transmission/\\Reception of the $n^{th}$
		symbol};
		\draw[color=red] (6.35,7cm-3pt)--(6.35,7cm+3pt);
		\draw[color=blue] (6.35,7) node {x};
		\node[anchor=west] at (6.6,7) {Channel erasure};
		\draw[color=red] (6.35,6.4cm-3pt)--(6.35,6.4cm+3pt);
		\draw (6.35,6.4) node {o};
		\node[anchor=west] at (6.6,6.4) {Reception of a linearly dependent symbol};
	\end{tikzpicture}
	\caption{Variation of the instantaneous age when using a random code $\mathcal{C}$ with $n=5$, $k=3$. We assume we
	begin observing after a successful reception. Since $\lambda=\mu=1$ then the interval between channel uses is one second. {Note that $t_0^s=0$ in the shown example.}}
	\label{fig:fig_ch8_instantaneous_age}
\end{figure*}
\subsection{The Random Code}
\label{subsec:subsec_ch8_random_codes}
Consider a  $\mathcal{C}(n,k)$ coding scheme. The encoder-decoder pair {$(F_l,G_l)$}, corresponding to
the $l^{th}$ message to be
transmitted, is constructed as follows: Since we are interested in linear codes, we use the generator matrix
in order to create our code. For that, we choose the $n$ columns of the generator matrix $\textbf{G}_l$ independently and
uniformly at random from the set {$\mathcal{V}^k\setminus\{0^k\}=\mathbb{F}_q^k\setminus\{0^k\}$}, where $0^k$ is the sequence of $k$ zeros. We denote
by {$(\textbf{g}_1^{(l)}, \textbf{g}_2^{(l)},\ldots, \textbf{g}_n^{(l)})$} 
	the $n$ columns\footnote{In this paper, we assume all
vectors to be column vectors.}   of $\textbf{G}_l$. Thus

\begin{equation}
	\textbf{G}_l = \begin{bmatrix}
		\textbf{g}_1^{(l)}& \textbf{g}_2^{(l)}& \cdots & \textbf{g}_n^{(l)}
	\end{bmatrix}.
\end{equation}
Once this matrix is generated, it is shared between the encoder and the decoder. For each new message to be
transmitted, we generate a new generator matrix. However, the encoder and decoder work in a similar fashion for all messages:%The {\color{red} coding scheme works} as follows:
\begin{itemize}
	\item  Let  {$\textbf{u}^{(l)}=\begin{bmatrix} u_1^{(l)}&u_2^{(l)}&\cdots&u_k^{(l)}\end{bmatrix}^T\in\mathcal{V}^k=\mathbb{F}_q^k$} be the
$l^{th}$ message to be
sent. Then, {at the $((l-1)n+i)^{th}$ channel use, we transmit the coded symbol %$z_i$
		 $z_{(l-1)n+i}=z^{(l)}_i:=\sum_{j=1}^k u_j^{(l)} g_{ji}^{(l)}$}, %\]
		  where ${g_{ji}^{(l)}}$ is the $j^{th}$ element of ${\textbf{g}_i^{(l)}}$. For each message $\textbf{u}^{(l)}$, we send $n$ coded symbols. Hence, the encoder is given by
		%\begin{align*}
		{$F_l:\mathcal{V}^k\to\mathcal{V}^n$, with
$\textbf{z}^{(l)}=F_l(\textbf{u}^{(l)})=(\textbf{u}^{(l)})^T\textbf{G}_l$}.
		%\end{align* 
		\item The decoder decodes on the fly. Whenever it receives $k$ linearly independent {non-erased} coded symbols, it decodes
			the message. Otherwise, it declares the packet to be erased.
\end{itemize}

We emphasize the fact that the matrices $(\textbf{G}_l)_{l\geq 1}$ are generated in a i.i.d. fashion, which means that the linear codes corresponding to different messages can (and are likely to) be different.

\subsection{Average Age of Random Codes}
Fix the couple $(n,k)$ and let {$\mathcal{C}$} be a {random $\mathcal{C}(n,k)$} coding scheme generated as described in
\Cref{subsec:subsec_ch8_random_codes}. We define
$\Delta_{\epsilon,(n,k)}$ to be the expected average age of the coding scheme induced by a random linear $(n,k)$-scheme generated
as above.
\begin{definition}
	\label{def:def_ch8_diff_alph_expect_avg_age}
{	For every $n\geq k$, and every $t\geq 0$, define
	\begin{equation}
		\label{eq:eq_ch8_diff_alph_expect_avg_age}
		\Delta_{\epsilon,(n,k)} = \E\lp \Delta_{\epsilon,\mathcal{C}}\rp,
	\end{equation}
	where the expectation in \eqref{eq:eq_ch8_diff_alph_expect_avg_age} is taken over the random $\mathcal{C}(n,k)$ coding scheme $\mathcal{C}$, and over the randomness of the erasure patterns of the $q$EC($\epsilon$) channels.}
\end{definition}

Due to the ergodicity of the system, almost surely {(over the randomly generated $\mathcal{C}$ and over the random erasure patterns)}, we have
\begin{equation}
	\label{eq:eq_ch8_diff_alph_expect_avg_age_vs_avg_age}
	{
	\Delta_{\epsilon,\mathcal{C}}=\Delta_{\epsilon,(n,k)}.
	}
\end{equation}
{	We will formally prove \eqref{eq:eq_ch8_diff_alph_expect_avg_age_vs_avg_age} in \Cref{lemma:lemma_ch8_diff_alph_avg_age_ergodic}.}

%Equation \eqref{eq:eq_ch8_diff_alph_expect_avg_age_vs_avg_age} is also due to the fact that the process
%$\Delta_{\epsilon,\mathcal{C}}(t)$ is mean-ergodic (see \Cref{def:def_ch2_mean_ergodic}) as we will prove later. 
The contribution of the random coding argument in this context is the following: {If we show that, for a given $n\geq k$, we have $\Delta_{\epsilon,(n,k)}<\infty$, then
there must exist a linear $(n,k)$-scheme $\mathcal{C}^{(n)}$ such that almost surely (over the random erasure patterns), $\Delta_{\epsilon,\mathcal{C}^{(n)}}=\Delta_{\epsilon,(n,k)}<\infty$. In fact, as we mentioned above, for almost all $(n,k)$-schemes $\mathcal{C}$ and almost all erasure patterns, we have $\Delta_{\epsilon,\mathcal{C}}=\Delta_{\epsilon,(n,k)}<\infty$. {Thus,
the optimal average age $\Delta_\epsilon$, and the optimal average age among linear block codes $\Delta_\epsilon^{lin}$, satisfy 
\begin{equation}
\Delta_\epsilon\leq \Delta_\epsilon^{lin}\leq \Delta_{\epsilon,(n,k)},\quad\forall n\geq k.
\end{equation}
Therefore,
\begin{equation}
	\label{eq:eq_ch8_diff_alph_upper_bound_opt_age}
	\Delta_{\epsilon}\leq \Delta_\epsilon^{lin}\leq\min_{n\geq k} \Delta_{\epsilon,(n,k)}.
\end{equation}
}
}
Equation \eqref{eq:eq_ch8_diff_alph_upper_bound_opt_age} gives an upper bound on the optimal average age. In the rest of this
paper we will focus on characterizing this bound.

\subsection{Exact Upper Bound on the Optimal Average Age}
\label{subsec:subsec_ch8_bounds_opt_age}
\subsubsection{Preliminaries}
Let $\mathcal{C}$ be a randomly generated $(n,k)$-scheme. Fig.~\ref{fig:fig_ch8_instantaneous_age} illustrates the variation of
the instantaneous age $\Delta_{\epsilon,\mathcal{C}}(t)$ when $n=5$ and $k=3$. Without loss of generality, we assume that we begin
observing right after the reception of a successful packet. We denote by $t_j$ the generation time of the $j^{th}$ successful
packet and by $t'_j$ the end of transmission time of this packet. {Assume that the $j^{th}$ successful message is the $l^{th}$ transmitted message. We have:
\begin{itemize}
\item $t_j=t_{m(l)}^s=m(l)T_s+t_0^s=m(l)+t_0^s$, where $m(l)=\lfloor (l-1)n+1-t_0^s\rfloor$.
\item $t_j'=t_{nl}^c=nlT_c=nl$.
\end{itemize}
Therefore, the instantaneous age at the end of transmission of the $j^{th}$ successful package is 
\begin{equation}
\label{eq:eqAgeAtTheEndOfSuccess}
\Delta_{\epsilon,\mathcal{C}}(t'_j)=t_j'-t_j=nl-m(l)-t_0^s=nl-\lfloor (l-1)n+1-t_0^s\rfloor-t_0^s=n-1-t_0^s-\lfloor -t_0^s\rfloor=n-1+[-t_0^s].
\end{equation}
}

In the scenario depicted in Fig.~\ref{fig:fig_ch8_instantaneous_age}, {we assume that $t_0^s=0$. The} first packet $\textbf{u}^{(1)}=({u}_1^{(1)},\ldots,{u}_1^{(k)})$ is generated and encoded into a
codeword $\textbf{z}^{(1)}=\lp{z}_1^{(1)},\ldots,{z}_n^{(1)}\rp=\lp(\textbf{u}^{(1)})^T\textbf{g}_1^{(1)},\ldots,(\textbf{u}^{(1)})^T\textbf{g}_n^{(1)}\rp$ of length $n=5$ at time $t=1$. At that same instant, $z_1^{(1)}$, the first symbol of
$\textbf{z}^{(1)}$, is sent and received at the monitor. Since it is the first symbol, $z_1^{(1)}$ is linearly independent.\footnote{By ``$z_1^{(1)}$ is linearly independent'', we just mean that the corresponding column $\mathbf{g}_1^{(1)}$ of the generator matrix $\mathbf{G}_1$ forms a linearly independent family of vectors. This is true simply because $\mathbf{g}_1^{(1)}\neq 0^k$.} At time
$t=2$, the coded symbol $z_2^{(1)}$ is erased but the coded symbol $z_1^{(3)}$, which is linearly independent from\footnote{Here, we just mean that in the particular example that is illustrated in Fig.~\ref{fig:fig_ch8_instantaneous_age}, the random matrix $\mathbf{G}_1$ was such that $\mathbf{g}_3^{(1)}$ is linearly independent from $\mathbf{g}_1^{(1)}$.} $z_1^{(1)}$, is received at
time $t=3$. The fourth coded symbol is also erased and the last coded symbol $z_5^{(1)}$ is received. However, as Fig.~\ref{fig:fig_ch8_instantaneous_age} shows, the received symbol $z_5^{(1)}$ is linearly dependent
on the previously received symbols, namely $z_1^{(1)}$ and $z_1^{(3)}$, i.e., $\mathbf{g}_5^{(1)}$ is linearly dependent on $\{\mathbf{g}_1^{(1)},\mathbf{g}_3^{(1)}\}$. The first packet $\textbf{u}^{(1)}$ is declared erased by the
decoder because it did not receive $3$ linearly independent symbols, and $\Delta_{\epsilon,\mathcal{C}}(t)$ increases linearly in the interval $t\in[1,6)$. The packet generated at $t=t_1=6$
is a successful update since the monitor receives $k=3$ linearly independent symbols at times $t=7$, $t=9$ and $t=10$. Therefore, 
$\Delta_{\epsilon,\mathcal{C}}(t)$ drops to $10-6=4$ at time $t=t'_1=10$. Note that, for a given successful packet,
once $k$ linearly independent coded symbols are received, any additional coded symbol must be linearly dependent on them.

In this section we use the following notation:% slightly different than the one introduced in \Cref{ch2}:
\begin{itemize}
	\item $Y_j=t'_j-t'_{j-1}$ is the interdeparture time between the ${(j-1)}^{th}$ and $j^{th}$ successfully received updates.
	\item $T_j$ is the number of channel uses between the decoding instant of the $j^{th}$ successful packet and its generation
		time $t_j$.
	\item $R(\tau)=\max\left\{j:t'_{j}\leq\tau\right\}$ is the number of successfully received updates in the interval $[0,\tau]$. 
	\item {Let $\mathbf{u}^{(l)}\in\mathbb{F}_q^k$ be the $l^{th}$ transmitted packet (not necessarily successful). Imagine that we generate infinitely many vectors $(\mathbf{g}_{i}^{(l)})_{i\geq 1}$ independently and uniformly in $\mathbb{F}_q^k\setminus\{0\}$. Imagine also that we transmit the coded symbol $z_i^{(l)}=(\mathbf{u}^{(l)})^T\mathbf{g}_{i}^{(l)}$ over a $q$EC($\epsilon$) channel to a virtual monitor for every $i\geq 1$. In reality, we only transmit $z_1^{(l)},\ldots,z_n^{(l)}$ to the real monitor. In other words, the first $n$ symbols are really transmitted and the rest are virtually transmitted. Let $B_l$ be the number of channel uses (or sent coded symbols)
		in order for the virtual monitor to receive exactly $k$ linearly independent equations (coded symbols). The $l^{th}$ packet is correctly
		decoded at the real monitor if and only if $B_l\leq n$.}
\end{itemize}
Since the channel is memoryless and the different codes used in the scheme $\mathcal{C}$ are generated independently and in the
same fashion, then the process {$(B_l)_{l\geq 1}$} is i.i.d with a distribution identical to the random variable $B$ {that we describe in the following subsection}.

\subsubsection{The Distribution of $B$}

\begin{figure*}[ht]
	\centering
\begin{tikzpicture}[>=stealth',shorten >=1pt,auto,node distance=2cm, semithick]
	\tikzstyle{every state}=[fill=red,draw=none,text=white]
	\node[state] (A)                     {$0$};
	\node[state,node distance=3.5cm] (B) [right of=A]   {$1$};
	%\node[state] (C) [right of=B]  	     {$2$};
	\node        (D) [right of=B] 	     {$\cdots$};
	\node[state] (E) [right of=D]	     {$s$};
	\node[node distance=3.5cm]       (G) [right of=E]   {$\cdots$};
	%\node[state] (F) [right of=E]	     {$k-1$};
	\node[state] (F) [right of=G]   {$k$};
	
	\path [->] (A) edge 		     node[anchor=north] {$p_0=\bar{\epsilon}=1-\epsilon$} (B);
	\path [->] (A) edge [loop above]      node {$\epsilon$} (A);
	
	\draw [->] (B) edge [loop above]     node[anchor=south] {$1-p_1$} (B);
	\path [->] (B) edge 		     node[anchor=north] {$p_1$} (D);% = \frac{\bar{\epsilon}q(q^{k-1}-1)}{q^k-1}$} (C);
	
	%\draw [->] (C) edge [loop above]     node[anchor=south] {$1-p_2$} (C);
	%\path [->] (C) edge 		     node[anchor=north] {$p_2$} (D); %=\frac{\bar{\epsilon}q^2(q^{k-2}-1)}{q^k-1}$} (D);
	
	\draw [->] (E) edge [loop above]     node[anchor=south] {$1-p_s$} (E);
	\path [->] (D) edge       	     node[anchor=north] {$p_{s-1}$} (E); %=\frac{\bar{\epsilon}q^3(q^{k-3}-1)}{q^k-1}$} (E);
	\path [->] (E) edge       	     node[anchor=north] {$p_{s}=\frac{\bar{\epsilon}q^s(q^{k-s}-1)}{q^k-1}$} (G);
	
	\draw [->] (F) edge [loop above]     node[anchor=south] {$1$} (F);
	\path [->] (G) edge		     node[anchor=north] {$p_{k-1}$} (F);
\end{tikzpicture}
\caption{Markov chain representing the dimension of a codeword at the (virtual) receiver.}
\label{fig:fig_ch8_diif_alph_codeword_dim}
\end{figure*}

Fig.~\ref{fig:fig_ch8_diif_alph_codeword_dim} shows the Markov chain that represents the dimension, at the {(virtual)} receiver, of the codeword relative to a certain update.

The monitor receives the first coded symbol of a new codeword with probability $p_0=\bar{\epsilon}=1-\epsilon$ and
hence the dimension of this codeword at the receiver jumps to $1$. If the first coded symbol is erased then the dimension of
the codeword remains at $0$. If the monitor has already received $s$ linearly independent coded symbols, then it will receive the
$(s+1)^{th}$ linearly independent coded symbol if:
\begin{itemize}
\item[(i)] the next transmitted coded symbol is not erased, and,
\item[(ii)] the next
transmitted coded symbol is linearly independent of all previously received symbols.
\end{itemize}
Event $(i)$ occurs with probability
$\bar{\epsilon}=1-\epsilon$. For event $(ii)$, notice that the symbols that are linearly dependent with the received symbols form a subspace of dimension\footnote{Recall that $s$ linearly independent coded symbols have been received.} $s$, hence there are $q^s$ such symbols. Therefore, the number of nonzero symbols that are linearly dependent with the received symbols is $q^s -1$. Now since coded symbols are generated uniformly at random from the set of nonzero symbols, we can see that event $(ii)$ happens with probability $\frac{q^k-1 - (q^s - 1)}{q^k-1}=\frac{q^s(q^{k-s}-1)}{q^k-1}$. Hence, for a given
message, the dimension of its codeword at the receiver jumps from $s$ to $s+1$ with probability 
\begin{equation}
	p_s=\frac{\bar{\epsilon}q^s(q^{k-s}-1)}{q^k-1},
\end{equation}
where $0\leq s\leq k-1$. If the next transmitted coded symbol is erased or linearly dependent on the previously received coded
symbols, then the dimension of the codeword at the monitor remains at $s$. As previously discussed, once the monitor receives $k$ linearly independent coded symbols, the dimension
of the codeword remains at $k$ and all subsequent coded symbols are linearly dependent on the previously non-erased coded
symbols.

From the above description, we can deduce that $B$ is the number of steps before reaching state $k$ for the first time.
\begin{remark}
	\label{rmk:rmk_ch8_diff_alph_lin_ind}
	Since $p_s=\frac{\bar{\epsilon}(q^{k}-q^s)}{q^k-1}$, then $p_s$ is a decreasing function of $s$. This means that
	whenever the decoder receives a non-erased coded symbol that is linearly independent from all previously received
	coded symbols, and the system jumps to state $s$, then it becomes harder to receive a new linearly independent coded
	symbol. {This is why, on average, the system spends more time} in state $s$ than in previous states.
\end{remark}
\begin{definition}
	\label{def:def_ch8_diff_alph_L}
	Let $L_s$ be the number of trials {needed} to pass from state $s$ to state $s+1$ in
	Fig.~\ref{fig:fig_ch8_diif_alph_codeword_dim}, where $0\leq s\leq k-1$. {It is easy to see that} $L_s$ has a geometric distribution with success
	probability $p_s=\frac{\bar{\epsilon}q^s(q^{k-s}-1)}{q^k-1}$. Thus,
	\begin{equation}
\Prob(L_s=i) = (1-p_s)^{i-1}p_s,\quad i=1,2,3,\ldots	
	\end{equation}
\end{definition}
\begin{corollary}
	\label{cor:cor_ch8_diff_alph_B}
	From \Cref{def:def_ch8_diff_alph_L}, we can write
	\begin{equation}
		\label{eq:eq_ch8_diff_alph_B_cor}
		B = \sum_{s=0}^{k-1} L_s,
	\end{equation}
	where {$(L_s)_{0\leq s<k}$} are independent.
\end{corollary}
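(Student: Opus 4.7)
The plan is to apply the strong Markov property to the chain depicted in Fig.~\ref{fig:fig_ch8_diif_alph_codeword_dim}. The key structural observation is that from any transient state $s\in\{0,1,\ldots,k-1\}$ only two transitions are possible: a self-loop with probability $1-p_s$, or a forward jump to $s+1$ with probability $p_s$. Consequently, the dimension process is monotone non-decreasing, state $k$ is absorbing, and $B$ — the total number of channel uses needed before the codeword attains full dimension $k$ at the receiver — is precisely the first hitting time of state $k$ starting from state $0$.

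Given this monotone structure, the decomposition is immediate. Let $\sigma_{-1}=0$ and let $\sigma_s$ denote the trial index at which the chain first enters state $s+1$; then the sojourn in state $s$ has length $L_s = \sigma_s - \sigma_{s-1}$ by \Cref{def:def_ch8_diff_alph_L}, and summing telescopically yields $B = \sigma_{k-1} = \sum_{s=0}^{k-1} L_s$, which is \eqref{eq:eq_ch8_diff_alph_B_cor}. For the independence claim I would invoke the fact that the underlying Bernoulli trials are i.i.d., since (i) the channel is memoryless so erasures at distinct slots are independent, and (ii) the generator matrix columns are drawn independently and uniformly at random from $\mathcal{V}^k\setminus\{0^k\}$. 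By the Markov property, once the chain reaches state $s$, the waiting time $L_s$ until the next forward jump depends only on the subsequent trials and not on the history $(L_0,\ldots,L_{s-1})$; moreover its distribution is the geometric law with parameter $p_s$ stated in \Cref{def:def_ch8_diff_alph_L}, so the full collection $\{L_s\}_{s=0}^{k-1}$ is mutually independent.

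The only nontrivial point I would pause on is justifying that the dynamics really do reduce to a Markov chain indexed by the scalar dimension $s$ rather than by the specific $s$-dimensional subspace spanned at the receiver. This reduction rests on a symmetry argument: conditional on having received $s$ linearly independent coded symbols (spanning some $s$-dimensional subspace $W\subseteq\mathbb{F}_q^k$), a fresh column drawn uniformly from $\mathcal{V}^k\setminus\{0^k\}$ lies outside $W$ with probability $(q^k-q^s)/(q^k-1)$ irrespective of which particular $W$ was formed. Combining this with the independent erasure probability $\bar\epsilon$ gives the forward transition probability $p_s=\bar\epsilon q^s(q^{k-s}-1)/(q^k-1)$, confirming that the dimension alone is a sufficient state and legitimizing the Markov-chain description underpinning the whole argument.
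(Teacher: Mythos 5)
Your argument is correct, and in fact the paper offers no proof for this corollary at all---it is stated as an immediate consequence of \Cref{def:def_ch8_diff_alph_L}. You have supplied precisely the reasoning the paper leaves implicit: the dimension process is a Markov chain that only self-loops or advances by one, so the first-passage time $B$ to state $k$ telescopes into the sojourn times $L_0,\dots,L_{k-1}$, and the strong Markov property gives their mutual independence and geometric laws. Your closing remark---that the scalar dimension $s$ is a sufficient statistic because a uniformly random nonzero column lies outside \emph{any} fixed $s$-dimensional subspace with the same probability $(q^k-q^s)/(q^k-1)$---is exactly the lumping argument that legitimizes the chain of Fig.~\ref{fig:fig_ch8_diif_alph_codeword_dim}, and is worth making explicit as you did.
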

\begin{lemma}
	\label{lemma:lemma_ch8_diff_alph_distribution_B}
	The moment generating function of the random variable $B$ is
	\begin{equation}
		\label{eq:eq_ch8_diff_alph_mgf_B}
		\resizebox{0.49\textwidth}{!}{$\displaystyle\phi_B(t) = \E\lp e^{tB}\rp = \lp\prod_{s=0}^{k-1} \lp q^k-q^s\rp\rp\lp\prod_{s=0}^{k-1}
		\frac{\bar{\epsilon}e^t}{q^k-1+e^t(1-\epsilon q^k-\bar{\epsilon}q^s)}\rp.$}
		%\lp\bar{\epsilon}e^t\rp^k \frac{q^{k^2}}{\lp (q^k-1)(1-\epsilon
		%	e^t)+\bar{\epsilon}e^t\rp^k}\prod_{s=0}^{k-1}
		%	\frac{1-q^{s-k}}{1-\frac{\bar{\epsilon}e^tq^s}{(q^k-1)(1-\epsilon e^t)+\bar{\epsilon}e^t}}.
	\end{equation}
\end{lemma}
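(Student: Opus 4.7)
The plan is to exploit the decomposition $B=\sum_{s=0}^{k-1}L_s$ from Corollary~1 together with the mutual independence of the $L_s$'s. Since the moment generating function of a sum of independent random variables factorises, it is enough to compute $\phi_{L_s}(t)=\E[e^{tL_s}]$ for each $s$ and then take the product.

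First I would recall the standard MGF of a geometric random variable supported on $\{1,2,3,\dots\}$ with success probability $p$: summing the geometric series gives
\begin{equation*}
\E[e^{tL}] = \sum_{l\geq 1}(1-p)^{l-1}p\,e^{tl} = \frac{p\,e^t}{1-(1-p)e^t},
\end{equation*}
valid for $t$ in a neighbourhood of $0$ (so that $(1-p)e^t<1$). Applying this to $L_s$ with $p_s=\frac{\bar\epsilon(q^k-q^s)}{q^k-1}$ and clearing the factor $q^k-1$ from both numerator and denominator, I obtain
\begin{equation*}
\phi_{L_s}(t)=\frac{\bar\epsilon(q^k-q^s)\,e^t}{\,q^k-1-\bigl(q^k-1-\bar\epsilon(q^k-q^s)\bigr)e^t\,}.
\end{equation*}
The routine step is then to simplify the bracketed coefficient of $e^t$ in the denominator: writing $\bar\epsilon=1-\epsilon$ gives $q^k-1-\bar\epsilon(q^k-q^s)=\epsilon q^k+\bar\epsilon q^s-1$, so the denominator becomes $q^k-1+e^t(1-\epsilon q^k-\bar\epsilon q^s)$, matching the factor appearing in the statement.

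Next I would invoke independence of the $L_s$'s (granted by Corollary~1) to write $\phi_B(t)=\prod_{s=0}^{k-1}\phi_{L_s}(t)$. Pulling the factor $\bar\epsilon(q^k-q^s)$ in each numerator apart into the $s$-independent part $\bar\epsilon e^t$ and the $s$-dependent part $q^k-q^s$ yields the claimed product form
\begin{equation*}
\phi_B(t)=\left(\prod_{s=0}^{k-1}(q^k-q^s)\right)\left(\prod_{s=0}^{k-1}\frac{\bar\epsilon e^t}{q^k-1+e^t(1-\epsilon q^k-\bar\epsilon q^s)}\right).
\end{equation*}

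There is really no conceptual obstacle here; the only thing that deserves a small sanity check is the algebraic identity $q^k-1-\bar\epsilon(q^k-q^s)=1-\epsilon q^k-\bar\epsilon q^s$ up to the sign convention used when writing the denominator as $q^k-1+e^t(\cdots)$ rather than $q^k-1-e^t(\cdots)$. I would also briefly mention that the MGF exists on an open interval around $0$, namely whenever $(1-p_s)e^t<1$ for every $s$, which reduces to $t<-\log\max_s(1-p_s)=-\log(1-p_{k-1})$ since $p_s$ is decreasing in $s$ (cf.\ Remark~1).
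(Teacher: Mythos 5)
Your proof is correct and follows essentially the same route as the paper's: decompose $B=\sum_{s=0}^{k-1}L_s$, use independence to factorise the MGF, compute each geometric factor $\frac{p_se^t}{1-(1-p_s)e^t}$, and substitute $p_s$. The algebraic simplification of the denominator that you carry out explicitly is left implicit in the paper, and your remark on the domain of validity of the MGF is a reasonable (if unnecessary) addition.
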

\begin{proof}
	\begin{align}
		\E\lp e^{tB}\rp = \E\lp e^{t\sum_{s=0}^{k-1} L_s}\rp &= \prod_{s=0}^{k-1} \E\lp e^{tL_s}\rp\nn
								     &= {\prod_{s=0}^{k-1} \sum_{i=1}^\infty e^{ti}
								     (1-p_s)^{i-1}p_s}\nn
								     &= \prod_{s=0}^{k-1} \frac{p_se^t}{1-(1-p_s)e^t},
	\end{align}
	where the {second} equality {follows from the fact that	 $(L_s)_{0\leq s<k}$} are mutually independent. Replacing $p_s$ by its expression
	$p_s=\frac{\bar{\epsilon}q^s(q^{k-s}-1)}{q^k-1}$, we obtain \eqref{eq:eq_ch8_diff_alph_mgf_B}.
\end{proof}
\begin{corollary}
	\label{cor:cor_ch8_diff_alph_expected_B}
	The expected value of $B$ is 
	\begin{equation}
		\label{eq:eq_ch8_diff_alph_EB}
		\E(B) = \frac{q^k-1}{1-\epsilon}\sum_{s=0}^{k-1}\frac{1}{q^k-q^s}.
	\end{equation}
\end{corollary}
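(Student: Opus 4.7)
The plan is to obtain $\E(B)$ directly from the decomposition $B=\sum_{s=0}^{k-1}L_s$ given in Corollary 1, rather than by differentiating the moment generating function of Lemma 3 at $t=0$. By linearity of expectation, $\E(B)=\sum_{s=0}^{k-1}\E(L_s)$, and since each $L_s$ is geometric with success probability $p_s$ (Definition 3), we have the standard identity $\E(L_s)=1/p_s$.

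I would then substitute the expression for $p_s$ given in the Markov chain description,
\[
p_s=\frac{\bar{\epsilon}\,q^s(q^{k-s}-1)}{q^k-1}=\frac{\bar{\epsilon}(q^k-q^s)}{q^k-1},
\]
so that $1/p_s=(q^k-1)\big/\bigl(\bar{\epsilon}(q^k-q^s)\bigr)$. Summing over $s=0,\ldots,k-1$ and factoring the $s$-independent constant $(q^k-1)/(1-\epsilon)$ out of the sum yields exactly equation \eqref{eq:eq_ch8_diff_alph_EB}.

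There is no real obstacle: the independence of the $L_s$ (used for the product form in Lemma 3) is not even needed here, only linearity of expectation. The only thing to check is the algebraic identity $q^s(q^{k-s}-1)=q^k-q^s$, which is immediate. As a sanity check one could alternatively compute $\E(B)=\phi_B'(0)$ from \eqref{eq:eq_ch8_diff_alph_mgf_B}, using $\tfrac{d}{dt}\log\bigl(p_se^t/(1-(1-p_s)e^t)\bigr)\big|_{t=0}=1/p_s$, which recovers the same expression and confirms consistency with Lemma 3.
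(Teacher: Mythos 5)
Your proof is correct, and it takes a different (and arguably cleaner) route than the paper's stated proof. The paper derives $\E(B)$ by differentiating the moment generating function $\phi_B(t)$ from Lemma~\ref{lemma:lemma_ch8_diff_alph_distribution_B} at $t=0$, which requires computing the derivative of a product of $k$ rational functions of $e^t$; your approach bypasses the MGF entirely and uses only linearity of expectation applied to the decomposition $B=\sum_{s=0}^{k-1}L_s$ of Corollary~\ref{cor:cor_ch8_diff_alph_B}, together with $\E(L_s)=1/p_s$ for a geometric random variable. As you note, independence of the $L_s$ is not needed for this, which makes the argument strictly more elementary. The algebraic step $q^s(q^{k-s}-1)=q^k-q^s$ and the factoring of $(q^k-1)/(1-\epsilon)$ are correct and give exactly \eqref{eq:eq_ch8_diff_alph_EB}. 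Worth pointing out: the paper itself records your route in the remark immediately following the corollary, so the two derivations are both present in the text; the corollary's official proof just happens to be the MGF one.
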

\begin{proof}
	Using \eqref{eq:eq_ch8_diff_alph_B_cor}, we get
	\begin{equation}
\E(B) = \sum_{s=0}^{k-1} \E(L_s)=\sum_{s=0}^{k-1} \frac{1}{p_s}=\frac{q^k-1}{1-\epsilon}\sum_{s=0}^{k-1}\frac{1}{q^k-q^s}.	
	\end{equation}
We can also get \eqref{eq:eq_ch8_diff_alph_EB} by using \eqref{eq:eq_ch8_diff_alph_mgf_B} and the fact that
$\displaystyle\E(B) = \left.\frac{\mathrm{d}\phi_B(t)}{\mathrm{d}t}\right|_{t=0}$.
\end{proof}

\subsubsection{Packet Erasure Probability}
{The $l^{th}$ packet is correctly received if $B_l\leq n$.} Otherwise, we declare
the packet to be lost.  {Therefore, the packet erasure probability $\epsilon_p$ is equal to}
\begin{equation}
	\label{eq:eq_ch8_diff_alph_ep}
	{
	\epsilon_p = \Prob(B>n) = \sum_{i=n+1}^\infty \Prob(B=i),
	}
\end{equation}
where the distribution of $B$ is given by \Cref{lemma:lemma_ch8_diff_alph_distribution_B}. We call $1-\epsilon_p=\Prob(B\leq
n)$ to be the packet success probability.

\subsubsection{The Age Analysis}
\begin{definition}
	\label{def:def_ch8_diff_alph_H}
	In every interdeparture interval $Y_j$, we call $H_j$ the number of erased packets before the reception of a successful
	update. $H_j$ is geometric with success probability $\epsilon_p$, so
	\begin{equation}
	\Prob(H_j=l) = \epsilon_p^l(1-\epsilon_p),\quad l=0,1,2,\ldots
	\end{equation}
\end{definition}
We use \Cref{def:def_ch8_diff_alph_H} to characterize the interdeparture interval. Indeed,  any interdeparture interval is the
sum of two components: The time {sending} unsuccessful packets followed by the service time of the successful update. Since each
transmitted packet takes $n$ channel uses and $\mu=1$, then the $j^{th}$ interdeparture time can be written as 
\begin{equation}
	\label{eq:eq_ch8_diff_alph_Y_expr}
	Y_j = \frac{n}{\mu}H_j+\frac{n}{\mu} = n(H_j+1),\quad j\geq 1.
\end{equation}
Given that we assume a memoryless erasure channel and independently generated packets, then {$(H_j)_{j\geq1}$ are independent and identically distributed.} Since the interdeparture interval $Y_j$ is {a function of $H_j$, then $(Y_j)_{j\geq1}$ are also independent and identically distributed.} Hence the following lemma:
\begin{lemma}
	\label{lemma:lemma_ch8_diff_alph_renewal}
	The process $R(\tau)=\max\left\{n:t'_{n}\leq\tau\right\}$ is a renewal process with the interdeparture times
	$(Y_j)_{j\geq1}$ being the renewal intervals.
\end{lemma}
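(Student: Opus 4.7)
The plan is to verify directly the definition of a renewal process: I need to show that the inter-arrival times $(Y_j)_{j\geq 1}$ are nonnegative i.i.d.\ random variables, since then $R(\tau)=\max\{n:t'_n\leq \tau\}$ is the associated counting process.

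First, I would invoke the expression $Y_j = n(H_j+1)$ established in \eqref{eq:eq_ch8_diff_alph_Y_expr}, which reduces the problem to showing that the sequence $(H_j)_{j\geq 1}$ is i.i.d.\ (nonnegativity and strict positivity of $Y_j$ then follow immediately since $n\geq k\geq 1$ and $H_j\geq 0$). For this, I would trace the dependence of $H_j$ on the underlying randomness: a packet $i$ is successfully decoded exactly when $B_i\leq n$, where $B_i$ is the number of channel uses needed to collect $k$ linearly independent unerased coded symbols for that packet. The text already establishes that $(B_i)_{i\geq 1}$ is i.i.d.\ with common law $B$, because (i) the erasure channel is memoryless, and (ii) the random generator matrices $\mathbf{G}_l$ used for different messages are drawn independently according to the same distribution. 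This gives that the indicators $\mathbbm{1}\{B_i\leq n\}$ form an i.i.d.\ Bernoulli sequence with parameter $1-\epsilon_p$.

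Next, $H_j$ is by definition the number of consecutive failures $\{B_i>n\}$ strictly between the $(j-1)^{\text{th}}$ and $j^{\text{th}}$ successes in this Bernoulli process. Since the Bernoulli trials are i.i.d., standard renewal arguments (or equivalently, the strong Markov property applied at each success epoch) show that the spacings between consecutive successes in an i.i.d.\ Bernoulli sequence are themselves i.i.d.\ with the geometric distribution given in \Cref{def:def_ch8_diff_alph_H}. Therefore $(H_j)_{j\geq 1}$ is an i.i.d.\ sequence, and by the deterministic affine relation $Y_j = n(H_j+1)$, so is $(Y_j)_{j\geq 1}$.

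Finally, the successful-reception epochs satisfy $t'_j = \sum_{i=1}^j Y_i$ (up to the initialization convention chosen, since we are assuming observation begins right after a successful reception, so $t'_0=0$). Thus $(t'_j)_{j\geq 1}$ is a renewal sequence with i.i.d.\ increments $Y_j$, and $R(\tau)=\max\{n:t'_n\leq \tau\}$ is by definition the associated renewal counting process. I do not anticipate any serious obstacle here; the main substantive ingredient is the i.i.d.\ property of $(B_i)_{i\geq 1}$, which has already been argued in the text, and the rest is a short unpacking of definitions.
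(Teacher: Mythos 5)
Your proposal is correct and takes essentially the same route as the paper: the paper does not give a separate proof of this lemma but instead supplies the argument in the paragraph immediately preceding it, noting that the memoryless channel and the independently generated codes make $(H_j)_{j\geq1}$ i.i.d., and that $Y_j=n(H_j+1)$ then makes $(Y_j)_{j\geq1}$ i.i.d. Your version usefully makes explicit the intermediate step that the indicators $\mathbbm{1}\{B_i\le n\}$ form an i.i.d.\ Bernoulli sequence (so that the inter-success gaps $H_j$ are i.i.d.\ geometric), but this is the same underlying argument.
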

The importance of \Cref{lemma:lemma_ch8_diff_alph_renewal} stems from the fact that it shows that %the instantaneous age process
$\Delta_{\epsilon,\mathcal{C}}$ exists and the system is ergodic.%is mean-ergodic (see \Cref{def:def_ch2_mean_ergodic}). 
\begin{lemma}
	\label{lemma:lemma_ch8_diff_alph_avg_age_ergodic}
	{	Almost surely (over the random choice of the $(n,k)$-scheme $\mathcal{C}$, and over the random erasure patterns of the $q$EC($\epsilon$) channels), we have}
	\begin{equation}
		\label{eq:eq_ch8_diff_alph_avg_age_ergodic}
		\Delta_{\epsilon,\mathcal{C}} = \lim_{\tau\to\infty}\frac{1}{\tau}\int_0^\tau
		\Delta_{\epsilon,\mathcal{C}}(t)\mathrm{d}t = \frac{\E(Q)}{\E(Y)},
	\end{equation}
	where $Q$ is {a generic random variable that has the same distribution as} $\displaystyle Q_j=\int_{t'_{j-1}}^{t'_j}\Delta_{\epsilon,\mathcal{C}}(t)\mathrm{d}t$ which is represented by the shaded areas in Fig.~\ref{fig:fig_ch8_instantaneous_age},
	and $Y$ is {a generic random variable that has the same distribution as} the interdeparture interval $Y_j$.
%	{	More precisely,
%		\begin{equation}
%		\Prob\left(\left\{\lim_{\tau\to\infty}\frac{1}{\tau}\int_0^\tau
%		\Delta_{\epsilon,\mathcal{C}}(t)\mathrm{d}t = \frac{\E(Q)}{\E(Y)}\right\}\right)=1.
%	\end{equation}
%	}
\end{lemma}
\begin{proof}
	%We will use the DTA introduced in \Cref{sec:sec_ch2_DTA} to compute the average age.
	By \Cref{lemma:lemma_ch8_diff_alph_renewal}, $R(\tau)$ forms a renewal process and hence by \cite{ross} we know that
	$\displaystyle\lim_{\tau\to\infty}\frac{R(\tau)}{\tau}=\frac{1}{\E(Y)}$. By defining $\displaystyle Q_j=\int_{t'_{j-1}}^{t'_j}\Delta_{\epsilon,\mathcal{C}}(t)\mathrm{d}t$ to be the reward function over
	the renewal period $Y_j$, we get (using renewal reward theory \cite{Gallager1996,ross}) that {	 almost surely}
	%\[\resizebox{0.48\textwidth}{!}{$\displaystyle\Delta_{\epsilon,\mathcal{C}} = \lim_{\tau\to\infty} \frac{1}{\tau}\int_0^\tau \Delta_{\epsilon,\mathcal{C}}(t)\mathrm{d}t =
	%\lim_{\tau\to\infty} \frac{R(\tau)}{\tau}\frac{1}{R(\tau)}\sum_{j=1}^{R(\tau)}Q_j=\frac{\E(Q_j)}{\E(Y_j)}<\infty.$}\]
	\begin{align}
		\Delta_{\epsilon,\mathcal{C}}	&= \lim_{\tau\to\infty} \frac{1}{\tau}\int_0^\tau
		\Delta_{\epsilon,\mathcal{C}}(t)\mathrm{d}t\nn
						&=\lim_{\tau\to\infty}
						\frac{R(\tau)}{\tau}\frac{1}{R(\tau)}\sum_{j=1}^{R(\tau)}Q_j=\frac{\E(Q_j)}{\E(Y_j)}<\infty.
	\end{align}
\end{proof}
Before computing the average age, we still need one more lemma that gives the distribution of the random variables $(T_j)_{j\geq 1}$.
\begin{lemma}
	\label{lemma:lemma_ch8_diff_alph_dist_T}
	Let $T$ be {a generic random variable that has the same distribution as} the number of channel uses $T_j$ between the decoding instant of the
	$j^{th}$ successful packet and its generation time $t_j$. Then,
	\begin{equation}
		\label{eq:eq_ch8_diff_alph_dist_T}
		{\Prob(T=i) = \frac{\Prob(B=i)\mathbbm{1}_{\{i\leq n\}}}{\Prob(B\leq n)},}
	\end{equation}
	where $\mathbbm{1}_{\{.\}}$ is the indicator function.
\end{lemma}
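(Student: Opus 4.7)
The plan is to identify $T_j$ with the random variable $B_j$ from Section 3.B, conditioned on the event that packet $j$ is successfully decoded. Recall that $B_j$ counts the number of channel uses (from the start of transmission of packet $j$) needed for the decoder to receive exactly $k$ linearly independent coded symbols, while packet $j$ is declared successful precisely when $B_j \leq n$. By the decoder specification in \S\ref{subsec:subsec_ch8_random_codes}, the decoder decodes on the fly, outputting the message as soon as $k$ linearly independent symbols arrive; this decoding instant therefore coincides with the $B_j$-th channel use of that packet. Since transmission begins at generation time $t_j$ and $\mu=1$, the number of channel uses between $t_j$ and the decoding instant is exactly $B_j$, provided the packet is successful.

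Concretely, I would argue as follows. First, observe that $T_j$ is only defined on the event $\{B_j \leq n\}$ (i.e., for successful packets), and on that event $T_j = B_j$ almost surely. Second, note that by the random coding construction, the sequences $(B_i)_{i\geq 1}$ are i.i.d., so the collection of successful packets forms an i.i.d.\ subsequence whose common distribution is that of $B$ conditioned on $B\leq n$. By the ergodicity established in \Cref{lemma:lemma_ch8_diff_alph_renewal} and \Cref{lemma:lemma_ch8_diff_alph_avg_age_ergodic}, the empirical distribution of $T_j$ over successful packets converges to this conditional law, and this is by definition the steady-state distribution of $T$.

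Therefore, by the definition of conditional probability,
\begin{equation*}
\Prob(T=x) = \Prob(B=x \mid B\leq n) = \frac{\Prob(\{B=x\}\cap\{B\leq n\})}{\Prob(B\leq n)} = \frac{\Prob(B=x)\,\mathbbm{1}_{\{x\leq n\}}}{\Prob(B\leq n)},
\end{equation*}
which is \eqref{eq:eq_ch8_diff_alph_dist_T}. There is essentially no hard step here; the only subtlety is making precise that the steady-state $T$ corresponds to sampling from the successful packets only, so that its law is the conditional law of $B$ given $\{B\leq n\}$ rather than the marginal law of $B$. This is immediate from the i.i.d.\ structure of $(B_i)_{i\geq 1}$ and the fact that we index $T_j$ only over the successful packets.
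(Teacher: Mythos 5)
Your proposal is correct and follows essentially the same route as the paper: identify $T_j$ on the success event $\{B_j\leq n\}$ with $B_j$, then read off $\Prob(T=x)=\Prob(B=x\mid B\leq n)$. The paper's proof is a two-sentence version of exactly this; your added remarks about the i.i.d.\ structure of $(B_i)$ and the ergodicity from \Cref{lemma:lemma_ch8_diff_alph_renewal} just spell out why the conditional law is the right steady-state object.
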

\begin{proof}
	A packet is successfully decoded if the decoder receives exactly $k$ linearly independent coded symbols {after at most}
	$n$ channel uses. Thus, for the $j^{th}$ successful packet we have that
	\begin{equation}
	\Prob(T_j=i) =\Prob(B=i|B\leq n).
	\end{equation}
\end{proof}
We are now ready to give the main theorem of this section.
\begin{thm}
	\label{thm:thm_ch8_diff_alph_avg_age_expr}
	Assume {a $q$EC($\epsilon$)} and an $(n,k)$-coding scheme $\mathcal{C}$ as defined in
	\Cref{subsec:subsec_ch8_random_codes}. {Almost surely, the} average age $\Delta_{\epsilon,\mathcal{C}}$ corresponding to such setup is
	given by
	\begin{equation}
		\label{eq:eq_ch8_diff_alph_avg_age_expr}
{		\Delta_{\epsilon,\mathcal{C}} = \E(T)-1+\frac{n(1+\epsilon_p)}{2(1-\epsilon_p)}+[-t_0^s],}
	\end{equation}
	where $\epsilon_p$ is the packet erasure probability given by \eqref{eq:eq_ch8_diff_alph_ep}.
\end{thm}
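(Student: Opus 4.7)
The plan is to apply the renewal-reward identity $\Delta_{\epsilon,\mathcal{C}}=\E(Q)/\E(Y)$ from \Cref{lemma:lemma_ch8_diff_alph_avg_age_ergodic} and reduce the calculation to a purely geometric computation of the sawtooth area over one renewal cycle. The denominator is read off immediately: since $Y_j=n(H_j+1)$ with $H_j$ geometric of success probability $1-\epsilon_p$, we have $\E(Y)=n/(1-\epsilon_p)$. The real work is $\E(Q)$.

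To evaluate $Q_j=\int_{t'_{j-1}}^{t'_j}\Delta_{\epsilon,\mathcal{C}}(t)\,dt$, I first pin down $u(t)$ on the two sub-intervals separated by the decoding instant $t_j+T_j-1$ of the $j$-th successful packet. Because $t'_{j-1}=t_{j-1}+n-1$ and $t'_j=t_j+n-1$, the age is exactly $n-1$ at both endpoints of the cycle, irrespective of $T_j$; in between, $u(t)=t_{j-1}$ before the decoding instant and $u(t)=t_j$ after. Using $t_j-t_{j-1}=t'_j-t'_{j-1}=Y_j$, the two parabolic pieces integrate to
\begin{equation*}
Q_j=\tfrac{1}{2}\bigl[(Y_j+T_j-1)^2-(n-1)^2\bigr]+\tfrac{1}{2}\bigl[(n-1)^2-(T_j-1)^2\bigr].
\end{equation*}
The $(n-1)^2$ terms cancel, and a difference-of-squares factorisation collapses the remainder to $Q_j=Y_j^2/2+Y_j(T_j-1)$.

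Taking expectations requires the independence of $Y_j$ and $T_j$: $Y_j$ is determined by the codes and erasure patterns of the $H_j$ failed packets in the cycle, while $T_j$ is determined by the code and erasure pattern of the single successful packet. Since all packets are encoded with independently drawn generator matrices over a memoryless channel, these are independent, so $\E(Q)=\E(Y^2)/2+\E(Y)(\E(T)-1)$, and dividing by $\E(Y)$ gives $\Delta_{\epsilon,\mathcal{C}}=\E(Y^2)/(2\E(Y))+\E(T)-1$. It remains to compute the second moment: from the geometric law of $H$ one obtains $\E[(H+1)^2]=(1+\epsilon_p)/(1-\epsilon_p)^2$, whence $\E(Y^2)/(2\E(Y))=n(1+\epsilon_p)/(2(1-\epsilon_p))$, which yields \eqref{eq:eq_ch8_diff_alph_avg_age_expr}.

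The main obstacle is the age geometry on a renewal cycle: although the height $T_j-1$ of the intra-cycle drop is random, the endpoint age is deterministically $n-1$, so the $(n-1)^2$ contributions cancel and $Q_j$ depends on $T_j$ only through the affine term $Y_j(T_j-1)$. Once this cancellation is recognised, the cross-term separates by independence and the remaining moment computations for the geometric distribution are routine.
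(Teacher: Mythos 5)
Your proof is correct and follows essentially the same approach as the paper's: renewal--reward reduction to $\E(Q)/\E(Y)$, decomposition of each renewal cycle at the decoding instant $t_j+T_j-1$, independence of $H_j$ and $T_j$, and moments of the geometric law of $H$. Your intermediate form $Q_j = Y_j^2/2 + Y_j(T_j-1)$ is a cleaner algebraic packaging than the paper's direct trapezoid expansion, but the underlying area decomposition and key lemmas are identical.
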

\begin{proof}
	From \eqref{eq:eq_ch8_diff_alph_avg_age_ergodic}, we know that we need to compute $\E(Q)$ and $\E(Y)$.
	We start with $\E(Y)$. We have shown that for {every} $j\geq 1$, $Y_j=n(H_j+1)$. Thus, {
	\begin{equation}
		\label{eq:eq_ch8_diff_alph_EY}
		\E(Y)=\E(Y_j)= n(\E(H_j)+1) = n\lp\frac{\epsilon_p}{1-\epsilon_p}+1\rp = \frac{n}{1-\epsilon_p},
	\end{equation}
	}
	where the {third} equality is due to the fact that $H$ has a geometric distribution with success probability
	$\epsilon_p$ as seen in \Cref{def:def_ch8_diff_alph_H}. 

	Now we turn to $\E(Q)$. For {every} $j\geq 1$, the shaded area $Q_j$ shown in Fig.~\ref{fig:fig_ch8_instantaneous_age} is
	the sum of the areas of two trapezoids: a large trapezoid with height $n(H_j+T_j)$ and a smaller one with height
	$n-T_j$. Recall from \eqref{eq:eqAgeAtTheEndOfSuccess} that the instantaneous age at the end of transmission of the $j^{th}$ successful package is $\Delta_{\epsilon,\mathcal{C}}(t'_j)=n-1+[-t_0^s]$.
	 Thus,
	{
	\begin{align}
		Q_j &=\frac{(n-1+[-t_0^s]+n-1+[-t_0^s]+nH_j+T_j)(nH_j+T_j)}{2}\nn
			&\quad+\frac{(T_j-1+[-t_0^s]+n-1+[-t_0^s])(n-T_j)}{2}\nn
		&= \frac{1}{2}\lp 2n(n-1)H_j+2nT_j(1+H_j)+n^2H_j^2+n(n-2)\rp+[-t_0^s]n(H_j+1)\nn
		&= \frac{1}{2}\lp 2n(n-1)H_j+2nT_j(1+H_j)+n^2H_j^2+n(n-2)\rp+[-t_0^s]Y_j.
	\end{align}
	}
	Note that $H_j$ and $T_j$ are independent. Therefore,
	{
	\begin{align}
 \E(Q_j) &= 	\frac{1}{2}\E\Big( 2n(n-1)H_j+2nT_j(1+H_j)+n^2H_j^2+n(n-2) \Big)+[-t_0^s]\E(Y_j) 	\nonumber\nn
	&= n(n-1)\E(H_j)+n\E(T_j(1+H_j))+\frac{n^2\E\lp H_j^2\rp}{2}+ \frac{n(n-2)}{2}+[-t_0^s]\E(Y)\nonumber\nn
	&=n(n-1)\E(H_j)+n\E(T_j)\E(1+H_j)+\frac{n^2\E\lp H_j^2\rp}{2}+\frac{n(n-2)}{2}+[-t_0^s]\E(Y)\nonumber\nn
	&=n(n-1)\E(H_j)+\E(T_j)\E(Y)+\frac{n^2\E\lp H_j^2\rp}{2}+\frac{n(n-2)}{2}+[-t_0^s]\E(Y)\nonumber\nn
	&=n(n-1)\frac{\epsilon_p}{1-\epsilon_p}+\E(T_j)\E(Y)+\frac{n^2\epsilon_p(1+\epsilon_p)}{2(1-\epsilon_p)^2}+\frac{n(n-2)}{2}+[-t_0^s]\E(Y),		\label{eq:eq_ch8_diff_alph_EQ}
	\end{align}
	}
	Replacing $\E(Y)$ and $\E(Q)$ in \eqref{eq:eq_ch8_diff_alph_avg_age_ergodic} by their expressions in
	\eqref{eq:eq_ch8_diff_alph_EY} and \eqref{eq:eq_ch8_diff_alph_EQ}, we obtain \eqref{eq:eq_ch8_diff_alph_avg_age_expr}.
\end{proof}
In the expression of $\Delta_{\epsilon,\mathcal{C}}$ {in \eqref{eq:eq_ch8_diff_alph_avg_age_expr}}, $\E(T)$ and $\epsilon_p$ cannot be easily expressed in {terms} of
$\epsilon$, $k$ and $n$. {This} is why we study $\Delta_{\epsilon,\mathcal{C}}$ in the next two subsections by presenting upper
and lower bounds on the expression in \eqref{eq:eq_ch8_diff_alph_avg_age_expr}. 

\subsection{Bounding $\Delta_{\epsilon,\mathcal{C}}$}
\label{subsec:subsec_ch8_diff_alph_bounding_age}
{As we mentioned in the previous paragraph, the expression of $\Delta_{\epsilon,\mathcal{C}}$ is not easy to calculate. This is mainly because the distribution of the random variable $B$ is complicated. In this section, we provide upper and lower bounds on $\Delta_{\epsilon,\mathcal{C}}$ which are computed using random variables that have simpler distributions compared to $B$.}

\begin{definition}
	\label{def:def_ch8_diff_alph_lb_ub}
	We define  $\tilde{B}$ to be the sum of $k$ i.i.d random variables distributed like $L_0$. We also define $\hat{B}$ to
	be the sum of $k$ i.i.d random variables distributed like $L_{k-1}$. Formally,
	\begin{equation}
		\label{eq:eq_ch8_diff_alph_lb_up}
{		\tilde{B} = \sum_{s=0}^{k-1} L_0^{(s)}\quad\text{and}\quad \hat{B}=\sum_{s=0}^{k-1} L_{k-1}^{(s)},}
	\end{equation}
	where $L_0$ is geometrically distributed with success probability $\bar{\epsilon}=1-\epsilon$ and $L_{k-1}$ is also
	geometrically distributed with success probability $p_{k-1} = \frac{\bar{\epsilon}q^{k-1}(q-1)}{q^k-1}$.
\end{definition}

\begin{lemma}
	\label{lemma:lemma_ch8_diff_alph_dist_B_Bhat}
	The random variables $\tilde{B}$ and $\hat{B}$ defined in \Cref{def:def_ch8_diff_alph_lb_ub} are both negative
	binomials with
	\begin{equation}
	\Prob(\tilde{B}= i) = {i-1\choose k-1}(1-\epsilon)^k\epsilon^{i-k},
	\end{equation}
	and
	\begin{equation}
\Prob(\hat{B}= i) = {i-1 \choose k-1}(p_{k-1})^k(1-p_{k-1})^{i-k},	
	\end{equation}
	where $i=k,k+1,k+2,\ldots$
\end{lemma}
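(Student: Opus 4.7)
The plan is to recognize that both $\tilde{B}$ and $\hat{B}$ are, by construction, sums of $k$ i.i.d.\ geometric random variables supported on the positive integers, and then invoke the standard identity that such a sum follows a negative binomial distribution. I would present a direct combinatorial derivation rather than a generating-function computation, since it makes the indices transparent.

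For $\tilde{B}$, each $L_0^{(i)}$ can be interpreted as the waiting time to obtain a single success in an independent sequence of Bernoulli$(\bar{\epsilon})$ trials. Concatenating the $k$ such sequences, $\tilde{B}$ becomes the waiting time until the $k$-th success in i.i.d.\ Bernoulli$(\bar{\epsilon})$ trials. The event $\{\tilde{B} = x\}$ occurs precisely when the first $x-1$ trials contain exactly $k-1$ successes \emph{and} the $x$-th trial is itself a success. Multiplying the two probabilities gives
\begin{equation*}
\Prob(\tilde{B} = x) = {x-1 \choose k-1}(1-\epsilon)^{k-1}\epsilon^{x-k}\cdot(1-\epsilon) = {x-1 \choose k-1}(1-\epsilon)^k\epsilon^{x-k},
\end{equation*}
valid for $x \geq k$, which is the stated negative binomial.

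The argument for $\hat{B}$ is literally the same with $\bar{\epsilon}$ replaced by $p_{k-1}$ throughout, yielding $\Prob(\hat{B}=x) = {x-1 \choose k-1} p_{k-1}^k (1-p_{k-1})^{x-k}$. As a quick cross-check, the MGF of $L_0$ equals $\frac{\bar{\epsilon}e^t}{1-\epsilon e^t}$, so by independence $\phi_{\tilde{B}}(t) = (\bar{\epsilon}e^t/(1-\epsilon e^t))^k$, which is the MGF of a negative binomial with parameters $(k,\bar{\epsilon})$, and similarly for $\hat{B}$. There is no real obstacle in this proof — it is essentially the textbook characterization of negative binomials as sums of geometrics. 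The only subtlety worth flagging is the convention that our geometrics are supported on $\{1,2,\ldots\}$, not on $\{0,1,\ldots\}$; this is why the supports of $\tilde{B}$ and $\hat{B}$ start at $x=k$ and why the binomial coefficient takes the form ${x-1 \choose k-1}$, pinning the final trial to be the $k$-th success.
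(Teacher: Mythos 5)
Your proof is correct and takes essentially the same approach as the paper: the paper simply notes that $\tilde{B}$ and $\hat{B}$ are each sums of $k$ i.i.d.\ geometric random variables (with parameters $1-\epsilon$ and $p_{k-1}$ respectively) and invokes the standard negative-binomial identity, while you additionally spell out the combinatorial derivation and an MGF cross-check. No discrepancy in substance.
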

\begin{proof}
	$\tilde{B}$ is the sum of $k$ i.i.d geometric random variables with success probability $1-\epsilon$. Similarly,
	$\hat{B}$ is the sum of $k$ i.i.d geometric random variables with success probability $p_{k-1}$.
\end{proof}

{
We will show that the random variables $\tilde{B}$ and $\hat{B}$ can be coupled with the random variable $B$ in such a way that $\tilde{B}\leq B\leq \hat{B}$ with probability 1.

\begin{lemma}
\label{lemma:lemma_diff_alph_couplingB}
Let $B=\sum_{s=0}^{k-1} L_s$, and let $\tilde{B}$ and $\hat{B}$ be as in \Cref{def:def_ch8_diff_alph_lb_ub}. It is possible to couple $B$, $\tilde{B}$ and $\hat{B}$ in such a way that $\tilde{B}\leq B\leq \hat{B}$ with probability 1. More precisely, we can define three random variables $O$, $\tilde{O}$ and $\hat{O}$ on the same probability space such that:
\begin{itemize}
\item $O$, $\tilde{O}$ and $\hat{O}$ have the same distributions as $B$, $\tilde{B}$ and $\hat{B}$, respectively, i.e., for every $i\geq 1$, we have $\Prob(O=i)=\Prob(B=i)$, $\Prob(\hat{O}=i)=\Prob(\hat{B}=i)$ and $\Prob(\tilde{O}=i)=\Prob(\tilde{B}=i)$.
\item $\tilde{O}\leq O\leq \hat{O}$ with probability 1.
\end{itemize}
\end{lemma}
\begin{proof}
The proof can be found in \Cref{subsec:subsec_lemma_diff_alph_couplingB_proof}.
\end{proof}

\begin{corollary}
	\label{corollary:corollary_ch8_diff_alph_coupling}
	Given $B=\sum_{s=0}^{k-1} L_s$ and $\tilde{B}$ and $\hat{B}$ as defined in \Cref{def:def_ch8_diff_alph_lb_ub}, the
	following relations hold for $i\geq k$:
	\begin{enumerate}
		%\item $\Prob(\tilde{B}\leq B)=1$,
		\item $\Prob(\tilde{B}\leq i)\geq\Prob(B\leq i)$,
		\item $\E(\tilde{B})\leq \E(B)$,
		%\item $\Prob(\hat{B}\geq B)=1$,
		\item $\Prob(\hat{B}\leq i)\leq\Prob(B\leq i)$.
		\item $\E(\hat{B})\geq \E(B)$,
	\end{enumerate}
\end{corollary}
\begin{proof}
Let $\tilde{O}, O$ and $\hat{O}$ be as in \Cref{lemma:lemma_diff_alph_couplingB}. Since $O\geq
	\tilde{O}$ with probability 1, we deduce that the event $\{O\leq i\}$ is a subset of the event $\{\tilde{O}\leq i\}$.
	Hence,
\begin{equation}
\Prob(B\leq i) = \Prob(O\leq i)\leq \Prob(\tilde{O}\leq i)=\Prob(\tilde{B}\leq i).
\end{equation}	
	This inequality also implies that $\Prob(B\geq i)\geq \Prob(\tilde{B}\geq i)$. Furthermore, since $O\geq
	\tilde{O}$ with probability 1, we have
	\begin{equation}
\E(B)=\E(O)\geq\E(\tilde{O})=\E(\tilde{B}).	
	\end{equation}

On the other hand, since $\hat{O}\geq
	O$ with probability 1, we deduce that the event $\{\hat{O}\leq i\}$ is a subset of the event $\{O\leq i\}$.
	Hence,
\begin{equation}
\Prob(\hat{B}\leq i) = \Prob(\hat{O}\leq i)\leq \Prob(O\leq i)=\Prob(B\leq i).
\end{equation}	
	This inequality also implies that $\Prob(\hat{B}\geq i)\geq \Prob(B\geq i)$. Furthermore, since $\hat{O}\geq O$ with probability 1, we have
	\begin{equation}
\E(\hat{B})=\E(\hat{O})\geq\E(O)=\E(B).	
	\end{equation}
\end{proof}
}
\Cref{corollary:corollary_ch8_diff_alph_coupling} can be interpreted as follows: $\tilde{B}$ can be seen as the number of channel uses in order to
receive exactly $k$ linearly independent coded symbols when any $k$ coded symbols are linearly independent. This means that
$\tilde{B}$ corresponds to the number of channel uses needed to decode a packet when the encoders of the $(n,k)$-scheme only
use MDS codes.
Hence, $\tilde{B}$ is equivalent to the number of channel uses needed to receive exactly $k$ non-erased coded symbols.
Intuitively, we would expect to need a number $\tilde{B}$ of channel uses to receive $k$ non-erased coded symbols which is smaller than the
number $B$ needed to receive $k$ linearly independent coded symbols. This
explains the intuition behind items {$(1)$ and $(2)$} in \Cref{corollary:corollary_ch8_diff_alph_coupling}. On the opposite side of the
spectrum, $\hat{B}$ can be seen as a worst case scenario since the jump from state $s$ to state $s+1$ in
Fig.~\ref{fig:fig_ch8_diif_alph_codeword_dim} occurs with the smallest possible probability, namely $p_{k-1}$. This discussion leads us to the
idea that $\Delta_{\epsilon,\mathcal{C}}$ could be upper bounded  by the average age corresponding to a coding system with
$\hat{B}$ as the number of channel uses needed to receive exactly $k$ linearly independent coded symbols. Similarly,
$\Delta_{\epsilon,\mathcal{C}}$ could be lower bounded by the average age achieved using only MDS codes with $\tilde{B}$ as the
number of channel uses needed to receive $k$ linearly independent coded symbols.

{
By applying \Cref{lemma:lemma_diff_alph_couplingB}, we can define a sequence of independent and identically distributed triplets $(\tilde{B}_l,B_l,\hat{B}_l)_{l\geq 1}$ such that for every $l\geq 1$, we have:
\begin{itemize}
\item $\tilde{B}_l\leq B_l\leq \hat{B}_l$ with probability 1.
\item $\tilde{B}_l,B_l$ and $\hat{B}_l$ are distributed as $\tilde{B}, B$ and $\hat{B}$, respectively.
\end{itemize}

We will use $(B_l)_{l\geq 1}$ to describe the age of information of the system as we explained at the beginning of \Cref{subsec:subsec_ch8_bounds_opt_age}. More precisely, for $t\geq 1$, we have
\begin{equation}
\label{eq:eq_age_recursive_random}
\Delta_{\epsilon,\mathcal{C}}(t)=\begin{cases}
\Delta_{\epsilon,\mathcal{C}}((l_t-1)n)+t-(l_t-1)n\quad&\text{if }t-(l_t-1)n<B_{l_t},\\
 t-1-(l_t-1)n+[-t_0^s]\quad&\text{if }t-(l_t-1)n\geq B_{l_t},
 \end{cases}
\end{equation}
 where $l_t=\lfloor\frac{t-1}{n}\rfloor+1$ is the number of the packet that is being transmitted at time $t$. Note that \eqref{eq:eq_age_recursive_random} can be shown exactly as \eqref{eq:eq_age_recursive_nonMDS}.
 
We now define two virtual ages, that we denote as $\tilde{\Delta}_{\epsilon,\mathcal{C}}(t)$ and $\hat{\Delta}_{\epsilon,\mathcal{C}}(t)$. These virtual ages are initially equal to the real age $\Delta_{\epsilon,\mathcal{C}}(t)$, but instead of using $(B_l)_{l\geq 1}$, the evolution of $\tilde{\Delta}_{\epsilon,\mathcal{C}}(t)$ and $\hat{\Delta}_{\epsilon,\mathcal{C}}(t)$ will be governed by $(\tilde{B}_l)_{l\geq 1}$ and $(\hat{B}_l)_{l\geq 1}$, respectively. More precisely,

\begin{equation}
\label{eq:eq_age_recursive_randomtilde}
\tilde{\Delta}_{\epsilon,\mathcal{C}}(t)=\begin{cases} 
\Delta_{\epsilon,\mathcal{C}}(t)\quad&\text{if }t<1,\\
\tilde{\Delta}_{\epsilon,\mathcal{C}}((l_t-1)n)+t-(l_t-1)n\quad&\text{if }t\geq 1\text{ and }t-(l_t-1)n<\tilde{B}_{l_t},\\
 t-1-(l_t-1)n+[-t_0^s]\quad&\text{if }t\geq 1\text{ and }t-(l_t-1)n\geq \tilde{B}_{l_t},
 \end{cases}
\end{equation}
and
\begin{equation}
\label{eq:eq_age_recursive_randomhat}
\hat{\Delta}_{\epsilon,\mathcal{C}}(t)=\begin{cases} 
\Delta_{\epsilon,\mathcal{C}}(t)\quad&\text{if }t<1,\\
\hat{\Delta}_{\epsilon,\mathcal{C}}((l_t-1)n)+t-(l_t-1)n\quad&\text{if }t\geq 1\text{ and }t-(l_t-1)n<\hat{B}_{l_t},\\
 t-1-(l_t-1)n+[-t_0^s]\quad&\text{if }t\geq 1\text{ and }t-(l_t-1)n\geq \hat{B}_{l_t}.
 \end{cases}
\end{equation}

Similarly to the proof of \Cref{thm:thm_ch8_MDS_optimal}, since $\tilde{B}_{l_t}\leq B_{l_t}\leq \hat{B}_{l_t}$ for every $t\geq 1$, we can show by induction on $l_t$ that $\tilde{\Delta}_{\epsilon,\mathcal{C}}(t)\leq \Delta_{\epsilon,\mathcal{C}}(t)\leq \hat{\Delta}_{\epsilon,\mathcal{C}}(t)$ for every $t$. Therefore,
\begin{equation}
\label{eq:eq_BoundAgeTilde}
\tilde{\Delta}_{\epsilon,\mathcal{C}}\leq \Delta_{\epsilon,\mathcal{C}}\leq \hat{\Delta}_{\epsilon,\mathcal{C}},
\end{equation}
where
\begin{equation}
\tilde{\Delta}_{\epsilon,\mathcal{C}}=\lim_{\tau\to\infty}\frac{1}{\tau}\int_0^\tau\tilde{\Delta}_{\epsilon,\mathcal{C}}(t)dt,
\end{equation}
and
\begin{equation}
\hat{\Delta}_{\epsilon,\mathcal{C}}=\lim_{\tau\to\infty}\frac{1}{\tau}\int_0^\tau\hat{\Delta}_{\epsilon,\mathcal{C}}(t)dt.
\end{equation}

\subsubsection{Upper Bound on $\Delta_{\epsilon,\mathcal{C}}$}
From \eqref{eq:eq_BoundAgeTilde} we know that $\Delta_{\epsilon,\mathcal{C}}\leq \hat{\Delta}_{\epsilon,\mathcal{C}}$.

Since $\hat{\Delta}_{\epsilon,\mathcal{C}}$ was defined in a similar way as $\Delta_{\epsilon,\mathcal{C}}$ but using $(\hat{B}_l)_{l\geq 1}$ instead of $(B_l)_{l\geq 1}$, $\hat{\Delta}_{\epsilon,\mathcal{C}}$ will satisfy a similar equation as \eqref{eq:eq_ch8_diff_alph_avg_age_expr} but the terms will be defined using $\hat{B}$  instead of $B$. More precisely, by using the same techniques that were used to prove \Cref{thm:thm_ch8_diff_alph_avg_age_expr}, we can show that almost surely, we have
	\begin{equation}
		\label{eq:eq_ch8_diff_alph_avg_age_expr_hat}
\hat{\Delta}_{\epsilon,\mathcal{C}} = \E(\hat{T})-1+\frac{n(1+\hat{\epsilon}_p)}{2(1-\hat{\epsilon}_p)}+[-t_0^s],
	\end{equation}
	where the distribution of $\hat{T}$ is given by
	\begin{equation}
		\label{eq:eq_ch8_diff_alph_dist_T_hat}
	\Prob(\hat{T}=i) = \frac{\Prob(\hat{B}=i)\mathbbm{1}_{\{i\leq n\}}}{\Prob(\hat{B}\leq n)},
	\end{equation}
	and
\begin{equation}
	\label{eq:eq_ch8_diff_alph_ep_hat}
	\hat{\epsilon}_p = \Prob(\hat{B}>n) = \Prob(\hat{B}\geq n+1) = \sum_{i=n+1}^\infty \Prob(\hat{B}=i).
\end{equation}

From \Cref{lemma:lemma_ch8_diff_alph_dist_B_Bhat}, we know that $\hat{B}$ is a negative binomial random variable. Hence,
\begin{align}
	\label{eq:eq_ch8_diff_alph_ub_T_2}
	\E(\hat{T})&= \sum_{i=k}^n i \frac{\Prob(\hat{B}=i)}{\Prob(\hat{B}\leq n)}\nn
	&= \sum_{i=k}^n i\frac{ {i-1 \choose k-1}(1-p_{k-1})^{i-k}{p_{k-1}}^k}{\Prob(\hat{B}\leq n)}\nn
	&= \frac{k}{\Prob(\hat{B}\leq n)}\sum_{i=k}^n {i \choose k}(1-p_{k-1})^{i-k}(p_{k-1})^k. 
\end{align}
Let $\hat{\hat{B}}= \sum_{s=0}^k \hat{L}_{s}$, where $(\hat{L}_{s})_{0\leq s\leq k}$ are i.i.d with a marginal distribution
identical to $L_{k-1}$. Hence $\hat{\hat{B}}$ is also a negative binomial and
\begin{equation}
 \Prob(\hat{\hat{B}}=i) = {i-1 \choose k}(1-p_{k-1})^{i-k-1}(p_{k-1})^{k+1},\quad \forall i\geq k+1.
\end{equation}
We use the same trick as in \cite{RoyNajmSoljaninZhong-ISIT2017} and set $i'=i+1$ in
\eqref{eq:eq_ch8_diff_alph_ub_T_2}. This leads to
\begin{align}
	\label{eq:eq_ch8_diff_alph_ub_T_3}
	\E(\hat{T})&= \frac{k}{\Prob(\hat{B}\leq n)}\sum_{i'=k+1}^{n+1} {i'-1 \choose k}(1-p_{k-1})^{i'-k-1}(p_{k-1})^{k}\nn
	&=\frac{k\Prob(\hat{\hat{B}}\leq n+1)}{p_{k-1}\Prob(\hat{B}\leq n)}, 
\end{align}
where \resizebox{0.43\textwidth}{!}{$\displaystyle\Prob(\hat{\hat{B}}\leq n+1)= \sum_{i=k+1}^{n+1}{i-1 \choose
k}(1-p_{k-1})^{i-k-1}(p_{k-1})^{k+1}$}.

Using this
result, together with \eqref{eq:eq_ch8_diff_alph_avg_age_expr_hat} and \eqref{eq:eq_ch8_diff_alph_ep_hat}, we get
\begin{align}
	\hat{\Delta}_{\epsilon,\mathcal{C}}	&= \frac{k\Prob(\hat{\hat{B}}\leq n+1)}{p_{k-1}\Prob(\hat{B}\leq n)}-1
	+\frac{n\lp1+\Prob(\hat{B}\geq n+1)\rp}{2\lp1-\Prob(\hat{B}\geq n+1)\rp}  +[-t_0^s]\nn
					&= \frac{2np_{k-1}-p_{k-1}\Prob(\hat{B}\leq n)(n+2)+2k\Prob(\hat{\hat{B}}\leq
				n+1)}{2p_{k-1}\Prob(\hat{B}\leq n)}+[-t_0^s],
\end{align}
where the second equality is obtained by using
\begin{equation}
\Prob(\hat{B}\geq n+1)=1-\Prob(\hat{B}\leq n).
\end{equation}
We denote by $\Delta_{\epsilon,(n,k)}^{ub}$ the upper bound we just found. Thus,
\begin{equation}
	\label{eq:eq_ch8_diff_alph_ub_1}
	\Delta_{\epsilon,(n,k)}^{ub} = \frac{2np_{k-1}-p_{k-1}\Prob(\hat{B}\leq n)(n+2)+2k\Prob(\hat{\hat{B}}\leq
n+1)}{2p_{k-1}\Prob(\hat{B}\leq n)}+[-t_0^s].
\end{equation}
	
}

\subsubsection{Lower Bound on $\Delta_{\epsilon,\mathcal{C}}$}
{
Let $\displaystyle\tilde{\tilde{B}}= \sum_{s=0}^k \tilde{L}_s$, where $(\tilde{L}_s)_{0\leq s\leq k}$} are i.i.d with a marginal distribution
identical to $L_0$. Hence $\tilde{\tilde{B}}$ is also a negative binomial and 
{
\begin{equation}
 \Prob(\tilde{\tilde{B}}=i) = {i-1 \choose k}\epsilon^{i-k-1}(1-\epsilon)^{k+1},\quad \forall i\geq k+1.
\end{equation}

From \eqref{eq:eq_BoundAgeTilde}, we know that $\Delta_{\epsilon,\mathcal{C}}\geq \tilde{\Delta}_{\epsilon,\mathcal{C}}$. Using an argument} identical to that used for the computation of the upper bound
$\Delta_{\epsilon,\mathcal{C}}^{ub}$ we show that $\Delta_{\epsilon,\mathcal{C}}\geq \Delta_{\epsilon,(n,k)}^{lb}$, where
\begin{equation}
	\label{eq:eq_ch8_diff_alph_lb}
	\Delta_{\epsilon,(n,k)}^{lb} = \frac{2n(1-\epsilon)-(1-\epsilon)\Prob(\tilde{B}\leq n)(n+2)+2k\Prob(\tilde{\tilde{B}}\leq
n+1)}{2(1-\epsilon)\Prob(\tilde{B}\leq n)}.
\end{equation}
\begin{remark}
	\label{rmk:rmk_ch8_diff_alph_yates}
	The lower bound found here is similar to the average age derived in \cite{RoyNajmSoljaninZhong-ISIT2017} for the finite
	redundancy (FR) case. However, the time scale is different since Yates et al. in \cite{RoyNajmSoljaninZhong-ISIT2017}
	assume {that} the source generates a new update at the same instant it finishes transmitting the previous one. Whereas in our
	case, when $t_0^s=0$, we assume we generate and begin transmitting a new packet $\frac{1}{\mu}$ seconds after the last update finishes
	transmission. 
\end{remark}

\subsection{Age-Optimal Codes}
We have already discussed that the lower bound on $\Delta_{\epsilon,\mathcal{C}}$, $\Delta_{\epsilon,(n,k)}^{lb}$,
corresponds to the average age when the $(n,k)$-scheme uses only MDS codes with $\tilde{B}$ as the number of channel uses needed to receive $k$ linearly independent coded
symbols. Recall from {\Cref{thm:thm_ch8_MDS_optimal}} that, for a given couple $(n,k)$, using an MDS code is optimal.
This observation gives a different explanation on why the expression found in \eqref{eq:eq_ch8_diff_alph_lb} is indeed a lower bound
on the average age corresponding to a scheme using any other type of codes than MDS, in particular a code generated randomly. This means
that the lower bound is universal over all codes and the optimal achievable age 
	{
\begin{equation}
	\label{eq:eq_ch8_diff_alph_opt_age_lb}
	\Delta_{\epsilon,(n,k)}^{ub}\geq\Delta_{\epsilon,\mathcal{C}}\geq\Delta_\epsilon^{lin} \geq \min_{n\geq k}
	\Delta_{\epsilon,(n,k)}^{lb},
\end{equation}

where $\mathcal{C}$ is a random $(n,k)$-scheme, and $\Delta_\epsilon^{lin}$ is the optimal average age over coding schemes that are induced by linear block codes.}
However, for a given $(n,k)$, an explicit construction of an MDS code is not always available. In this section, we show that if
the channel-input alphabet is large enough, then random codes are (almost) age-optimal {among linear block codes}.
\begin{thm}
	\label{thm:thm_ch8_diff_alph_random_code_opt}
	Fix a couple $(n,k)$. We have that  $\forall \delta>0$, $\exists q_0>0$ such that
	$\forall q\geq q_0$, {a random $(n,k)$-coding scheme $\mathcal{C}$ almost surely satisfies}
	\begin{equation}
		|\Delta_{\epsilon,\mathcal{C}}-\Delta_{\epsilon,(n,k)}^{lb}|<\delta.
	\end{equation}
	This means that for a channel-input alphabet large enough ($q$ large), random codes are (almost) age-optimal {among linear block codes} and
	\begin{equation}
		\label{eq:eq_ch8_diff_alph_opt_achievable_age}
		{
		\Delta_\epsilon^{lin}\doteq \min_{n\geq k}\Delta_{\epsilon,\mathcal{C}},
		}
	\end{equation}
	where $\mathcal{C}$ is {a random $(n,k)$-coding scheme,} and the dot above the equal sign refers to the fact that the difference
	between the two sides approaches zero as $q$ gets large. %equality in the asymptotic regime.
\end{thm}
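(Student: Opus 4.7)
The plan is to show that as the channel-input alphabet size $q$ grows, the random code behaves essentially like an MDS code, so that the gap between $\Delta_{\epsilon,\mathcal{C}}$ and the lower bound $\Delta_{\epsilon,(n,k)}^{lb}$ closes. The key observation is that for each $s\in\{0,1,\ldots,k-1\}$,
\[ p_s=\frac{\bar\epsilon(q^k-q^s)}{q^k-1}=\bar\epsilon\lp 1-\frac{q^s-1}{q^k-1}\rp \xrightarrow[q\to\infty]{}\bar\epsilon, \]
since $q^s/q^k\to 0$ whenever $s<k$. Consequently, each $L_s$ converges in distribution to a geometric random variable with success probability $\bar\epsilon$, and by independence $B=\sum_{s=0}^{k-1}L_s$ converges in distribution to $\tilde{B}=\sum_{s=0}^{k-1}L_0^{(s)}$ from \Cref{def:def_ch8_diff_alph_lb_ub}; the analogous statement holds for $\tilde{\tilde{B}}$. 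Equivalently, the coupling in \Cref{lemma:lemma_ch8_diff_alph_coupling} becomes tight: the shift $J_s$ is zero with probability $p_s/p_0\to 1$, so $J_s\to 0$ in probability.

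Because $n$ and $k$ are fixed and $B$ takes integer values in $\{k,k+1,\ldots\}$, this convergence reduces to finitely many pointwise limits $\Prob(B=x)\to\Prob(\tilde{B}=x)$ for $x\in\{k,\ldots,n+1\}$. Substituting into \Cref{thm:thm_ch8_diff_alph_avg_age_expr} gives $\epsilon_p=\Prob(B>n)\to\Prob(\tilde{B}>n)$ and $\E(T)=\E(B\mathbbm{1}_{\{B\leq n\}})/\Prob(B\leq n)\to\E(\tilde{B}\mathbbm{1}_{\{\tilde{B}\leq n\}})/\Prob(\tilde{B}\leq n)$. Applying the index-shift identity used in the upper-bound derivation (setting $x'=x+1$) converts the limiting numerator into $k\Prob(\tilde{\tilde{B}}\leq n+1)/\bar\epsilon$. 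A routine algebraic rearrangement, identical in shape to the one leading to \eqref{eq:eq_ch8_diff_alph_ub_1} but with $p_{k-1}$ replaced by $\bar\epsilon$ and $\hat{B},\hat{\hat{B}}$ replaced by $\tilde{B},\tilde{\tilde{B}}$, then shows that the limit of $\Delta_{\epsilon,\mathcal{C}}$ matches exactly the right-hand side of \eqref{eq:eq_ch8_diff_alph_lb}.

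To finish: given $\delta>0$, choose $q_0$ so that for all $q\geq q_0$, $|\Delta_{\epsilon,(n,k)}-\Delta_{\epsilon,(n,k)}^{lb}|<\delta$. By the ergodicity identity \eqref{eq:eq_ch8_diff_alph_expect_avg_age_vs_avg_age}, almost every realization $\mathcal{C}$ of the random $(n,k)$-code satisfies $\Delta_{\epsilon,\mathcal{C}}=\Delta_{\epsilon,(n,k)}$, so a concrete code $\mathcal{C}$ achieving the bound exists. The main obstacle is the algebraic bookkeeping in verifying that the limit of \Cref{thm:thm_ch8_diff_alph_avg_age_expr} matches the closed form of \eqref{eq:eq_ch8_diff_alph_lb} term by term; the conceptual content—that the coupling of \Cref{lemma:lemma_ch8_diff_alph_coupling} becomes tight as $q\to\infty$ because $p_s\to p_0$ for every $s$—is already transparent from the first step.
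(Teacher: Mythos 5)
Your proposal is correct and follows essentially the same route as the paper's proof: observe $p_s\to p_0=1-\epsilon$ for every $s<k$, conclude $B\Rightarrow\tilde B$ (and hence $\E(T)$, $\epsilon_p$ converge), and invoke the ergodicity identity \eqref{eq:eq_ch8_diff_alph_expect_avg_age_vs_avg_age} to extract a concrete code. The one thing you leave implicit is the second conclusion $\Delta_\epsilon\doteq\min_{n\geq k}\Delta_{\epsilon,\mathcal{C}}$; the paper finishes by sandwiching $\Delta_\epsilon$ between $\min_{n\geq k}\Delta_{\epsilon,(n,k)}^{lb}$ (from \eqref{eq:eq_ch8_diff_alph_opt_age_lb}) and $\min_{n\geq k}\Delta_{\epsilon,(n,k)}$ (from \eqref{eq:eq_ch8_diff_alph_upper_bound_opt_age}) and noting both ends converge to the same quantity, a short step you should state explicitly.
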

\begin{proof}
	For a given random code $\mathcal{C}$, recall that 
	
	\begin{equation}
\Delta_{\epsilon,\mathcal{C}} = \E(T)-1+\frac{n(1+\epsilon_p)}{2(1-\epsilon_p)}+[-t_0^s].	
	\end{equation}
From \eqref{eq:eq_ch8_diff_alph_dist_T} and \eqref{eq:eq_ch8_diff_alph_ep}, we notice that $\E(T)$ and $\epsilon_p$ both depend only on the distribution of $B=\sum_{s=0}^{k-1} L_s$. However, for {every}
$s\in\{0,1,\ldots,k-1\}$, 
\begin{equation}
 \lim_{q\to\infty} p_s=\lim_{q\to\infty} (1-\epsilon)\frac{q^k-q^s}{q^k-1} = 1-\epsilon=p_0.
\end{equation}
This means that, for {every} $s$, $L_s$ converges in distribution to $L_0$ as $q\to\infty$. Therefore, $B$ converges in
distribution to $\tilde{B}=\sum_{s=0}^{k-1} L_0^{(s)}$, as $q\to\infty$. Hence, as $q\to\infty$,
$\Delta_{\epsilon,\mathcal{C}}$ converges to $\Delta_{\epsilon,(n,k)}^{lb}$. So, for $q$ large enough, we can write
\begin{equation}
 \Delta_{\epsilon,\mathcal{C}}=\Delta_{\epsilon,(n,k)}\doteq \Delta_{\epsilon,(n,k)}^{lb}.
\end{equation}
From \eqref{eq:eq_ch8_diff_alph_upper_bound_opt_age}, we know that the optimal age {among linear block codes}, for a given $q$, is {$\displaystyle\Delta_\epsilon^{lin}\leq
\min_{n\geq k} \Delta_{\epsilon,(n,k)}$}. For large enough $q$, we have
$\Delta_{\epsilon,(n,k)}\doteq\Delta_{\epsilon,(n,k)}^{lb}$. This means that asymptotically,
{$\displaystyle\Delta_\epsilon^{lin}\dot\leq\min_{n\geq k}\Delta_{\epsilon,(n,k)}^{lb}$}. However, from \eqref{eq:eq_ch8_diff_alph_opt_age_lb}, we
have that {$\displaystyle\Delta_\epsilon^{lin} \geq \min_{n\geq k}\Delta_{\epsilon,(n,k)}^{lb}$} for {every} $q$. Therefore, asymptotically
\begin{equation}
 \Delta_\epsilon^{lin}\doteq\min_{n\geq k}\Delta_{\epsilon,(n,k)}^{lb}.
\end{equation}
\end{proof}

Notice that for very large $k$, it is extremely unlikely that a (randomly generated) coded symbol is linearly dependent with any subset of size $k-1$ of the $n-1$ remaining coded symbols. This means that as $k$ becomes large, the behavior of random codes approaches that of MDS codes. This is essentially the main reason why \Cref{thm:thm_ch8_diff_alph_random_code_opt} is true

%\begin{figure}[t]
%	\centerline{
%	%\hspace{-1cm}
%		\includegraphics[scale=0.35,trim={1.6cm 1.6cm 2.2cm 2.2cm},clip]{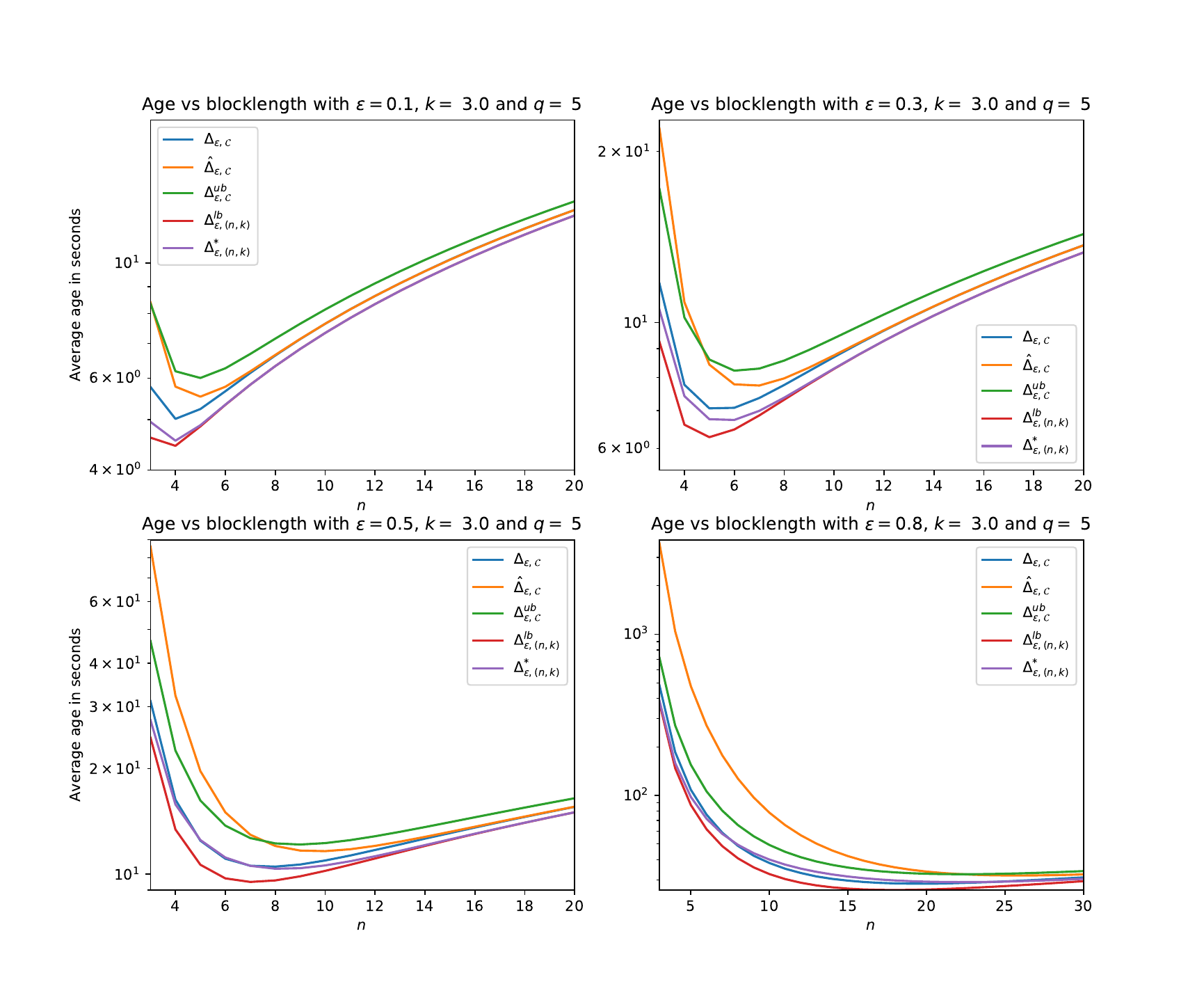}}
%	\caption{Bounds on $\Delta_{\epsilon,\mathcal{C}}$ with respect to the blocklength $n$, with $k=3$ and a channel-input alphabet
%	of size $q=5$. The age is in log scale.}
%	\label{fig:fig_ch8_diff_alph_age_opt_bounds}
%\end{figure}

\subsection{Other Bounds and Approximations}
\label{subsec:subsec_ch8_diff_alph_other_bounds}
\subsubsection{Upper Bounding the Lower  Bound}
In \Cref{rmk:rmk_ch8_diff_alph_yates}, we discussed how the lower bound found in \eqref{eq:eq_ch8_diff_alph_lb} is similar, up
to a time scale difference, to the average age computed by Yates et al. in \cite[Section 3]{RoyNajmSoljaninZhong-ISIT2017}. In
this paper, the authors present a tight upper bound on the computed average age. We borrow the same techniques as in
\cite[Section 3.A]{RoyNajmSoljaninZhong-ISIT2017} to upper bound $\Delta_{\epsilon,(n,k)}^{lb}$. Interestingly, simulations will
show that the upper bound to $\Delta_{\epsilon,(n,k)}^{lb}$ is a tight approximation to $\Delta_{\epsilon,\mathcal{C}}$, the
average age achieved when using {a random $(n,k)$-scheme $\mathcal{C}$}.

Recall that 
{
\begin{align}
	\label{eq:eq_ch8_diff_alph_lb_app_1}
	\Delta_{\epsilon,(n,k)}^{lb}	&= \frac{2n(1-\epsilon)-(1-\epsilon)\Prob(\tilde{B}\leq n)(n+2)+2k\Prob(\tilde{\tilde{B}}\leq
n+1)}{2(1-\epsilon)\Prob(\tilde{B}\leq n)}+[-t_0^s]\nn
					&= \frac{k\Prob(\tilde{\tilde{B}}\leq n+1)}{(1-\epsilon)\Prob(\tilde{B}\leq
				n)}-1+\frac{n(2-\Prob(\tilde{B}\leq n))}{2\Prob(\tilde{B}\leq n)}+[-t_0^s].
\end{align}
}
Denote by $\tilde{\mu}_n=\frac{k\Prob(\tilde{\tilde{B}}\leq  n+1)}{(1-\epsilon)\Prob(\tilde{B}\leq n)}$. From \cite[Lemma
1]{RoyNajmSoljaninZhong-ISIT2017}, we know that $\tilde{\mu}_n\leq\min\lp n,\frac{k}{1-\epsilon}\rp$. Hence,
\begin{equation}
	\Delta_{\epsilon,(n,k)}^{lb}\leq \frac{k}{1-\epsilon}-1+\frac{n(2-\Prob(\tilde{B}\leq n))}{2\Prob(\tilde{B}\leq n)}+[-t_0^s].
\end{equation}
We denote by $\Delta_{\epsilon,(n,k)}^{*}$ this approximation. Thus,

\begin{equation}
	\label{eq:eq_ch8_diff_alph_lb_app_2}
	\Delta_{\epsilon,(n,k)}^{*} = \frac{k}{1-\epsilon}-1+\frac{n(2-\Prob(\tilde{B}\leq n))}{2\Prob(\tilde{B}\leq n)}+[-t_0^s].
\end{equation}

\begin{remark}
	We can apply the techniques discussed in \cite[Section 3.A]{RoyNajmSoljaninZhong-ISIT2017} in order to approximate the
	optimal codeword length $n$ for $\Delta_{\epsilon,(n,k)}^{lb}$ and write $\Delta_{\epsilon,(n,k)}^{*}$ solely in function of
	$\epsilon$, $k$, $n$ and the size $q$ of the channel-input alphabet.   
\end{remark}
\begin{figure}[t]
	\centerline{
	%\hspace{-1cm}
		\subfloat[$k=3$ and a channel-input alphabet of size $q=5$. The age is in log scale.\label{fig:fig_ch8_diff_alph_age_opt_bounds}]{\includegraphics[scale=0.35,trim={1.6cm 1.6cm 2.2cm 2.2cm},clip]{Age_optimal_bounds_k_3_q_5}}
		\subfloat[$k=3$ and a channel-input alphabet of size
		$q=25$.\label{fig:fig_ch8_diff_alph_age_opt_bounds_k_10}]{\includegraphics[scale=0.35,trim={1.6cm 1.6cm 2.2cm 2.2cm},clip]{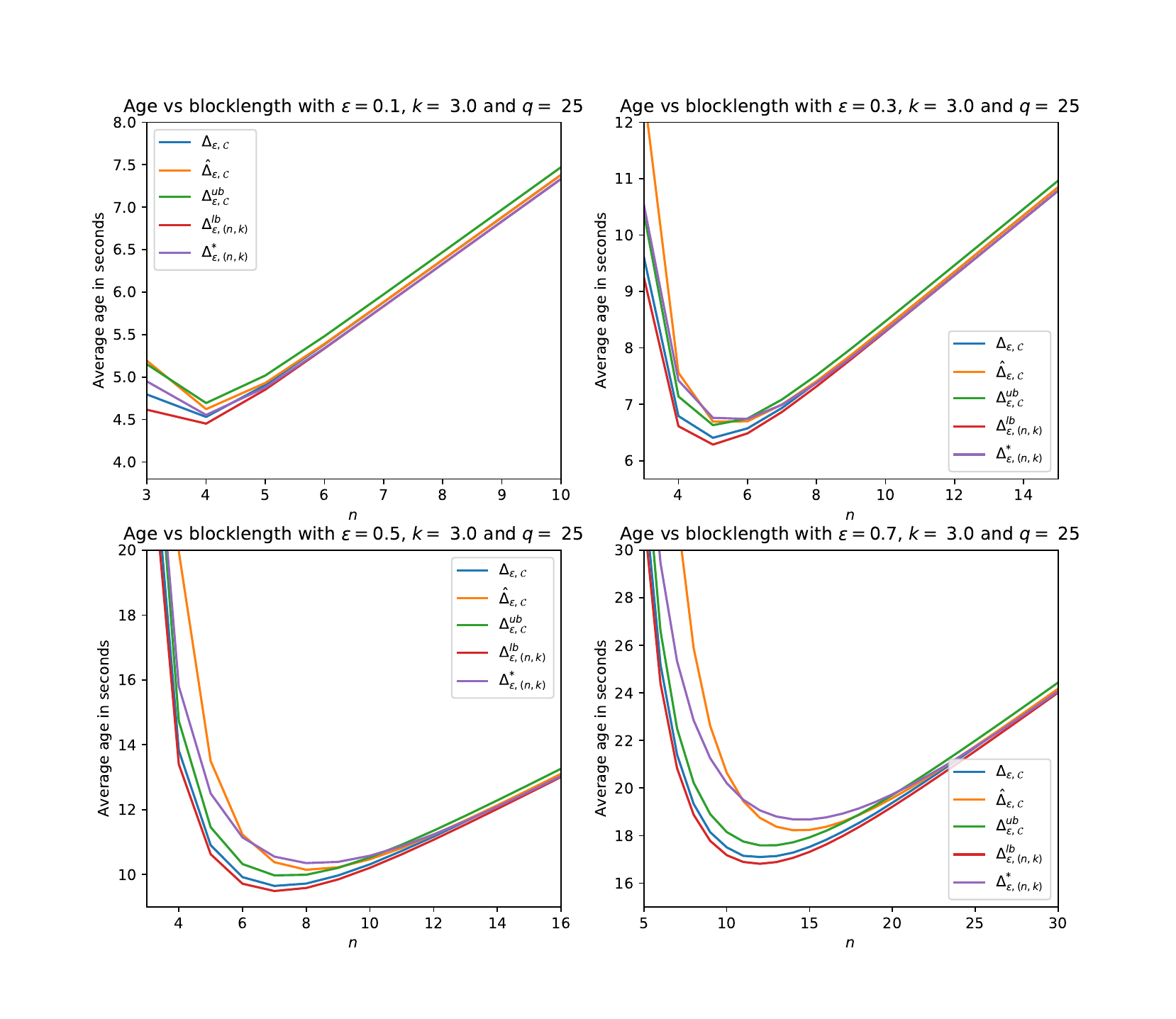}}
	}
	\caption{Bounds on $\Delta_{\epsilon,\mathcal{C}}$ with respect to the blocklength $n$.}
	%\label{fig:fig_ch8_diff_alph_age_opt_bounds_k_10}
\end{figure}

\subsubsection{Another Upper Bound on $\Delta_{\epsilon,\mathcal{C}}$}
We derive here a second upper bound on $\Delta_{\epsilon,\mathcal{C}}$ which is easier to compute than

$\Delta_{\epsilon,(n,k)}^{ub}$. First recall from \Cref{thm:thm_ch8_diff_alph_avg_age_expr} that 
\begin{equation}
\Delta_{\epsilon,\mathcal{C}} = \E(T)-1+\frac{n(1+\epsilon_p)}{2(1-\epsilon_p)}+[-t_0^s].
\end{equation}

However,

\begin{align}
	\E(T)	&= \sum_{i=k}^n i\frac{\Prob(B=i)}{\Prob(B\leq n)}\nn
		&= \frac{1}{\Prob(B\leq n)} \lp \sum_{i=k}^\infty i\Prob(B=i) - \sum_{i=n+1}^\infty i\Prob(B=i)\rp\nn
		&= \frac{1}{\Prob(B\leq n)} \lp \E(B) - \sum_{i=n+1}^\infty i\Prob(B=i)\rp\nn
		&\leq \frac{1}{\Prob(B\leq n)} \lp \E(B) - (n+1)(1-\Prob(B\leq n))\rp.
\end{align}

Hence,
\begin{align}
	\label{eq:eq_ch8_diff_alph_ub_2_1}
	\Delta_{\epsilon,\mathcal{C}}	&\leq \frac{1}{\Prob(B\leq n)} \lp \E(B) - (n+1)(1-\Prob(B\leq n))\rp -1
	+\frac{n(1+\epsilon_p)}{2(1-\epsilon_p)}+[-t_0^s]\nn
					&= \frac{1}{1-\epsilon_p} \lp \E(B) - (n+1)\epsilon_p\rp -1
	+\frac{n(1+\epsilon_p)}{2(1-\epsilon_p)}+[-t_0^s]\nn
					&= \frac{\E(B)-1}{1-\epsilon_p}+\frac{n}{2}+[-t_0^s].
\end{align}

Whereas $\E(B)$ (given in \eqref{eq:eq_ch8_diff_alph_EB}) is easy to compute, 
\begin{equation}
\epsilon_p=\Prob(B\geq n+1)
\end{equation}
is hard to compute
due to the complex nature of the distribution of $B$ (given in \Cref{lemma:lemma_ch8_diff_alph_distribution_B}). To solve this
problem, we use $\hat{B}$ as defined in \Cref{def:def_ch8_diff_alph_lb_ub} to upper bound $\epsilon_p$. Indeed, from {\Cref{corollary:corollary_ch8_diff_alph_coupling}} we know that
\begin{equation}
 \epsilon_p=\Prob(B\geq n+1)\leq \Prob(\hat{B}\geq n+1).
\end{equation}
Hence, 

\begin{equation}
	\Delta_{\epsilon,\mathcal{C}}\leq  \frac{\E(B)-1}{\Prob(\hat{B}\leq n)}+\frac{n}{2}+[-t_0^s].
\end{equation}
Therefore, using  \eqref{eq:eq_ch8_diff_alph_EB}, the new upper bound $\hat{\hat{\Delta}}_{\epsilon,(n,k)}$ is 
\begin{align}
	\label{eq:eq_ch8_diff_alph_ub_2_expr}
	\hat{\hat{\Delta}}_{\epsilon,(n,k)}	&= \frac{\E(B)-1}{\Prob(\hat{B}\leq n)}+\frac{n}{2}+[-t_0^s] \nn
						&= \frac{-1 + \frac{q^k-1}{1-\epsilon}\sum_{s=0}^{k-1}
						\frac{1}{q^k-q^s}}{\Prob(\hat{B}\leq n)} + \frac{n}{2}+[-t_0^s]\nn
						&= \frac{\epsilon + \lp q^k-1\rp\sum_{s=1}^{k-1}
						\lp q^k-q^s\rp^{-1}}{(1-\epsilon)\Prob(\hat{B}\leq n)} + \frac{n}{2}+[-t_0^s].
\end{align}

\subsection{Numerical Results}
\begin{figure*}[t]
	\centerline{
	\subfloat[$q=5$.\label{fig:fig_ch8_diff_alph_bounds_opt_region}]{
		\includegraphics[scale=0.5,trim={0.8cm 0.3cm 1.5cm 0.8cm},clip]{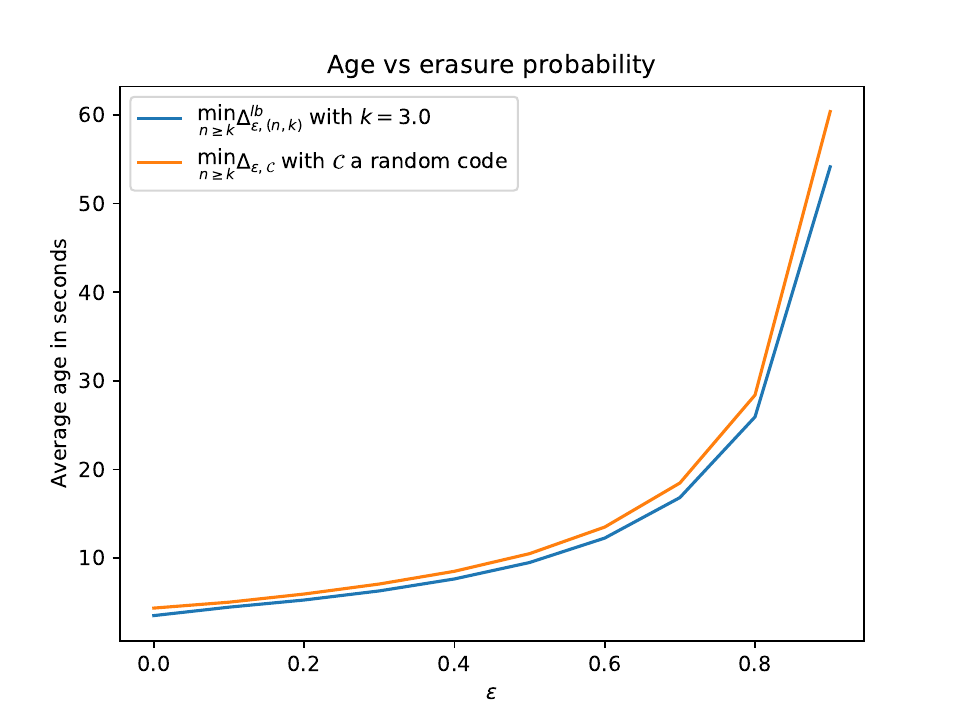}
	}
	\hfill
	\subfloat[$q=25$.\label{fig:fig_ch8_diff_alph_bounds_opt_region_k_10}]{
		\includegraphics[scale=0.5,trim={0.8cm 0.3cm 1.5cm 0.8cm},clip]{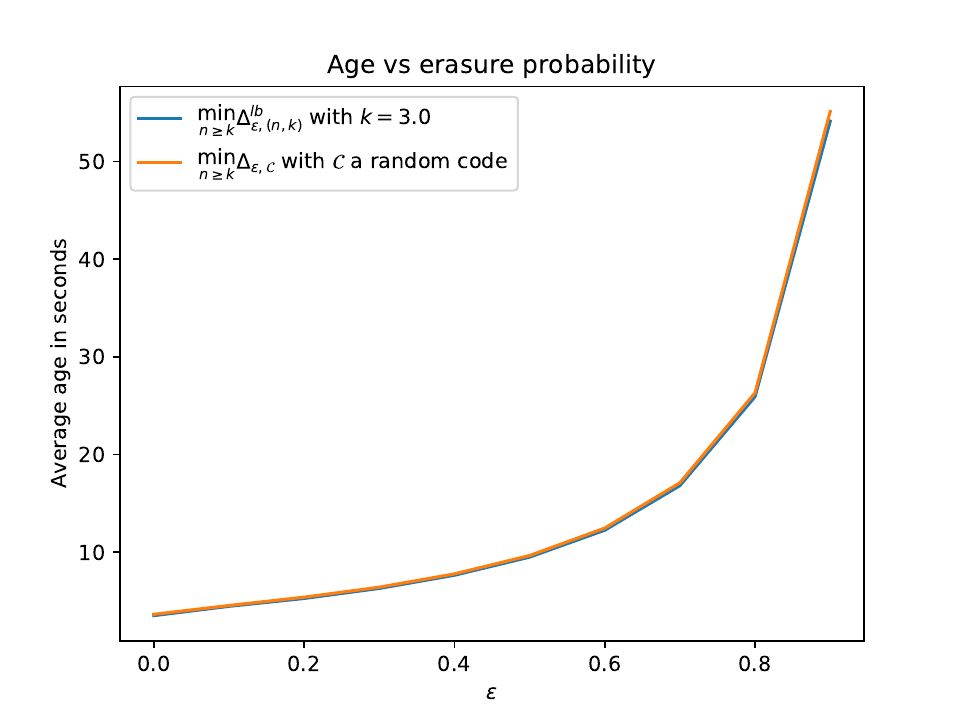}
	}}
	\caption{Bounds on the optimal achievable age $\Delta_\epsilon$	with $k=3$.}
\end{figure*}
\label{subsec:subsec_ch8_numerical_results}
Fig.~\ref{fig:fig_ch8_diff_alph_age_opt_bounds} and Fig.~\ref{fig:fig_ch8_diff_alph_bounds_opt_region} correspond to a system
with {$k=3$, $q=|\mathcal{V}|=5$, $t_0^s=0$, and using a random $(n,k)$-coding scheme} $\mathcal{C}$.
Fig.~\ref{fig:fig_ch8_diff_alph_age_opt_bounds} plots $\Delta_{\epsilon,\mathcal{C}}$ as well as the bounds and the approximation
derived in \Cref{subsec:subsec_ch8_diff_alph_bounding_age,subsec:subsec_ch8_diff_alph_other_bounds} with respect to the
blocklength $n$, for four erasure channels with erasure probabilities $0.1, 0.3, 0.5, 0.8$. The tightness of the bounds with
respect to $\Delta_{\epsilon,\mathcal{C}}$ differs according to the erasure probability: 
\begin{itemize}
	\item For all error probabilities, we notice that the upper bound {$\hat{\hat{\Delta}}_{\epsilon,(n,k)}$} (the orange
		curve) is very tight (almost equal to $\Delta_{\epsilon,\mathcal{C}}$) at large enough $n$. However, the value
		$n^*$ of the blocklength $n$ starting which {$\hat{\hat{\Delta}}_{\epsilon,(n,k)}$} becomes tight depends on $\epsilon$: The larger the
		erasure probability, the larger the blocklength $n$. For instance, for $\epsilon=0.1$ we have $n^*=7$. But for
		$\epsilon=0.5$, $n^*=12$ and for $\epsilon=0.8$ we have $n^*=30$. For $n> n^*$, the upper bound
		{$\hat{\hat{\Delta}}_{\epsilon,(n,k)}$} is tighter than all other bounds. Notice that for {every} $n$ and {every}
		$\epsilon$, $\Delta_{\epsilon,\mathcal{C}}\leq \hat{\Delta}_{\epsilon,\mathcal{C}}$.
	\item For the approximation $\Delta_{\epsilon,(n,k)}^{*}$, we notice that it becomes tighter as the erasure probability
		becomes larger. This is true especially at low values of $n$, more particularly for $n<n^*$. For this range of
		blocklength values the approximation $\Delta_{\epsilon,(n,k)}^{*}$ is the extremely close to
		$\Delta_{\epsilon,\mathcal{C}}$.
	\item For {every $n$ and every $\epsilon$, we have} $\Delta_{\epsilon,(n,k)}^{lb}\leq
		\Delta_{\epsilon,\mathcal{C}}$ and $\Delta_{\epsilon,(n,k)}^{lb}\leq \Delta_{\epsilon,(n,k)}^{*}$. We notice
		that, for all values of $\epsilon$, $\Delta_{\epsilon,(n,k)}^{lb}$ is close to $\Delta_{\epsilon,\mathcal{C}}$
		at large $n$. Whereas, for small values of $n$, this lower bound does not show any noticeable behavioral
		modification as $\epsilon$ increases.
	\item The upper bound {$\Delta_{\epsilon,(n,k)}^{ub}$} is always larger than
		$\Delta_{\epsilon,\mathcal{C}}$. Even though at $n>n^*$ we observe that {$\hat{\hat{\Delta}}_{\epsilon,(n,k)}\leq
		\Delta_{\epsilon,(n,k)}^{ub}$}, for $n\leq n^*$ the upper bound {$\Delta_{\epsilon,(n,k)}^{ub}$} is
		closer to $\Delta_{\epsilon,\mathcal{C}}$ than {$\hat{\hat{\Delta}}_{\epsilon,(n,k)}$. In fact, as $\epsilon$
		increases, the gap between the two upper bounds also increases.}
\end{itemize}
Fig.~\ref{fig:fig_ch8_diff_alph_age_opt_bounds} also suggests that there exists, for each erasure probability, an optimal
blocklength that minimizes $\Delta_{\epsilon,\mathcal{C}}$. This echoes the observations presented in \cite{NajmYatesSoljanin-ISIT2017} and in
\cite{RoyNajmSoljaninZhong-ISIT2017}. Moreover, each bound also has its optimal blocklength. Although the
channel-input alphabet chosen is small ($k=3$ and $q=5$), we remark that the gap between $\Delta_{\epsilon,\mathcal{C}}$
and the lower bound $\Delta_{\epsilon,(n,k)}^{lb}$ is not too great irrespective of the value of $\epsilon$. This means that
even for small channel-input alphabets, the performance of the optimal linear code is not too far from the performance achieved by
random coding. This idea is illustrated in Fig.~\ref{fig:fig_ch8_diff_alph_bounds_opt_region}. In this last figure, we find and
plot, at each value of $\epsilon$, the minimum (with respect to $n$) of $\Delta_{\epsilon,\mathcal{C}}$ and
$\Delta_{\epsilon,(n,k)}^{lb}$. We observe that these two minimums are close to each other. Since
{$\min_{n\geq k}\Delta_{\epsilon,(n,k)}^{lb} \leq \Delta_\epsilon^{lin}\leq \min_{n\geq k}\Delta_{\epsilon,\mathcal{C}}$}, then
Fig.~\ref{fig:fig_ch8_diff_alph_bounds_opt_region} suggests that, for {every} $\epsilon$, if we use the optimal blocklength, then
random codes achieve an age-performance close to the optimal linear code.

Fig.~\ref{fig:fig_ch8_diff_alph_age_opt_bounds_k_10} and Fig.~\ref{fig:fig_ch8_diff_alph_bounds_opt_region_k_10} mirror
Fig.~\ref{fig:fig_ch8_diff_alph_age_opt_bounds} and Fig.~\ref{fig:fig_ch8_diff_alph_bounds_opt_region} respectively, but for a
larger channel-input alphabet with $q=25$. We can apply the same analysis as the one we just presented for the case $q=5$.
%%%%%%
% Add comment about how small q is still good
%%%%
%%with one addition: 
In this case we can notice the effect of increasing the size of the channel-input alphabet, while keeping $k$
constant. In fact, comparing Fig.~\ref{fig:fig_ch8_diff_alph_age_opt_bounds}
and Fig.~\ref{fig:fig_ch8_diff_alph_age_opt_bounds_k_10}, we observe a clear
convergence of $\Delta_{\epsilon,\mathcal{C}}$ toward the lower bound $\Delta_{\epsilon,(n,k)}^{lb}$. In
Fig.~\ref{fig:fig_ch8_diff_alph_age_opt_bounds_k_10}, the
approximation $\Delta_{\epsilon,(n,k)}^{*}$ is not as tight as for the case of $q=5$, for all $\epsilon$ and $n$. Indeed, we
can notice that, for $\epsilon=0.9$,  $\Delta_{\epsilon,(n,k)}^{*}$ is worse than {$\Delta_{\epsilon,(n,k)}^{ub}$} for
$n\leq 20$. For large
$n$, all bounds are tight except for the upper bound {$\Delta_{\epsilon,(n,k)}^{ub}$}. In fact, in
Fig.~\ref{fig:fig_ch8_diff_alph_age_opt_bounds_k_10}, the lower bound $\Delta_{\epsilon,(n,k)}^{lb}$ is the tightest bound on
$\Delta_{\epsilon,\mathcal{C}}$. However, the convergence of $\Delta_{\epsilon,\mathcal{C}}$ toward the lower bound
$\Delta_{\epsilon,(n,k)}^{lb}$ is best observed in Fig.~\ref{fig:fig_ch8_diff_alph_bounds_opt_region_k_10}. In this figure, we
remark that the performance of the random code with the optimal blocklength is almost optimal.
These simulations support our claim that random codes are age-optimal as $q$ grows and the channel-input alphabet becomes large.
%%%%%%%%%%%%%%%%%%%%%%%%%%%%%%%%%%%%%%%%%%%%%%%%%%%%%%%%%%%%%%%%%%%%%
\section{Conclusion}
\label{sec:sec_ch8_conclusion}
In this paper, we have studied the optimal achievable average age over an erasure channel in two scenarios: $(i)$ %In the first scenario
%we have considered 
When the source alphabet and channel-input alphabet are be the same, and $(ii)$ when %we have
they are different. We have demonstrated that in the first case, we do not need
any type of channel coding to achieve the minimal average age, for which  we have computed the exact expression. As for the second
case, we have used random coding technique to compute bounds on the optimal achievable age. We have also
shown that {for a large enough source alphabet, random codes are (almost) age-optimal among linear block codes}. Finally, the numerical results have pointed out an interesting
observation: Even for a small source alphabet, the performance of random codes is not too far from optimal from an age point of
view.
%In this paper, we have studied the optimal achievable average age over an erasure channel in two scenarios: In the first scenario
%we have considered the source alphabet and channel-input alphabet to be the same. Whereas, in the second scenario, we have
%assumed the source alphabet to be different than the channel-input alphabet. We have demonstrated that in the first case, we do not need
%any type of channel coding to achieve the minimal average age, for which  we have computed the exact expression. As for the second
%case, we have first made use of the random coding technique to compute bounds on the optimal achievable age. However, we have
%then shown that random codes are age-optimal for large enough source alphabet. Finally, the numerical results have pointed out an interesting
%observation: Even for a small source alphabet, the performance of random codes is not too far from optimal from an age point of
%view.
%%%%%%%%%%%%%%%%%%%%%%%%%%%%%%%%%%%%%%%%%%%%%%%%%%%%%%%%%%%%%%%%%%%%%
\section*{Acknowledgements}

We would like to thank Roy Yates and an anonymous reviewer for helpful comments. This research was supported in part by grant No.  200021\_166106/1 of the Swiss National Science Foundation.

% trigger a \newpage just before the given reference
% number - used to balance the columns on the last page
% adjust value as needed - may need to be readjusted if
% the document is modified later
%\IEEEtriggeratref{8}
% The "triggered" command can be changed if desired:
%\IEEEtriggercmd{\enlargethispage{-5in}}

\appendix

%\section{On the Equidistribution Theory}
%\label{sec:sec_ch8_appendix}
\subsection{Equidistribution and Weyl's Equidistribution Theorem}
\label{subsec:subsec_ch8_weyl_thm}

In this section\footnote{The material in this section is based on \cite{Weyl1916,chandrasekharan1968}.}, for {every} real number
$x$, we use $[x]$ to denote its fractional part, i.e., $[x]=x-\lfloor x\rfloor$.
\begin{definition}
	\label{def:def_ch8_equidistribution}
	A sequence $(u_i)_{i\geq1} \in[0,1)$ is said to be equidistributed on $[0,1)$ if for {every} interval
	$(a,b)\subset [0,1]$ we have 
	\begin{equation}
		\label{eq:eq_ch8_equidistribution_def}
		\lim_{N\to\infty}\frac{1}{N} \left|\Big\{1\leq i\leq N:\; u_i\in (a,b)\Big\}\right|=b-a,
	\end{equation}
	where $|A|$ denotes the cardinality of the set $A$.
\end{definition}
\begin{remark}
\label{rmk:rmk_ch8_equidistribution_def}
In \Cref{def:def_ch8_equidistribution}, we can replace $(a,b)$ with $[a,b)$, $(a,b]$ or $[a,b]$ in
	\eqref{eq:eq_ch8_equidistribution_def} and the limit still holds.
	\end{remark}

\begin{thm}
	\label{thm:thm_ch8_app_weyl_equidistribution_criterion}
	Let $(u_i)_{i\geq1}$ be a sequence of real numbers and denote by $[u_i]=u_i-\lfloor u_i\rfloor$ the fractional part of
	$u_i$. Then the following are equivalent:
	\begin{enumerate}
		\item The sequence $([u_i])_{i\geq1}$ is equidistributed on $[0,1{[}$.
		\item For {every} $k\in\mathbb{N}^*$,
			\begin{equation}
				\label{eq:eq_ch8_app_exp_criterion}
				\lim_{N\to\infty} \frac{1}{N} \sum_{i=1}^N e^{j2\pi ku_i}=0,
			\end{equation}
			where $j^2 = -1$.
		\item For {every} Riemann-integrable function $f:[0,1]\rightarrow \mathbb{C}$, we have
			\begin{equation}
				\label{eq:eq_ch8_app_integral_criterion}
				\lim_{N\to\infty} \frac{1}{N}\sum_{i=1}^N f([u_i]) = \int_0^1 f(x)\mathrm{d}x.
			\end{equation}
		\end{enumerate}
\end{thm}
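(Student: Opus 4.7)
The plan is to establish the chain $3\Rightarrow 1$, $1\Rightarrow 3$, $3\Rightarrow 2$, $2\Rightarrow 3$, which gives the three-way equivalence. Two of these are nearly immediate; the real work is in $2\Rightarrow 3$.

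First, $3\Rightarrow 1$: given an interval $(a,b)\subset[0,1]$, the indicator $f=\mathbf{1}_{(a,b)}$ is Riemann-integrable with $\int_0^1 f=b-a$, while $\frac{1}{N}\sum_{i=1}^N f([u_i])=\frac{1}{N}|\{1\leq i\leq N;[u_i]\in(a,b)\}|$. So 3 applied to this $f$ is exactly the equidistribution condition. Next, $3\Rightarrow 2$: apply 3 to $f(x)=e^{j2\pi kx}$, which is continuous hence Riemann-integrable, and whose integral over $[0,1]$ vanishes for every nonzero integer $k$; use $e^{j2\pi k u_i}=e^{j2\pi k [u_i]}$.

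For $1\Rightarrow 3$: I would use a standard sandwich-by-step-functions argument. Since $f$ is Riemann-integrable, for every $\varepsilon>0$ there exist step functions $\phi_\varepsilon\leq f\leq \psi_\varepsilon$ on $[0,1]$ with $\int(\psi_\varepsilon-\phi_\varepsilon)<\varepsilon$. By linearity and 1 (using Remark on the closed/open interval variants), the empirical averages $\frac{1}{N}\sum_{i=1}^N \phi_\varepsilon([u_i])$ and $\frac{1}{N}\sum_{i=1}^N \psi_\varepsilon([u_i])$ converge to $\int\phi_\varepsilon$ and $\int\psi_\varepsilon$ respectively. Sandwiching gives
\[
\int\phi_\varepsilon\leq \liminf_N \tfrac{1}{N}\sum_{i=1}^N f([u_i])\leq \limsup_N \tfrac{1}{N}\sum_{i=1}^N f([u_i])\leq \int\psi_\varepsilon,
\]
and both bounds lie within $\varepsilon$ of $\int_0^1 f$; letting $\varepsilon\downarrow 0$ finishes this direction. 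A small subtlety is that this argument is most naturally carried out for real-valued $f$; the complex case follows by treating real and imaginary parts separately.

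The main obstacle is $2\Rightarrow 3$, for which I would use trigonometric approximation. By linearity of the empirical average and of the integral, condition 2 immediately implies that 3 holds for every trigonometric polynomial $P(x)=\sum_{k=-K}^K c_k e^{j2\pi kx}$ (the $k=0$ term contributes $c_0=\int_0^1 P$ and all other terms vanish in both sides). The next step is to pass from trigonometric polynomials to continuous $1$-periodic functions via Fejér's theorem (or Weierstrass' approximation theorem for periodic continuous functions): uniform approximation preserves convergence of both the empirical averages and the integral, so 3 holds for continuous periodic $f$. Finally, for a general Riemann-integrable $f:[0,1]\to\mathbb{C}$, I would sandwich $f$ between two continuous $1$-periodic functions $g_\varepsilon\leq f\leq h_\varepsilon$ with $\int(h_\varepsilon-g_\varepsilon)<\varepsilon$ (first sandwich $f$ between step functions, then smooth the step functions into continuous periodic ones at the cost of an arbitrarily small integral increase, being careful at the endpoints $x=0,1$ to force $g_\varepsilon(0)=g_\varepsilon(1)$ and similarly for $h_\varepsilon$), and then apply the continuous-periodic case to $g_\varepsilon$ and $h_\varepsilon$ together with the same liminf/limsup sandwich as in $1\Rightarrow 3$. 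The delicate point throughout is the endpoint/periodicity adjustment: we evaluate on $[u_i]\in[0,1)$, but approximation by trigonometric polynomials naturally lives on the circle, so a little care is required when promoting step functions to continuous periodic ones without losing the inequalities.
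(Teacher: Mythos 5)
The paper itself does not prove this theorem; after stating it, the authors write that the proof is outside the scope of the text and refer the reader to the reference \cite{chandrasekharan1968}. So there is no in-paper argument to compare against.

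Your sketch is, nevertheless, a correct rendition of the standard Weyl argument, and the individual implications are sound. The two easy directions $3\Rightarrow 1$ (apply~3 to an indicator) and $3\Rightarrow 2$ (apply~3 to $e^{j2\pi kx}$ and use periodicity of the exponential) are exactly right. For $1\Rightarrow 3$ the step-function sandwich is the standard route, and you correctly invoke the paper's Remark~\ref{rmk:rmk_ch8_equidistribution_def} to handle the closed/half-open endpoint variants before taking linear combinations. The heart of the matter is $2\Rightarrow 3$, and your three-stage ladder --- trigonometric polynomials (immediate from 2 and linearity), then continuous $1$-periodic functions via Fej\'er/Weierstrass uniform approximation, then general Riemann-integrable functions via a two-sided continuous-periodic sandwich --- is the canonical argument. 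You are also right to flag the endpoint adjustment (forcing $g_\varepsilon(0)=g_\varepsilon(1)$, and likewise for $h_\varepsilon$) as the one genuinely delicate point when promoting step-function bounds to continuous periodic bounds; it is straightforward to fix by shaving small triangles near the jumps and near $x=0,1$, at an arbitrarily small cost in the integral, but it would strengthen the write-up to spell that construction out rather than leave it as a remark. One cosmetic note: you only need the implications $2\Rightarrow 3\Rightarrow 1$ together with $1\Rightarrow 3\Rightarrow 2$ (or any cycle), so proving both $3\Rightarrow 1$ and $1\Rightarrow 3$ and both $3\Rightarrow 2$ and $2\Rightarrow 3$ is slightly redundant; a single cycle $1\Rightarrow 2\Rightarrow 3\Rightarrow 1$ would be leaner, with $1\Rightarrow 2$ handled the same way as your $1\Rightarrow 3$ specialized to $f(x)=e^{j2\pi kx}$.
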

The proof of \Cref{thm:thm_ch8_app_weyl_equidistribution_criterion} is outside the scope of this paper but we encourage the
reader to check \cite{chandrasekharan1968} for the full proof. An important application of this theorem is given next.

\begin{corollary}
	\label{corr:corr_ch8_app_equidistribution_average}
 If $(u_i)_{i\geq 1}$ is a sequence that is equidistributed over $[0,1)$, then we have
\begin{equation}
\lim_{N\to\infty}\frac{1}{N}\sum_{i=1}^N u_i=\frac{1}{2}.
\end{equation}

\end{corollary}
\begin{proof}
From the third condition of \Cref{thm:thm_ch8_app_weyl_equidistribution_criterion} we have
\begin{equation}
\lim_{N\to\infty}\frac{1}{N}\sum_{i=1}^N u_i=\int_0^1 x\mathrm{d}x=\frac{1}{2}.
\end{equation}
\end{proof}

\subsection{A variation of the strong law of large numbers}
In this section, we prove a well known variation of the strong law of large numbers.

\begin{lemma}
\label{lemma:lemma_as_convergence_to_zero}
If $(X_i)_{i\geq 1}$ is a sequence of complex-valued random variables satisfying $\displaystyle\sum_{i=1}^\infty\E(|X_i|^2)<\infty$, then almost surely, we have $\displaystyle\lim_{i\to\infty} X_i=0$.
\end{lemma}
\begin{proof}
Observe that $\displaystyle \E\left(\sum_{i=1}^\infty |X_i|^2\right)=\sum_{i=1}^\infty\E\left( |X_i|^2\right)<\infty$. Therefore, $\displaystyle \Prob\left(\sum_{i=1}^\infty |X_i|^2=\infty\right)=0$ which can be rewritten as $\displaystyle \Prob\left(\sum_{i=1}^\infty |X_i|^2<\infty\right)=1$. This implies that $\displaystyle \Prob\left(\lim_{i\to\infty} |X_i|^2=0\right)=1$. We conclude that almost surely, we have $\displaystyle\lim_{i\to\infty} X_i=0$.
\end{proof}

\begin{proposition}
\label{prop:prop_slln}
Let $(X_i)_{i\geq 1}$ be a sequence of complex-valued random variables. If there exists $0<C<\infty$ such that
\begin{equation}
 \forall l\geq 1,\quad \sum_{i=1}^{\infty} |\E(X_iX_l^\ast)|\leq C,
\end{equation}
then almost surely
\begin{equation}
\lim_{N\to\infty}\frac{1}{N}\sum_{i=1}^N X_i=0.
\end{equation}
\end{proposition}
\begin{proof}
Let $S_0=0$ and for every $N\geq 1$, let 
\begin{equation}
S_N=\sum_{i=1}^NX_i.
\end{equation}

For every $N_2>N_1\geq 0$, we have
\begin{align}
\E\left(\left|S_{N_2}-S_{N_1}\right|^2\right)
&=\E\left(\left(\sum_{i_1=N_1+1}^{N_2} X_{i_1}\right)\left(\sum_{i_2=N_1+1}^{N_2} X_{i_2}\right)^{\ast}\right)= \sum_{i_1=N_1+1}^{N_2}\sum_{i_2=N_1+1}^{N_2}\E\left(X_{i_1}X_{i_2}^\ast\right)\nn
&\leq \sum_{i_1=N_1+1}^{N_2}\sum_{i_2=N_1+1}^{N_2}\left|\E\left(X_{i_1}X_{i_2}^\ast\right)\right|\leq \sum_{i_1=N_1+1}^{N_2}\sum_{i_2=1}^{\infty}\left|\E\left(X_{i_1}X_{i_2}^\ast\right)\right|\leq \sum_{i_1=N_1+1}^{N_2}C= C\cdot(N_2-N_1).\label{eq:eq_expectation_diff_squared_bound}
\end{align}

In particular, for every $N\geq 1$, we have
\begin{equation}
\E\left(\left|S_N\right|^2\right)=\E\left(\left|S_N-S_{0}\right|^2\right)\leq C N.
\end{equation}

Therefore,
\begin{equation}
\sum_{N=1}^{\infty}\E\left(\left|\frac{S_{N^2}}{N^2}\right|^2\right)\leq\sum_{N=1}^\infty\frac{C N^2}{N^4}=\sum_{N=1}^\infty\frac{C }{N^2}<\infty.
\end{equation}

Lemma \ref{lemma:lemma_as_convergence_to_zero} now implies that 
\begin{equation}
\label{eq:eq_subsequence_converges}
\text{a.s.},\quad\lim_{N\to\infty}\frac{S_{N^2}}{N^2}=0.
\end{equation}

Now, for every $N\geq 1$, define
\begin{equation}
D_N=\sup_{N^2\leq i<(N+1)^2}|S_i-S_{N^2}|.
\end{equation}

We have
\begin{align}
\E(D_N^2)
&=\E\left(\sup_{N^2\leq i<(N+1)^2}|S_i-S_{N^2}|^2\right)\leq\E\left(\sum_{i=N^2+1}^{(N+1)^2-1}|S_i-S_{N^2}|^2\right)\nn
&\stackrel{(\ast)}{\leq} \sum_{i=N^2+1}^{(N+1)^2-1}C(i-N^2)\leq C \sum_{i=N^2+1}^{(N+1)^2-1}((N+1)^2-1-N^2)= 4C N^2,
\end{align}
where $(\ast)$ follows from \eqref{eq:eq_expectation_diff_squared_bound}. Thus,
\begin{align}
\sum_{N=1}^\infty\E\left(\left|\frac{D_N}{N^2}\right|^2\right)\leq \sum_{N=1}^\infty\frac{4C N^2}{N^4}<\infty.
\end{align}
It follows from Lemma \ref{lemma:lemma_as_convergence_to_zero} that
\begin{equation}
\label{eq:eq_subsequence_converges_2}
\text{a.s.},\quad\lim_{N\to\infty}\frac{D_N}{N^2}=0.
\end{equation}

Now observe that for every $N\geq 1$, we have
\begin{align}
\frac{|S_N|}{N}
&\leq \frac{|S_N|}{\lfloor\sqrt{N}\rfloor^2}\leq \frac{\left|S_{\lfloor\sqrt{N}\rfloor^2}\right|}{\lfloor\sqrt{N}\rfloor^2}+\frac{\left|S_N-S_{\lfloor\sqrt{N}\rfloor^2}\right|}{\lfloor\sqrt{N}\rfloor^2}\leq \frac{\left|S_{\lfloor\sqrt{N}\rfloor^2}\right|}{\lfloor\sqrt{N}\rfloor^2}+\frac{D_{\lfloor\sqrt{N}\rfloor}}{\lfloor\sqrt{N}\rfloor^2}.
\end{align}

Equations \eqref{eq:eq_subsequence_converges} and \eqref{eq:eq_subsequence_converges_2} now imply that
\begin{equation}
\text{a.s.},\quad\lim_{n\to\infty}\frac{S_N}{N}=0.
\end{equation}
\end{proof}

\begin{corollary}
\label{corr:corr_slln_exp}
Let $(X_i)_{i\geq 1}$ be a sequence of complex-valued random variables. If there exists $0<C<\infty$ and $0<\beta<1$ such that for every $i,l\geq 1$ we have
\begin{equation}
|\E(X_iX_l^\ast)|\leq C\cdot \beta^{|i-l|},
\end{equation}
then almost surely
\begin{equation}
\lim_{N\to\infty}\frac{1}{N}\sum_{i=1}^N X_i=0.
\end{equation}
\end{corollary}
\begin{proof}
This is a direct corollary of \Cref{prop:prop_slln}.
\end{proof}

%\begin{proposition}
%\label{prop:prop_slln_general_average}
%Let $(X_n)_{n\geq 1}$ be a sequence of complex-valued random variables. Assume that $\E(|X_n|^2)<\infty$ for every $n\geq 1$, and assume that $\displaystyle\lim_{n\to\infty}\frac{1}{n}\sum_{i=1}^n\E(X_i)$ exists. If there exists $0<C<\infty$ such that for every $m\geq 1$ we have
%$$\sum_{n=1}^{\infty} |\E(X_nX_m^\ast)-\E(X_n)\E(X_m)^\ast|\leq C,$$
%then almost surely
%$$\lim_{n\to\infty}\frac{1}{n}\sum_{i=1}^n X_i=\lim_{n\to\infty}\frac{1}{n}\sum_{i=1}^n\E(X_i).$$
%\end{proposition}
%\begin{proof}
%Apply \Cref{prop:prop_slln} to $\tilde{X}_i=X_i-\E(X_i)$.
%\end{proof}
%
%\begin{corollary}
%\label{corr:corr_slln_exp}
%Let $(X_n)_{n\geq 1}$ be a sequence of complex-valued random variables. Assume that $\E(|X_n|^2)<\infty$ for every $n\geq 1$, and assume that $\displaystyle\lim_{n\to\infty}\frac{1}{n}\sum_{i=1}^n\E(X_i)$ exists.  If there exists $0<C<\infty$ and $0<r<1$ such that for every $n,m\geq 1$ we have
%$$\sum_{n=1}^{\infty} |\E(X_nX_m^\ast)-\E(X_n)\E(X_m)^\ast|\leq C\cdot r^{|n-m|},$$
%then almost surely
%$$\lim_{n\to\infty}\frac{1}{n}\sum_{i=1}^n X_i=\lim_{n\to\infty}\frac{1}{n}\sum_{i=1}^n\E(X_i).$$
%\end{corollary}

\subsection{Proof of \Cref{lemma:lemma_ch8_weyl_equidistribution}}
\label{subsec:subsec_ch8_app_weyl_equi_proof}
{
Let $(X_l)_{l\geq 1}$ be a sequence of independent and identically distributed random variables which take values in the set of strictly positive natural numbers $\mathbb{N}^\ast$ and which satisfy $\E(X_l^2)=\E(X^2)<\infty$. Let $S_0=0$ and $\displaystyle S_l=\sum_{r=1}^lX_r$ for $l\geq 1$. For every $i\geq 0$, let
\begin{equation}
L_i=\max\left\{l\geq 0: S_l\leq i\right\},
\end{equation}
and
\begin{equation}
Y_i=\max\left\{S_l:l\geq 0\text{ and }S_l\leq i\right\}.
\end{equation}
Clearly, we have $Y_i=S_{L_i}$. Furthermore, since $X_r\geq 1$ for every $r\geq 1$, we have $L_i\leq i$ with probability 1.

Let $\rho\in\mathbb{R}\setminus\mathbb{Q}$ be an irrational number, and let $\alpha\in\mathbb{R}$ be an arbitrary real number. From \Cref{corr:corr_ch8_app_equidistribution_average} and the second criterion of \Cref{thm:thm_ch8_app_weyl_equidistribution_criterion}, we know that in order to show that
		\begin{equation}
%		\label{eq:eq_ch8_weyl_equidistribution}
\text{a.s.,}\quad		\lim_{N\to\infty} \frac{1}{N}\sum_{i=1}^{N} [\rho Y_i+\alpha] = \frac{1}{2},
	\end{equation}
it is sufficient to show that
\begin{equation}
\text{a.s.,}\quad\lim_{N\to\infty}\frac{1}{N}\sum_{i=1}^N e^{j2\pi k [\rho Y_i +\alpha]}=0,\quad\forall k\in\mathbb{N}^\ast.
\end{equation}

Now fix $k\in\mathbb{N}^{\ast}$. For every $N\geq 1$, we have
\begin{align}
\frac{1}{N}\sum_{i=1}^N e^{j2\pi k [\rho Y_i +\alpha]}&=\frac{1}{N}\sum_{i=1}^N e^{j2\pi k (\rho Y_i +\alpha)}=\frac{e^{j2\pi k\alpha}}{N}\sum_{i=1}^N e^{j2\pi k \rho Y_i}\nn
&=e^{j2\pi k\alpha}\left(\frac{Y_N}{N}\frac{1}{Y_N}\sum_{i=0}^{Y_N-1} e^{j2\pi k \rho Y_i}-\frac{1}{N}+\frac{1}{N}\sum_{i=Y_N}^{N} e^{j2\pi k \rho Y_i}\right).\label{eq:eq_equidist_rewrite_condition}
\end{align}

We would like to show that almost surely $\displaystyle\lim_{N\to\infty}\frac{Y_N}{N}=1$. First, observe that
\begin{equation}
\sum_{l=1}^{\infty}\E\left(\left|\frac{X_l}{l}\right|^2\right)=\sum_{l=1}^{\infty}\frac{\E\left(X_l^2\right)}{l^2}=\E\left(X^2\right)\sum_{l=1}^{\infty}\frac{1}{l^2}<\infty.
\end{equation}
It follows from \Cref{lemma:lemma_as_convergence_to_zero} that almost surely $\displaystyle\lim_{l\to\infty}\frac{X_l}{l}=0$.

It is easy to see that as $N\to\infty$, we have $L_N\to\infty$ and $Y_N\to\infty$. Now since $L_N\leq N$ with probability 1, we have
\begin{equation}
\text{a.s.},\quad0\leq\lim_{N\to\infty}\frac{X_{L_N+1}}{N}\leq \lim_{N\to\infty}\frac{X_{L_N+1}}{L_N}=\lim_{l\to\infty}\frac{X_l}{l-1}=0.
\end{equation}

Furthermore, since $Y_N\leq N< Y_N+X_{L_N+1}$, and since we have just showed that $\displaystyle\lim_{N\to\infty}\frac{X_{L_N+1}}{N}=0$, it follows that

\begin{equation}
\label{eq:eq_equidist_rewrite_condition_1}
\text{a.s.,}\quad\lim_{N\to\infty}\frac{Y_N}{N}=1.
\end{equation}

Now observe that

\begin{align}
\left|\frac{1}{N}\sum_{i=Y_N}^{N} e^{j2\pi k \rho Y_i}\right|
&\leq\frac{1}{N}\sum_{i=Y_N}^{N} \left|e^{j2\pi k \rho Y_i}\right|=\frac{N-Y_N+1}{N}=1-\frac{Y_N}{N}+\frac{1}{N},
\end{align}
which implies that
\begin{equation}
\label{eq:eq_equidist_rewrite_condition_2}
\text{a.s.,}\quad\lim_{N\to\infty}\frac{1}{N}\sum_{i=Y_N+1}^{N} e^{j2\pi k \rho Y_i}=0.
\end{equation}

From \eqref{eq:eq_equidist_rewrite_condition}, \eqref{eq:eq_equidist_rewrite_condition_1} and \eqref{eq:eq_equidist_rewrite_condition_2}, we conclude that it is sufficient to show that

\begin{equation}
\text{a.s.,}\quad\lim_{N\to\infty}\frac{1}{Y_N}\sum_{i=0}^{Y_N-1} e^{j2\pi k \rho Y_i}=0.
\end{equation}

Notice that 
\begin{equation}
\frac{1}{Y_N}\sum_{i=0}^{Y_N-1} e^{j2\pi k \rho Y_i}=\frac{L_N}{S_{L_N}}\frac{1}{L_N}\sum_{l=1}^{L_N} X_l  e^{j2\pi k \rho S_{l-1}}.
\end{equation}

Now since $L_N\to\infty$ as $N\to\infty$, the strong law of large numbers implies that
\begin{equation}
\text{a.s.,}\quad\lim_{N\to\infty}\frac{L_N}{S_{L_N}}=\displaystyle\lim_{M\to\infty}\frac{M}{S_M}=\frac{1}{\E(X)}<\infty.
\end{equation}
Therefore, it is sufficient to show that
\begin{equation}
\label{eq:eq_slln}
\text{a.s.,}\quad\lim_{M\to\infty}\frac{\tilde{S}_M}{M}=0,
\end{equation}
where
\begin{equation}
\tilde{S}_M=\sum_{l=1}^{M} Z_l\quad\text{and}\quad Z_l=X_l  e^{j2\pi k \rho S_{l-1}}.
\end{equation}

For every $l\geq 1$, we have
\begin{equation}
\E(|Z_l|^2)=\E(X_l^2)=\E(X^2)<\infty.
\end{equation}

Furthermore, for every $l_1>l_2\geq 1$, we have
\begin{align}
\E(Z_{l_1}Z_{l_2}^{\ast})
&=\E\left(X_{l_1}  e^{j2\pi k \rho S_{l_1-1}} X_{l_2}  e^{-j2\pi k \rho S_{l_2-1}}\right)=\E\left(X_{l_1}X_{l_2}   e^{j2\pi k \rho (S_{l_1-1}-S_{l_2-1})}\right)\nn
&=\E\left(X_{l_1}X_{l_2} \prod_{l=l_2}^{l_1-1}   e^{j2\pi k \rho X_i}\right)=\E\left(X_{l_1}X_{l_2} e^{j2\pi k \rho X_{l_2}} \prod_{l=l_2+1}^{l_1-1}   e^{j2\pi k \rho X_l}\right)\nn
&=\E\left(X_{l_1}\right)\E\left(X_{l_2} e^{j2\pi k \rho X_{l_2}}\right) \prod_{l=l_2+1}^{l_1-1}   \E\left(e^{j2\pi k \rho X_l}\right)=\E(X)\E\left(Xe^{j2\pi k \rho X}\right) \E\left(e^{j2\pi k \rho X}\right)^{l_1-l_2-1}.\label{eq:eq_correlation_Z}
\end{align}
Thus,
\begin{align}
|\E(Z_{l_1}Z_{l_2}^{\ast})|
&=\E(X)\cdot\left|\E\left(Xe^{j2\pi k \rho X}\right)\right|\cdot \left|\E\left(e^{j2\pi k \rho X}\right)\right|^{l_1-l_2-1}\nn
&\leq\E(X)\cdot\E\left(\left|Xe^{j2\pi k \rho X}\right|\right)\cdot \left|\E\left(e^{j2\pi k \rho X}\right)\right|^{l_1-l_2-1}\nn
&=\E(X)\cdot\E\left(X\right)\cdot \left|\E\left(e^{j2\pi k \rho X}\right)\right|^{l_1-l_2-1}\nn
&=\E(X)^2\cdot \left|\E\left(e^{j2\pi k \rho X}\right)\right|^{l_1-l_2-1}.\label{eq:eq_correlation_ineq_Z}
\end{align}

Now since $X$ is nondeterministic and takes values in $\mathbb{N}^\ast$, there are two different integers $x_1,x_2\in \mathbb{N}^\ast$ such that $\mathbb{P}(X=x_1)>0$ and $\mathbb{P}(X=x_2)>0$. We have
\begin{align}
\E\left(e^{j2\pi k \rho X}\right) &= \mathbb{P}(X=x_1)e^{j2\pi k \rho x_1} + \mathbb{P}(X=x_2)e^{j2\pi k \rho x_2} + \sum_{x\in\mathbb{N}^*\setminus\{x_1,x_2\}}\mathbb{P}(X=x)e^{j2\pi k \rho x}\nn
&= e^{j2\pi k \rho x_1}\left(\mathbb{P}(X=x_1) + \mathbb{P}(X=x_2)e^{j2\pi k \rho (x_2-x_1)}\right) + \sum_{x\in\mathbb{N}^*\setminus\{x_1,x_2\}}\mathbb{P}(X=x)e^{j2\pi k \rho x},
\end{align}
which implies that
\begin{align}
\left|\E\left(e^{j2\pi k \rho X}\right)\right| &\leq \left|\mathbb{P}(X=x_1) + \mathbb{P}(X=x_2)e^{j2\pi k \rho (x_2-x_1)}\right| + \sum_{x\in\mathbb{N}^*\setminus\{x_1,x_2\}}\mathbb{P}(X=x_2)\nn
 &= \left|\mathbb{P}(X=x_1) + \mathbb{P}(X=x_2)e^{j2\pi k \rho (x_2-x_1)}\right| + \mathbb{P}(X\notin\{x_1,x_2\})\nn
 &= \mathbb{P}(X\in\{x_1,x_2\})\cdot\left|\frac{\mathbb{P}(X=x_1)}{\mathbb{P}(X\in\{x_1,x_2\})} + \frac{\mathbb{P}(X=x_2)}{\mathbb{P}(X\in\{x_1,x_2\})}e^{j2\pi k \rho (x_2-x_1)}\right| + \mathbb{P}(X\notin\{x_1,x_2\}).
 \label{eq:eq_correlation_ineq_EphoX}
\end{align}

Now since $\rho$ is irrational and $x_2-x_1$ is a nonzero integer, we have $e^{j2\pi k \rho (x_2-x_1)}\neq 1$, which means that
\begin{equation}
\frac{\mathbb{P}(X=x_1)}{\mathbb{P}(X\in\{x_1,x_2\})} + \frac{\mathbb{P}(X=x_2)}{\mathbb{P}(X\in\{x_1,x_2\})}e^{j2\pi k \rho (x_2-x_1)}
\end{equation}
is a convex combination between 1 and $e^{j2\pi k \rho (x_2-x_1)}\neq 1$. This implies that
\begin{equation}
\left|\frac{\mathbb{P}(X=x_1)}{\mathbb{P}(X\in\{x_1,x_2\})} + \frac{\mathbb{P}(X=x_2)}{\mathbb{P}(X\in\{x_1,x_2\})}e^{j2\pi k \rho (x_2-x_1)}\right|<1.
\end{equation}
By combining this with \eqref{eq:eq_correlation_ineq_EphoX}, we get
\begin{equation}
\label{eq:eq_correlation_ineq_EphoX2}
\left|\E\left(e^{j2\pi k \rho X}\right)\right| < 1.
\end{equation}

Now \eqref{eq:eq_correlation_ineq_EphoX2}, \eqref{eq:eq_correlation_ineq_Z} and \Cref{corr:corr_slln_exp} imply that
\begin{equation}
\text{a.s.},\quad\lim_{M\to\infty}\frac{\tilde{S}_M}{M}=0.
\end{equation}
}

\subsection{Proof of \Cref{lemma:lemma_diff_alph_couplingB}}
\label{subsec:subsec_lemma_diff_alph_couplingB_proof}

We need the following lemma:

\begin{lemma}
\label{lemma:lemma_coupling_geometric}
Let $0<\delta<\gamma\leq 1$. We can define three random variables $X,Y$ and $Z$ taking values in the set of natural numbers $\mathbb{N}$ such that:
\begin{itemize}
\item $X$ is geometrically distributed with success probability $\gamma$, i.e., $\Prob(X=i)=\gamma(1-\gamma)^{i-1}, \forall i\geq 1$.
\item $Z$ is independent of $X$.
\item $Y=X+Z$.
\item $Y$ is geometrically distributed with success probability $\delta$, i.e., $\Prob(Y=i)=\delta(1-\delta)^{i-1}, \forall i\geq 1$.
\end{itemize}
\end{lemma}
\begin{proof}
Let $X$ and $Z$ be two independent random variables such that
\begin{equation}
\Prob(X=i)=\gamma(1-\gamma)^{i-1}, \forall i\geq 1,
\end{equation}

and
\begin{equation}
		\Prob(Z=i)=\begin{cases}
			\frac{\delta}{\gamma}&\text{if }i=0,\\
			\frac{\gamma-\delta}{\gamma}(1-\delta)^{i-1}\delta&\text{if }i=1,2,3,\ldots\\
			0 & \text{otherwise}.
		\end{cases}
\end{equation}

The distribution of $Y$ is given by:
	\begin{align}
		\Prob(Y=i) &= \Prob(X+Z=i)\nn
			     &= \Prob(Z=i-X)\nn
			     &= \sum_{i'=1}^i \Prob(X=i')\Prob(Z=i-i'|X=i')\nn
			     &= \sum_{i'=1}^i (1-\gamma)^{i'-1}\gamma\lp\frac{\delta}{\gamma}\mathbbm{1}_{\{i-i'=0\}}+\frac{\gamma-\delta}{\gamma}(1-\delta)^{i-i'-1}\delta\mathbbm{1}_{\{i-i'\neq 0\}}\rp\nn
			     &= \delta(\gamma-\delta)(1-\delta)^{i-2}\sum_{i'=1}^{i-1}\lp\frac{1-\gamma}{1-\delta}\rp^{i'-1}+ \delta(1-\gamma)^{i-1}\nn
			     &= \delta(\gamma-\delta)(1-\delta)^{i-1}\lp\frac{1-\lp\frac{1-\gamma}{1-\delta}\rp^{i-1}}{\gamma-\delta}\rp+ \delta(1-\gamma)^{i-1}\nn
			     &= \delta(1-\delta)^{i-1}.
	\end{align}
\end{proof}

Now we are ready to prove \Cref{lemma:lemma_diff_alph_couplingB}

\begin{proof}[Proof of \Cref{lemma:lemma_diff_alph_couplingB}]
First notice that the probabilities $p_s=\frac{\bar{\epsilon}q^s(q^{k-s}-1)}{q^k-1}$ are decreasing in $s$, where $0\leq s\leq k-1$.
	This means that 
	\[ p_0\geq p_1\geq \cdots\geq p_{k-1}.\]

It follows from \Cref{lemma:lemma_coupling_geometric} that for each $0\leq s< k$, we can define five random variables: $\tilde{A}_s, J_s, A_s, \hat{J}_s$ and $\hat{A}_s$, such that:
\begin{itemize}
\item $\tilde{A}_s$ is geometrically distributed with success probability $p_0$, i.e., $\tilde{A}_s$ is distributed as $L_0$ and so $\Prob(\tilde{A}_s=i)=p_0(1-p_0)^{i-1}=\Prob(L_0=i), \forall i\geq 1$.
\item $J_s$ is independent of $\tilde{A}_s$.
\item $A_s=\tilde{A}_s+J_s$.
\item $A_s$ is geometrically distributed with success probability $p_s$, i.e., $A_s$ is distributed as $L_s$ and so $\Prob(A_s=i)=p_s(1-p_s)^{i-1}=\Prob(L_s=i), \forall i\geq 1$.
\item $\hat{J}_s$ is independent of $(\tilde{A}_s,J_s,A_s)$.
\item $\hat{A}_s=A_s+\hat{J}_s$.
\item $\hat{A}_s$ is geometrically distributed with success probability $p_{k-1}$, i.e., $\hat{A}_s$ is distributed as $L_{k-1}$ and so $\Prob(\hat{A}_s=i)=p_{k-1}(1-p_{k-1})^{i-1}=\Prob(L_{k-1}=i), \forall i\geq 1$.
\end{itemize}
Assume that $(\tilde{A}_s, J_s, A_s, \hat{J}_s,\hat{A}_s)$ is independent of $(\tilde{A}_{s'}, J_{s'
}, A_{s'}, \hat{J}_{s'},\hat{A}_{s'})$ if $s\neq s'$. Now define
\begin{equation}
\tilde{O}=\sum_{s=0}^{k-1} \tilde{A}_s,
\end{equation}
\begin{equation}
O=\sum_{s=0}^{k-1} A_s=\tilde{O}+\sum_{s=0}^{k-1} J_s,
\end{equation}
and
\begin{equation}
\hat{O}=\sum_{s=0}^{k-1} \hat{A}_s=O+\sum_{s=0}^{k-1} \hat{J}_s.
\end{equation}
Clearly, the distribution of $\tilde{O}, O$ and $\hat{O}$ is the same as that of $\tilde{B}, B$ and $\hat{B}$, respectively. Furthermore, we have $\tilde{O}\leq O\leq \hat{O}$ with probability 1.
\end{proof}

% references section

% can use a bibliography generated by BibTeX as a .bbl file
% BibTeX documentation can be easily obtained at:
% http://mirror.ctan.org/biblio/bibtex/contrib/doc/
% The IEEEtran BibTeX style support page is at:
% http://www.michaelshell.org/tex/ieeetran/bibtex/

\bibliographystyle{IEEEtran}
% argument is your BibTeX string definitions and bibliography database(s)
%\bibliography{IEEEabrv,bibliography}
\bibliography{bibliography}

\end{document}